\newenvironment{proof}{\noindent {\bf Proof.}}{\bigskip}
  \newtheorem{lemma}{\textbf{Lemma}}[section]%
  \newtheorem{theorem}[lemma]{\textbf{Theorem}}%
  \newtheorem{proposition}{Proposition}%
  {%
    %\theorembodyfont{\rmfamily}%
    %\theoremheaderfont{\scshape}%
    %\newtheorem{remark}{\textsc{Remark.}}
    }%
  {%
    %\theorembodyfont{\rmfamily}%
    %\newtheorem{proof}{Proof.}%
    }%
  {%
    %\theorembodyfont{\rmfamily}%
    %
    }%
\begin{document}

\centerline{\Large\bf Algorithms for distance problems in planar}

\medskip
\centerline{\Large\bf  complexes of global nonpositive curvature}

\vspace{10mm}

\centerline{\large {\sc Daniela Maftuleac}}

\vspace{5mm}

\begin{center}
Laboratoire d'Informatique Fondamentale de Marseille,\\[0.1cm]
Universit\'e d'Aix-Marseille, \\[0.1cm]
F-13288 Marseille Cedex 9, France, \\[0.1cm]
{daniela.maftuleac@lif.univ-mrs.fr}
\end{center}

%\begin{multicols}{2}

\noindent

\vspace{7mm}
\begin{footnotesize} \noindent {\bf Abstract.} CAT(0) metric spaces and hyperbolic spaces play an important role in combinatorial and geometric group theory. In this paper, we present efficient algorithms for distance problems in CAT(0) planar complexes. First of all, we present an algorithm for answering single-point distance queries in a CAT(0) planar complex. Namely, we show that for a CAT(0) planar complex $\mathcal K$ with $n$ vertices, one can construct in $O(n^2\log n)$ time a data structure $\mathcal D$ of size $O(n^2)$ so that, given a point $x\in \mathcal K,$ the shortest path $\gamma(x,y)$ between $x$ and the query point $y$ can be computed in linear time. Our second algorithm computes the convex hull of a finite set of points in a CAT(0) planar complex. This algorithm is based on Toussaint's algorithm for computing the convex hull of a finite set of points in a simple polygon and it constructs the convex hull of a set of $k$ points in $O(n^2\log n + nk \log k)$ time, using a data structure of size $O(n^2 + k)$.
\end{footnotesize}

\bigskip
\noindent {\it Keywords.} Shortest path, convex hull, planar complex, geodesic, $l_2$-distance, global nonpositive curvature.

\section{Introduction}

\subsection{CAT(0) metric spaces}

Introduced by Gromov in 1987, CAT(0) metric spaces (or spaces of global nonpositive curvature) constitute a far-reaching common generalization of Euclidean and hyperbolic spaces and simple polygons.
CAT initials stand for Cartan-Alexandrov-Toponogov, who made substantial contributions to the theory of comparison geometry. CAT(0) spaces are precisely the complete Hadamard spaces \cite{Ba, BrHa}.

%CAT(0) spaces are generalizations of Hadamard manifolds, which are simply connected, complete Riemannian manifolds such that the sectional curvature is nonpositive. CAT(0) spaces are not manifolds in general. However, many of the results on Hadamard manifolds remain true for CAT(0) spaces. For example, the Cartan-Hadamard theorem, which says that Hadamard manifolds/CAT(0) spaces are contractible.

The impact of CAT(0) geometry on mathematics plays a significant role, especially in the field of geometric group theory \cite{BrHa}. The particular case of CAT(0) cube complexes has received lately much attention from the geometric group theory community \cite{BrHa, Gr, HaWi}.
It is clear that CAT(0) geometry has numerous applications. A notable example of this appears in the paper of Billera, Holmes and Vogtmann \cite{BiHoVo} which shows that the space of all phylogenetic trees with the same set of leaves can be seen as a CAT(0) cube complex. In their paper, CAT(0) geometry
is used to solve problems of qualitative classification in biological systems.\\
Other important examples of CAT(0) cube complexes come from reconfigurable systems \cite{GhPe}, a large family of systems which change according to some local rules, e.g. robotic motion planning, the motion of non-colliding particles in a graph, and phylogenetic tree mutation, etc.
In many reconfigurable systems, the parameter space of all possible positions of the system can be seen as a CAT(0) cube complex \cite{GhPe} and computing geodesics in these complexes is equivalent to finding the optimal way to get the system from one position to another one under the corresponding metric.

\subsection{Algorithmic problems on CAT(0) metric spaces}

Presently, most of the known results on CAT(0) metric spaces are mathematical. To the best of our knowledge, from the algorithmic point of view, these spaces remain relatively unexplored. Still there are some algorithmic results in some particular CAT(0) spaces.
In her doctoral thesis \cite{Ow}, Owen proposed exponential time algorithms for computing the shortest path in the space of phylogenetic trees.\\
Subsequently, the question of whether the distance and the shortest path between two trees in this CAT(0) space can be computed in polynomial (with respect to the number of leaves) time was raised. Recently, Owen and Provan \cite{OwPro} solved this question in the affirmative; the paper \cite{ChaHo} reports on the implementation of the algorithm of \cite{OwPro}.
Using the result of \cite{OwPro}, Ardila, Owen, and Sullivant \cite{ArOwSu} described a finite algorithm that computes the shortest path between two points in general CAT(0) cubical complexes. This algorithm is not a priori polynomial and finding such an algorithm that computes the shortest path in a CAT(0) complex of general dimension, remains an open question. In the paper \cite{ChMa}, we proposed a polynomial time algorithm for two-points shortest path queries in  2-dimensional CAT(0) cubical complex and some of its subclasses.
In this paper, we present efficient algorithms for single-point distance queries and convex hulls in CAT(0) planar complexes. A detailed description of these results is given in the author's doctoral thesis \cite{Ma}.

CAT(0) planar complexes have numerous structural and algorithmic properties from their underlying graphs \cite{Ch_CAT}.
In \cite{BaPe2}, Baues and Peyerimhoff give a combinatorial characterization of the non-positive curvature of the tilings of planar graphs.
Later, the same authors \cite{BaPe}, characterized the geodesics in a tiling of non-positive curvature. They use these characterizations to estimate the growth of distance balls, Gromov hyperbolicity and 4-colorability of certain classes of tilings of non-positive curvature.
Chepoi, Dragan and Vax\`es have studied algorithmic problems for routing as well as calculating of the center and the diameter in a CAT(0) complex and the underlying graphs \cite{ChDrVa_soda, ChDrVa_jalg}.

\subsection{Shortest path problem}

The shortest path problem is one of the best-known algorithmic
problems with many applications in routing, robotics, operations
research, motion planning, urban transportation, and terrain navigation.
This fundamental problem has been intensively studied both in discrete
settings like graphs and networks (see, e.g., Ahuja, Magnanti, and
Orlin \cite{AhMaOr}) as well as in geometric spaces (simple polygons,
polygonal domains with obstacles, polyhedral surfaces, terrains; see,
e.g., Mitchell \cite{Mi}).
Several algorithms for computing shortest paths inside a simple polygon are
known in the literature \cite{GuiHer, GuiHerLeShaTar, HerSu, LeePre,ReSto},
and all are based on a triangulation of  $P$ in a preprocessing step (which
can be done in linear time due to Chazelle's algorithm \cite{Cha}).
The algorithm of Lee and Preparata \cite{LeePre}  finds the shortest path
between two points of a triangulated simple polygon in linear time
({\it two-point shortest path queries}). Given a source point, the algorithm
of Reif and Storer \cite{ReSto} produces in $O(n\log n)$ time a search
structure (in the form of a shortest path tree) so that the shortest path
from any query point to the source can be found in time linear in the number
of edges of this path (the so-called {\it single-source shortest path queries}).
Guibas et al. \cite{GuiHerLeShaTar}  return a similar search structure,
however their preprocessing step takes only linear time once the polygon is
triangulated (see Hersberger and Snoeyink \cite{HerSno} for a significant
simplification of the original algorithm of \cite{GuiHerLeShaTar}). Finally,
Guibas and Hershberger \cite{GuiHer} showed how to preprocess a triangulated
simple polygon $P$ in linear time to support shortest-path queries between any
two points $p,q\in P$ in time proportional to the number of edges of the
shortest path between $p$ and $q.$ Note that the last three aforementioned algorithms
also return in $O(\log n)$ time the distance between the queried points.
In the case of shortest path queries in general polygonal domains $D$  with holes,
the simplest approach is to compute at the preprocessing step the visibility
graph of $D.$ Now, given two query points $p,q,$ to find a shortest path between
$p$ and $q$ in $D$ (this path is no longer unique), it suffices to compute this path
in the visibility graph of $D$ augmented with two vertices $p$ and $q$ and all
edges corresponding to vertices of $D$ visible from $p$ or $q;$ for a detailed
description of how to efficiently construct the visibility graph, see the survey
\cite{Mi} and the book \cite{BeChKrOv}.  An alternative paradigm is the so-called
{\it continuous Dijkstra} method, which was first applied  to the shortest path
problem in general polygonal domains by Mitchell \cite{Mi1} and subsequently
improved to a nearly optimal algorithm by Hershberger and Suri \cite{HerSu};
for an extensive overview of this method and related references, see again
the survey by Mitchell \cite{Mi}.\\

\subsection{Convex hull algorithms}

Another algorithmic problem presented in this paper is computing the convex hull of a finite set of points.
In the Euclidean plane, many algorithms solve optimally this problem \cite{Gra,Ja,Ka,PrHo,PrSh}.
The most common method for computing the convex hull of a finite point-set is the incremental method \cite{PrSh}.
In 3-dimensional Euclidean space, there exist efficient algorithms that construct the convex hull of a point-set \cite{PrSh}.
For simple polygonal domains equipped with the intrinsic $l_2$-metric, Toussaint describes an algorithm for constructing the convex hull of a finite set of points \cite{To}.

The recent paper by Fletcher et al. \cite{FlMoPhVen} investigates algorithmic questions related to computing approximate convex hulls and centerpoints of point-sets in the CAT(0) metric space $P(n)$ of all positive definite $n\times n$ matrices.\\
In their recent paper \cite{ArNi}, Arnaudon and Nielsen, describe a generalization of a 1-center algorithm \cite{BaCl} to arbitrary Riemannian geometries, especially the space of symmetric positive definite matrices.\\

\subsection{Structure of the paper}

In this paper, we present efficient algorithms for  single-point distance queries and convex hulls in CAT(0) planar complexes. A detailed description of these results is given in the author's doctoral thesis \cite{Ma}.
First, we give an efficient algorithm for answering one-point distance queries in CAT(0) planar complexes. Namely, we show that for a CAT(0) planar complex $\mathcal K$ with $n$ vertices, one can construct
in $O(n^2\log n)$ time a data structure $D$ of size $O(n^2)$ so that, given a point $x\in\mathcal K$, the shortest path $\gamma(x, y)$ between $x$ and the query point $y$ can be computed in linear time. Second we propose an algorithm for computing the convex hull of a finite set of points in a CAT(0) planar complex in $O(n^2\log n+nk\log k)$ time, using a data structure of size $O(n^2+k)$. This algorithm is based on Toussaint's algorithm for computing the convex hull of a finite set of points in a simple polygon.

The remaining part of the paper is organized as follows. In the preliminary section, we introduce CAT(0) metric spaces and CAT(0) planar complexes. We also formulate the single-point shortest path query problem and the convex hull problem.
In Section 3, we define the shortest path map SPM($x$) of a given source-point $x$ in a CAT(0) planar complex as a partition of the complex into convex sets (cones), such that the shortest paths connecting $x$ and points of the same set, are equivalent from the combinatorial point of view.
We present an efficient algorithm which for a give source-point $x$ computes the shortest path map SPM($x$) in a CAT(0) planar complex.
In Section 5, we present the detailed description of the algorithm for constructing the shortest path map in a CAT(0) planar complex and of the data structure used by this algorithm.
Given a CAT(0) planar complex and a source-point $x$, we use the shortest path map of origin $x$ so that for any query point $y$ it is possible to determine the cone containing $y$. We show how to compute the unfolding of a cone in ${\mathbb R}^2$ efficiently, and we construct the shortest path between $x$ and $y$ as the pre-image of the Euclidean geodesic between the images of $x$ and $y$ in ${\mathbb R}^2$.
In the Section 6, we present an efficient algorithm for computing the convex hull of a finite point-set in CAT(0) planar complexes, based on Toussaint's algorithm for convex hulls in a simple polygon. This algorithm computes the convex hulls of all subsets belonging to different regions of SPM($x$). Further, it constructs a weakly-simple polygon $P$ which is the convex hulls of subsets connected by a geodesic segments. Finally, the boundary of the convex hull of $P$ is exactly the shortest path between a point of the convex hulls and its copy in $\mathcal K\setminus P.$

\section{Preliminaries}

\subsection{CAT(0) metric spaces}\label{prel1}
Let $(X,d)$ be a metric space. A \emph{geodesic}
joining two points $x$ and $y$ from $X$ is the image of a
(continuous) map $\gamma$ from a line segment $[0,l]\subset \mathbb{R}$
to $X$ such that $\gamma(0)=x, \gamma(l)=y$ and
$d(\gamma(t),\gamma(t'))=|t-t'|$ for all $t,t'\in [0,l].$ The space
$(X,d)$ is said to be \emph{geodesic} if every pair of points $x,y\in
X$  is joined by a geodesic \cite{BrHa}.  A \emph{geodesic triangle} $\Delta (x_1,x_2,x_3)$ in a geodesic
metric space $(X,d)$ consists of three distinct points in $X$ (the
vertices of $\Delta$) and a geodesic  between each pair of vertices
(the sides of $\Delta$). A \emph{comparison triangle} for $\Delta
(x_1,x_2,x_3)$ is a triangle $\Delta (x'_1,x'_2,x'_3)$ in the
Euclidean plane ${\mathbb E}^2$ such that $d_{{\mathbb
E}^2}(x'_i,x'_j)=d(x_i,x_j)$ for $i,j\in \{ 1,2,3\}.$ A geodesic
metric space $(X,d)$ is defined to be a \emph{$CAT(0)$ space} \cite{Gr}
if all geodesic triangles $\Delta (x_1,x_2,x_3)$ of $X$ satisfy the
comparison axiom of Cartan--Alexandrov--Toponogov:

\medskip\noindent
\emph{If $y$ is a point on the side of $\Delta(x_1,x_2,x_3)$ with
vertices $x_1$ and $x_2$ and $y'$ is the unique point on the line
segment $[x'_1,x'_2]$ of the comparison triangle
$\Delta(x'_1,x'_2,x'_3)$ such that $d_{{\mathbb E}^2}(x'_i,y')=
d(x_i,y)$ for $i=1,2,$ then $d(x_3,y)\le d_{{\mathbb
E}^2}(x'_3,y').$}

\medskip\noindent
This simple axiom turns out to be very powerful, because CAT(0) spaces can be characterized  in several natural ways (for a full account of this theory consult the book \cite{BrHa}). In particular, a geodesic metric space $(X,d)$ is CAT(0) if and only if any two points of this space can be joined by a unique geodesic. CAT(0) is also equivalent to convexity of the function $f:[0,1]\rightarrow X$ given by $f(t)=d(\alpha (t),\beta (t)),$ for any geodesics $\alpha$ and $\beta$ (which is further equivalent to convexity of the neighborhoods of convex sets). This implies that CAT(0) spaces are contractible.\\

The notion of an angle between two geodesics from a common point in a CAT(0) space is given using the Alexandrov definition of angle in an arbitrary metric space \cite{AlZa}.
The angle between the sides of a geodesic triangle $\Delta$ with distinct vertices in a CAT(0) metric space $X$ is no greater than the angle between the corresponding sides of the comparison triangle $\Delta'$ in $\mathbb{E}^2$ \cite{AlZa, BrHa}.

A subset $Y$ of a CAT(0) space $(X, d)$ is called \emph{convex} if the geodesic segment joining any two points of
$Y$ is entirely contained in $Y$.
For any finite set of points $S\subset \mathcal K$ the \emph{convex hull} conv($S$) for $S$ is the convex set, minimal by inclusion, such that $S \subset$conv$(S).$

\subsection{CAT(0) planar complexes}

A piecewice-Euclidean cell complex $\mathcal X$ consists of a collections of convex Euclidean polyhedra glued together by isometries along their faces.
$\mathcal X$ endowed with the length metric $d$ induced by the Euclidean metric on each cell is a geodesic space \cite{BrHa}.\\
%Planar complexes are a particular case of 2-dimensional piecewise Euclidean cell complexes.
%The \emph{$k$-skeleton} of the complex $\mathcal X$ is the union of cells of $\mathcal X$ of dimension at most $k$.\\

A \emph{planar complex} is a 2-dimensional piecewise Euclidean cell complex whose 1-skeleton has a planar drawing in such a way that the 2-cells of the complex are exactly the inner faces of the 1-skeleton in this drawing. The \emph{1-skeleton} (or the graph) of a complex $X$ has the 0-faces  as vertices and the 1-faces of $X$ as edges. \\
\noindent A planar complex $\mathcal K$ can be endowed with the intrinsic $l_2$-metric in the following way. Suppose that inside every 2-cell of $\mathcal K$ the distance is measured according to an $l_2-$metric. A {\it path} between $x,y\in {\mathcal K}$ is a sequence of points $p=(x_0=x,x_1,\ldots,x_{k-1},x_k=y)$ such that any two consecutive points $x_i,x_{i+1}$ belong to a common cell. The length of a path is the sum of distances between all pairs of consecutive points of this path.  Then the intrinsic $l_2-$metric between two points of $\mathcal K$ equals to the infimum of the lengths of the finite paths joining them. A planar complex $\mathcal K$ is a {\it planar CAT(0) complex} if with respect to its intrinsic metric, $\mathcal K$ is a CAT(0) space. Equivalently, a planar complex is CAT(0) planar (see Fig. \ref{fig2}) if for all inner 0-cells the sum of angles in one of these points is at least equal to $ 2\pi$. A vertex of $\mathcal K$ for which this sum is greater then $2\pi$ is called \emph{vertex of negative curvature}.
By triangulating each 2-cell of a CAT(0) planar complex, we can assume without loss of generality and without changing the size of the input data, that all 2-cells (faces) of $\mathcal K$ are isometric to arbitrary triangles of the Euclidean plane.\\
The 0-dimensional cells of the complex are called \emph{vertices}, forming
the vertex set $V(\mathcal K)$ of $\mathcal K$ and the 1-dimensional cells of $\mathcal K$ are called \emph{edges} of the complex, and denoted
by $E(\mathcal K)$.
All complexes occurring in our paper are finite, i.e., they have only finitely many cells.\\
A point (respectively a vertex) $x$ of $\mathcal K$ is called \emph{inner} point (vertex) of $\mathcal K$ if $x$ does not belong to the boundary of the complex which is denoted by $\partial\mathcal K$. The same way we define inner points of an edge $(uv)$ of $\mathcal K$ as the points distinct from the endpoints of the edge $u$ or $v$.\\
For any vertex $v$ of $\mathcal K$, we will denote by $\theta(v)$ the sum of angles with origin in $v$ from the faces incident to $v$ in $\mathcal K$.
As stated above, for each inner vertex $v$ of $\mathcal K$, $\theta(v)\geq 2\pi$.\\

\begin{figure}[h]\centering
 \includegraphics[width=0.55\textwidth]{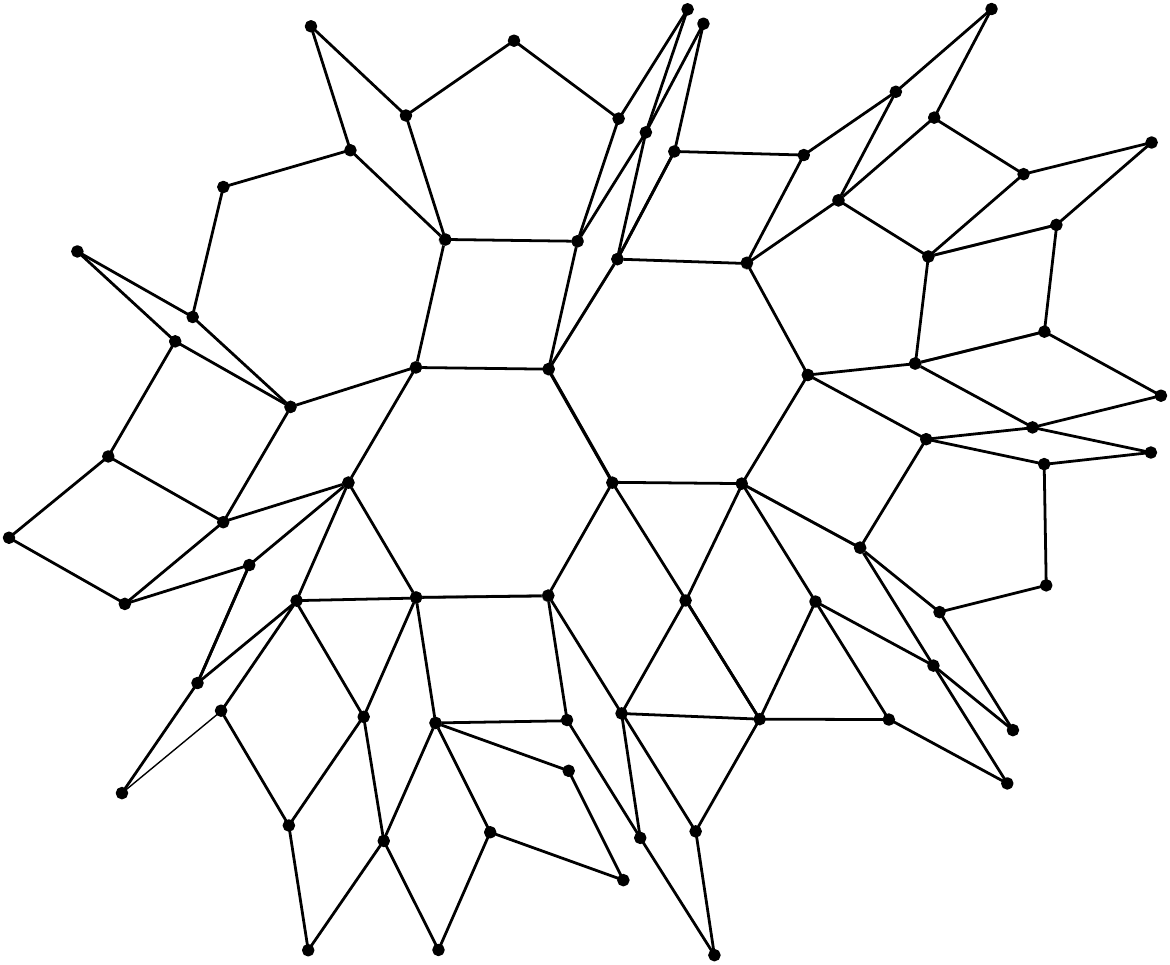}
\caption{A CAT(0) planar complex.} \label{fig2}
 \end{figure}

Let $u$ and $v$ be two points of $\mathcal K$, then the set of points $r(u,v)$ called \emph{geodesic ray of origin $u$ and of direction $v$}, is a geodesic between its origin $u$ and a point on the boundary of $\mathcal K$, so that $r(u,v)$ contains $v$.
We call \emph{line} in $\mathcal K$ any geodesic $l = \gamma(p,q)$ between two points $p$ and $q$ of the boundary of $\mathcal K.$\\
Subsequently, we consider two geodesics $\gamma(u, v)$ and $\gamma(u, w)$ in $\mathcal K$, sharing a common origin $u$ and so that the points $v, w$ belong to the boundary of the complex. Let $\alpha$ be one of two complementary angles formed by $\gamma(u, v)$ and $\gamma(u, w)$ at $u$. %, such that $\alpha\leq\pi$.
A \emph{cone} in a CAT(0) planar complex $\mathcal K$ is the set of all points of the complex located in the region bounded by the two geodesics $\gamma(u, v) and \gamma(u, w)$ containing the angle $\alpha$, where $0<\alpha<\pi$.
In other words, due to the planarity of $\mathcal K$, two geodesics forming an angle $\alpha<\pi$ determine exactly one cone containing $\alpha$ in $\mathcal K$. Note that if $u$ is an inner point of $\mathcal K$ then there exist at least three cones with the common origin $u$. We denote by $\mathcal C(u; v, w)$ the cone of origin (or apex) $u$, where $\gamma(u, v)$ and $\gamma(u, w)$ are the \emph{sides} of the cone. By the \emph{interior} of the cone $\mathcal C (u; v, w)$ we mean the set of points int$(\mathcal C (u; v, w)): = \mathcal C (u; v , w) \setminus (\gamma(u, v) \cup \gamma(u, w)).$\\

\section{Shortest Path Map}

The notion of shortest path map was introduced by Hershberger and Suri \cite{HerSu} as a preprocessing step for the \emph{continuous Dijkstra algorithm}. This algorithm computes the shortest path between a given point and any other point in a polygon with holes or in the plane in the presence of obstacles.
This method is a conceptual algorithm to compute shortest paths from a given source $s$ to all other points, by simulating the propagation of a sweeping line from a point to all points using scanning level-lines.
The output of the continuous Dijkstra method is called the \emph{shortest path map} SPM($s$) of a given point $s.$ The shortest path map SPM($s$) is a subdivision of the polygon (free-space) into cells, such that each cell is a set of points whose shortest paths to $s$ are equivalent from a combinatorial point of view.

\begin{figure}[h]\centering
 \includegraphics[width=0.55\textwidth]{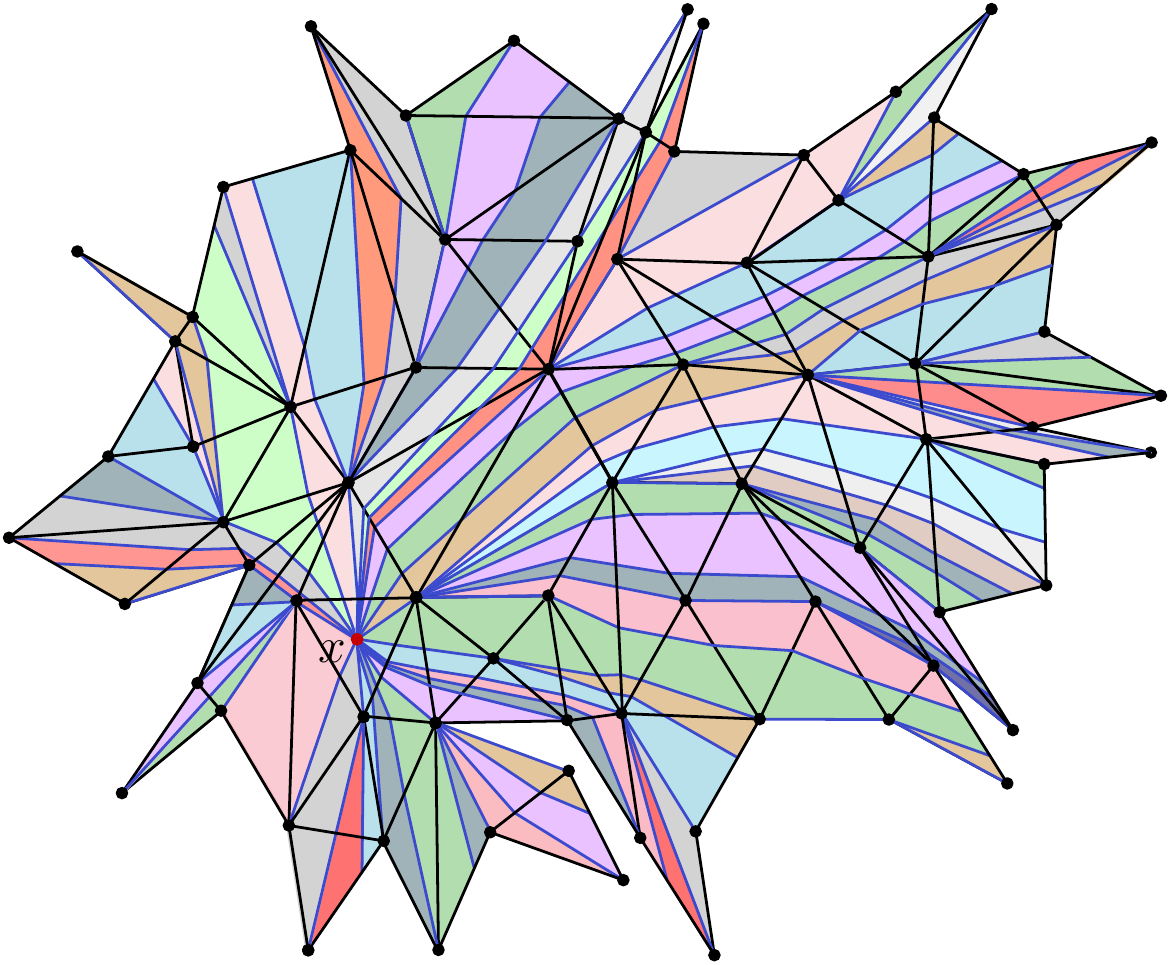}
\caption{A shortest path map in a CAT(0) planar complex (the coloring indicates the partition into convex cones).} \label{fig22}
 \end{figure}
The shortest path map SPM($x$) in a CAT(0) planar complex $\mathcal K$ is a partition of the complex in convex cones. This partition has a shortest path tree structure with the common root $x \in \mathcal K.$ The construction of SPM($x$) is used in our algorithms for computing the shortest path and the convex hull of a finite set of points in $\mathcal K$. Thus we will give a more detailed definition of SPM($x$) in $\mathcal K.$

%====================================================
Let $x$ be a point in $\mathcal K$.
A \emph{geodesic tree}  is the union of a finite set of geodesics $\gamma(x,y)$ with the common source $x$ and  each having the second endpoint $y$ at the
boundary of $\mathcal K$ such that the intersection of any two geodesics is a geodesic between $x$ and a vertex of $\mathcal K$.\\

The \emph{shortest path map} of origin $x$, denoted SPM($x$), is defined as the set of all geodesics connecting $x$ and boundary points of $\mathcal K$ passing through at least one vertex of the complex.
Moreover, for every vertex $z$ of negative curvature, there exists at least two geodesics $\gamma(x,p)$ and $\gamma(x,q)$ in SPM($x$) passing through $z$ and containing $\gamma(x,z)$. Each of these two geodesics form with $\gamma(x,z)$ an angle equal to $\pi$.
More formally, we can define SPM($x$) as follows.

%====================================================

Given a point $x \in \mathcal K$, the \emph{shortest path map SPM($x$)} is a partition of $\mathcal K$ into cones $\mathcal C(z; p_i, q_i)$, $i\in \mathbb{N}$ (see Fig. \ref{fig22}), such that \\
\hspace*{0.2cm} (1) for every vertex $y$ of $\mathcal K$ there exists a geodesic $\gamma(x, p),$ with $p$ on the boundary of $\mathcal K$ which passes via $y$ and: \\
\hspace*{0.7cm} (a) if $ \theta(y) = 2\pi$, then $y$ belongs to the common side $\gamma(z, p)$ of two cones $\mathcal C(z; p, q)$, $\mathcal C(z';p,r)$ of SPM($x$);\\ %(see Fig. \ref{des_SPM}) ; \\
\hspace*{0.7cm} (b) if $y$ is a vertex of negative curvature, then $y$ is the apex of at least one cone $\mathcal C(y;p,q)$ of SPM($x$); \\
\hspace*{0.2cm} (2) let $\mathcal C(z;p,q)$ be a cone of SPM($x$), then its apex $z$ belongs to the geodesics $\gamma(x,p)$ and $\gamma(x,q).$\\

\subsection{Structure and properties of SPM($x$)}

\noindent We continue with a list of simple but essential properties of the shortest path map SPM($x$) in a CAT(0) planar complex $\mathcal K$.

\begin{proposition} \label{pr-spm}
Let $\mathcal K$ be a CAT(0) planar complex, $x$ a point of $\mathcal K$ and SPM($x$) the shortest path map of $\mathcal K$, then the following conditions are satisfied: \\
\noindent \hspace*{0.3cm} (i) The shortest path map SPM($x$) has a geodesic tree structure; \\
\noindent \hspace*{0.3cm} (ii) The angle $\angle_z(p,q)$ formed by the sides $\gamma(z, p)$ and $\gamma(z, q)$ of a cone $\mathcal C (z; p, q)$ of SPM($x$) is less %or equal to
than $\pi$;\\ %(see Fig. \ref{des_unghi-con}); \\
\noindent \hspace*{0.3cm} (iii) Each cone $\mathcal C(z; p, q)$ of SPM($x$) is a convex subset of $\mathcal K$; \\
\noindent \hspace*{0.3cm} (iv) If $\mathcal C(z; p, q)$ is a cone of SPM($x$), then int$(\mathcal C(z; p, q))$ contains no vertices of $\mathcal K$; \\
\hspace*{0.3cm} (v) An inner point of $\mathcal K$ belongs either to a single cone, or to a common side of two cones of SPM($x$); \\
\hspace*{0.3cm} (vi) For any point $u$ belonging to one side $\gamma(z, p)$ of a cone $\mathcal C(z; p, q)$ of SPM($x$), the angle $\angle_u (z, p)$ is at least $\pi$ inside $\mathcal C(z; p, q)$. \\
\end{proposition}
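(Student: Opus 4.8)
The plan is to establish the six properties in an order that lets the later ones reuse the earlier ones, the backbone being three standard facts about the intrinsic metric of $\mathcal{K}$: that in a CAT(0) space any two points are joined by a \emph{unique} geodesic; that a path in a 2-dimensional piecewise-Euclidean complex is a local (hence, being CAT(0), a global) geodesic exactly when at each of its interior points both complementary angles are at least $\pi$; and that a complete, connected, locally convex subset of a CAT(0) space is convex.

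First I would prove (iv) directly from the definition of SPM($x$): every vertex of $\mathcal{K}$ lies on some map-geodesic $\gamma(x,p)$, and each such geodesic contributes to the sides of the cones, while a vertex of negative curvature is moreover taken as an apex by (1b); hence no vertex can lie in the relative interior of a cone. Next, (vi) is immediate from the local-geodesic characterization: a side $\gamma(z,p)$ is a sub-arc of the shortest path $\gamma(x,p)$, so at every interior point $u$ of the side both complementary angles at $u$ are at least $\pi$, and the one measured into $\mathcal{C}(z;p,q)$ gives $\angle_u(z,p)\geq\pi$. I would then treat (ii): around an apex $z$ the sides of the cones are consecutive map-geodesics issuing from $z$, and the construction subdivides the angular fan at $z$ — in particular the excess $\theta(z)-2\pi$ at a vertex of negative curvature, which is split by prepending $\gamma(x,z)$ to admissible continuation rays — finely enough that each resulting sector has opening $<\pi$; for a point with $\theta(z)=2\pi$ the two extreme continuations merge into a single straight geodesic, consistently with (1a).

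With (ii), (iv) and (vi) in hand, (iii) follows by a local-to-global convexity argument, which I expect to be the main obstacle. The idea is to verify that $\mathcal{C}(z;p,q)$ is locally convex at each of its points and then invoke the CAT(0) fact that a connected locally convex set is convex. In the relative interior there are no vertices by (iv), so the cone is flat and local convexity is automatic; along the relative interior of a side local convexity is exactly the inequality $\angle_u(z,p)\geq\pi$ of (vi) (a region with geodesic boundary is locally convex precisely where the interior angle is at least $\pi$); at the apex it is the strict bound $\angle_z(p,q)<\pi$ of (ii). The delicate point is the side vertices of negative curvature, where the interior angle strictly exceeds $\pi$: one must check that the whole excess angle is pushed to the exterior, so that developing the cone onto $\mathbb{R}^2$ (possible because its interior is flat) carries the two sides to straight rays bounding a Euclidean wedge of opening $<\pi$ and carries the $\mathcal{K}$-geodesics of the cone to straight segments; convexity of the wedge then transfers back.

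Finally, (i) follows from uniqueness of geodesics: two map-geodesics $\gamma(x,a)$ and $\gamma(x,b)$ coincide up to a last common point $c$ and then diverge without meeting again (else two geodesics would join a pair of points), and divergence can occur only at $x$ or at a vertex, since inside a flat cell an incoming direction admits a unique straight continuation; thus their intersection is $\gamma(x,c)$ with $c$ a vertex, which is precisely the geodesic-tree condition. Property (v) then records that the cones form a genuine planar subdivision: a non-apex inner point either avoids all sides, and so lies in a single cone, or lies in the relative interior of one side, which by planarity and (iv) borders exactly two cones; the apexes are the branch points where, as in (1a) and (1b), several cones meet.
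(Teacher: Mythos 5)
Your treatments of (i), (iv), (v) and (vi) coincide in substance with the paper's (uniqueness of CAT(0) geodesics for the tree structure, the definition of SPM($x$) for (iv)--(v), and the angle characterization of local geodesics for (vi)); for (iii) you propose a genuinely different route --- local convexity plus a local-to-global principle, or development onto a plane wedge --- whereas the paper argues globally by contradiction: a geodesic that leaves and re-enters a cone either crosses both sides, producing a geodesic triangle whose angle at the apex is the \emph{exterior} angle $\theta(z)-\angle_z(p,q)\geq\pi$ and thereby violating the CAT(0) bound on angle sums, or crosses one side twice, violating uniqueness of geodesics. Your route is viable in principle, but it is exactly where your two genuine gaps sit.

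First, (ii). Saying the construction "subdivides the angular fan finely enough that each resulting sector has opening $<\pi$" is an assertion about the algorithm, not a proof about SPM($x$) as defined, and it is circular if the proposition is what is supposed to validate the construction. The paper's proof does concrete work that your outline omits: it first shows the apex of a cone must be a \emph{vertex} of $\mathcal{K}$ (two map-geodesics cannot branch at a point of zero curvature, since at such a point one of the resulting angles would be $<\pi$, contradicting that both paths are geodesics), and then bounds the cone's angle at its apex by the angle of a single Euclidean triangular face at that vertex --- possible because the face's other vertices cannot lie in the cone's interior --- which is $<\pi$ by Alexandrov's angle comparison.

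Second, (iii). Your criterion "a region with geodesic boundary is locally convex precisely where the interior angle is at least $\pi$" is backwards: interior angle $\geq\pi$ is the geodesic property of the \emph{side} (that is literally (vi)), not local convexity of the \emph{region}; in the flat plane a reflex corner has interior angle $>\pi$ and is not locally convex. What local convexity needs at a side point $u$ is that the \emph{exterior} angle there be $\geq\pi$. At flat points this, together with (vi) and $\theta(u)=2\pi$, forces the interior angle to equal exactly $\pi$; at negatively curved side vertices it is exactly the "whole excess angle is pushed to the exterior" claim that you flag as needing a check but never carry out. That check is the crux of your route --- it is what makes the cone develop onto a Euclidean wedge and makes it locally convex --- and it does not come for free: it must be extracted from the clause in the definition of SPM($x$) that at every vertex of negative curvature there are map-geodesics making angle exactly $\pi$ with $\gamma(x,z)$ on each side, so that an interior angle $>\pi$ at such a vertex would force a cone side to run through the cone's interior, contradicting the partition. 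As written, your proof of (iii) rests on an incorrect general principle plus an unproven verification; the paper's two-case global argument avoids the issue entirely.
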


\begin{proof}
For the clarity of the paper, we put the proof in the Appendix \ref{app}.
\end{proof}

%==================================================

\begin{lemma}\label{sp}
For any point $y$ of $\mathcal K,$ the shortest path between $x$ and $y$ passes via the apex of the cone of SPM($x$) containing $y$.
\end{lemma}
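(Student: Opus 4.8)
The plan is to realize the shortest path $\gamma(x,y)$ as the concatenation of the \emph{stem} $\gamma(x,z)$ (from the source to the apex) with the geodesic $\gamma(z,y)$ joining the apex to $y$ inside its own cone, and then to recognize this concatenation as a local geodesic. Before the main argument I would clear away the degenerate cases: if $z=x$ or $y=z$ the conclusion is immediate, and if $y$ lies on a side $\gamma(z,p)$ of its cone then $\gamma(x,y)$ is simply the initial portion of $\gamma(x,p)$, which contains $z$ by clause (2) of the definition of SPM($x$). So I may assume $z\neq x$ (hence $z$ is a vertex of negative curvature, $\theta(z)>2\pi$) and $y\in\,$int$(\mathcal C(z;p,q))$. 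Since the cone is convex (Proposition \ref{pr-spm}(iii)), $\gamma(z,y)$ lies entirely inside it and is a genuine geodesic of $\mathcal K$, while $\gamma(x,z)$ is a subpath of the geodesic $\gamma(x,p)$; hence the concatenation is already locally geodesic at every point except possibly the apex $z$.

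All the work is therefore at $z$, where I would pass to the link of $z$, a circle of circumference $\theta(z)$, on which the geodesics towards $x,p,q,y$ determine four directions $X,P,Q,Y$. For two directions, write $\angle_z(\cdot,\cdot)$ for the length of the shorter arc between them along this circle; the criterion I rely on is the standard one for piecewise-Euclidean complexes: a concatenation of two geodesics at $z$ is locally geodesic precisely when this shorter arc between the incoming and outgoing directions is at least $\pi$ (which already requires $\theta(z)\ge 2\pi$). By clause (2) of the definition of SPM($x$), $z$ is an interior point of both $\gamma(x,p)$ and $\gamma(x,q)$, so this criterion applied to them gives $\angle_z(x,p)\ge\pi$ and $\angle_z(x,q)\ge\pi$, i.e.\ the shorter arc between $X$ and $P$, and between $X$ and $Q$, is each at least $\pi$. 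By Proposition \ref{pr-spm}(ii) the arc from $P$ to $Q$ on the cone side has length $\angle_z(p,q)<\pi$; together with $\theta(z)\ge 2\pi$ this forces $X$ to lie on the complementary long arc, for otherwise $X$ would fall on the cone arc and hence within $\angle_z(p,q)<\pi$ of $P$ or of $Q$, contradicting the bounds just obtained. Since $y$ is interior to the cone, $Y$ lies strictly between $P$ and $Q$ on the cone arc, so $X$ is separated from $Y$ by both $P$ and $Q$; a direct bookkeeping of arc lengths on the circle of length $\theta(z)$ then yields $\angle_z(x,y)\ge\pi$.

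This shows the concatenation is locally geodesic at $z$ as well, and therefore locally geodesic throughout. Invoking the CAT(0) fact that every local geodesic is the unique geodesic joining its endpoints \cite{BrHa}, I conclude that $\gamma(x,z)\cup\gamma(z,y)=\gamma(x,y)$, so the shortest path from $x$ to $y$ passes through the apex $z$, as claimed.

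I expect the main obstacle to be making the link computation airtight rather than any conceptual leap: one must confirm that $z$ is a genuine \emph{interior} point of $\gamma(x,p)$ and $\gamma(x,q)$ (so that the $\ge\pi$ bounds are legitimately available), and then verify $\angle_z(x,y)\ge\pi$ carefully on the circle of length $\theta(z)$. This last inequality is exactly where the negative curvature at $z$ (the excess $\theta(z)>2\pi$) is essential: it is what gives the path enough room to bend through $z$ while still turning by at least $\pi$ on each side.
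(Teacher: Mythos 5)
Your argument is correct in its core, but it takes a genuinely different route from the paper's. The paper proves the lemma by contradiction in a few lines, with no angle computation at all: if $z\notin\gamma(x,y)$, then (since $y$ lies in $\mathcal C(z;p,q)$ while $x$ does not lie in its interior) the geodesic $\gamma(x,y)$ meets a side of the cone at some point $s\in\gamma(z,p)$ with $s\neq z$; by clause (2) of the definition of SPM($x$), $\gamma(x,p)=\gamma(x,z)\cup\gamma(z,p)$, so by uniqueness of CAT(0) geodesics $\gamma(x,s)=\gamma(x,z)\cup\gamma(z,s)$; but the subarc of $\gamma(x,y)$ from $x$ to $s$ is the unique geodesic $\gamma(x,s)$, hence $z\in\gamma(x,y)$, a contradiction. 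You instead build the candidate path $\gamma(x,z)\cup\gamma(z,y)$ and certify it as a local geodesic via the link criterion, then invoke the local-to-global property of CAT(0) spaces. Your route needs heavier machinery (spaces of directions, the link criterion for local geodesics, local implies global), but it is constructive and quantitative --- it exhibits the turn of at least $\pi$ at the apex, which is also the principle underlying the paper's sweep algorithm --- and your arc bookkeeping is sound: writing $u,v$ for the two arcs from $X$ to $P$ and to $Q$ on the complement of the cone arc, the hypotheses $\angle_z(x,p)\ge\pi$ and $\angle_z(x,q)\ge\pi$ force $u\ge\pi$ and $v\ge\pi$, whence $\angle_z(x,y)=\min(u+\angle_z(p,y),\,v+\angle_z(q,y))\ge\pi$. (Note you do not even need $y$ strictly interior to the cone; the bound survives when $Y$ coincides with $P$ or $Q$, so your boundary subcase could be merged into the main argument.)

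One inaccuracy to repair: your parenthetical claim that $z\neq x$ forces $z$ to be a vertex of negative curvature, with link a circle of circumference $\theta(z)>2\pi$, does not follow from the paper's definition of SPM($x$). In a CAT(0) planar complex the apex of a cone may also be a \emph{boundary} vertex of $\mathcal K$ --- think of a reflex corner of a triangulated simple polygon, where $\pi<\theta(z)\le 2\pi$ and the link of $z$ is an arc, not a circle. Your estimate still goes through there, in fact more easily: on an arc, $X$ must lie on one side of the subarc from $P$ to $Q$, say the $P$ side, and then $\angle_z(x,y)=\angle_z(x,p)+\angle_z(p,y)\ge\pi$ directly. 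But as written your case analysis omits this configuration, whereas the paper's contradiction argument never inspects the link and is insensitive to the distinction.
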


\begin{proof}
By the Proposition \ref{pr-spm} (v), any point $y$ of $\mathcal K$ belongs to at least one cone of SPM($x$).
Let $\mathcal C(z; p, q)$ be a cone of SPM($x$) containing $y.$ Note that $z$ can coincide with $x$. \\
Suppose the contrary of the lemma's affirmation, i.e., $z\not \in \gamma(x, y).$
Let $s$ be a point of $\mathcal K$ such that $s \in \gamma(x, y) \cap \gamma(z, p)$ (the case where $s\in \gamma(x, y)\cap\gamma(z, q)$ is similar) and $s\neq z$.
By the definition of SPM($x$) (2), $\gamma(x, p) = \gamma(x, z)\cup\gamma(z, p)$ and for any point $r\in\gamma(z, p)$ the shortest path between $x$ and $r$ passes via $z$: $\gamma(x, r) = \gamma(x, z) \cup \gamma(z, r).$ Since $s \in \gamma(z, p),$ then $\gamma(x, s) = \gamma(x, z) \cup \gamma(z, s),$ which contradicts our assumption $z\not \in \gamma(x, y)$. \hfill $\Box$
\end{proof}
%==================================================

The next construction presented as a lemma, shows how to compute geodesic rays from a given geodesic segment inside $\mathcal K.$

\begin{lemma}\label{constr-demidroite}(Geodesic ray shooting)
Given a geodesic segment $\gamma (u, v) \not \subset \partial \mathcal K,$ it is always possible to construct two families of geodesic rays ($ R(u,v) $ of origin $u$ and of direction $v$ and $R(v, u)$ of origin $ v $ and of direction $ u $) containing the geodesic $\gamma (u, v)$.
\end{lemma}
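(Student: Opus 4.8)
The plan is to construct the rays by prolonging the segment $\gamma(u,v)$ past each of its endpoints, one cell at a time, until the prolongation reaches $\partial\mathcal{K}$, all the while maintaining at every traversed point the \emph{local geodesic condition} that the two angles on either side of the path are each at least $\pi$. The key tool is the standard CAT(0) fact that every locally geodesic path is a global geodesic; so once I exhibit a locally geodesic path from an origin to a boundary point that contains $\gamma(u,v)$, it is automatically a geodesic ray in the sense defined above. I treat the forward family $R(u,v)$, obtained by prolonging beyond $v$; the backward family $R(v,u)$, obtained by prolonging beyond $u$, is symmetric. The hypothesis $\gamma(u,v)\not\subset\partial\mathcal{K}$ is used precisely here: it guarantees that the prolongation enters the interior of $\mathcal{K}$, rather than being trapped on $\partial\mathcal{K}$ where the direction into the complex would be ambiguous.

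First I would describe the elementary prolongation step, distinguishing cases according to where the current endpoint of the partial extension lies. If it is interior to a $2$-cell, then since each cell is an isometric Euclidean triangle the path continues as the unique straight segment inside the triangle until it meets the triangle boundary. If it is an interior point of an edge $e$, then either $e$ is shared by two triangles, in which case unfolding the two incident triangles into $\mathbb{E}^2$ lets the prolongation cross $e$ as a straight segment and the continuation is unique (the total angle at such a point being $2\pi$), or $e\subset\partial\mathcal{K}$ and the ray terminates there. If the endpoint is a flat vertex $y$ with $\theta(y)=2\pi$, the requirement that both sides carry angle at least $\pi$ forces both to equal exactly $\pi$, so the continuation is again unique.

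The delicate case is a vertex $y$ of negative curvature, $\theta(y)>2\pi$. Here the outgoing direction may be chosen anywhere in the arc of directions that subtend an angle at least $\pi$ on each side of the incoming direction; working in the link of $y$ (a circle of length $\theta(y)$ for an interior vertex), this admissible arc has length $\theta(y)-2\pi>0$, so there is a whole continuum of valid prolongations. This non-uniqueness is exactly the source of the \emph{family} in the statement: every negative-curvature vertex met by the extension multiplies the set of possible rays. If $y$ lies on $\partial\mathcal{K}$, the ray instead terminates at $y$.

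Finally I would argue termination and global geodesicity. The extension cannot continue forever inside the interior of the compact complex $\mathcal{K}$, for that would yield an infinite geodesic ray in a bounded space; hence it reaches $\partial\mathcal{K}$, and since $\mathcal{K}$ has finitely many cells this occurs after finitely many crossings. The resulting path runs from the origin to a boundary point, is locally geodesic at each of its points by construction, hence is a genuine geodesic by the local-to-global property, and contains $\gamma(u,v)$. Collecting all admissible choices at the negative-curvature vertices yields the families $R(u,v)$ and $R(v,u)$. The step I expect to be the main obstacle is the careful treatment of the negative-curvature vertices: one must check that each admissible outgoing direction keeps both lateral angles at least $\pi$ not only locally but consistently along the whole prolongation, so that local---and therefore global---geodesicity is never lost, and one must confirm that the process indeed terminates at the boundary rather than accumulating at an interior vertex.
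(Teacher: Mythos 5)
Your construction follows the same route as the paper's (prolong the segment cell by cell, keep both lateral angles at least $\pi$ at every crossing point, and conclude by the local-to-global property), but the termination step has a genuine gap --- and it is exactly the point you flag at the end as a ``main obstacle'' without resolving it. Your argument is twofold: an unending prolongation would give an infinite geodesic ray in a bounded space, hence the process stops; and since $\mathcal{K}$ has finitely many cells, it stops after finitely many crossings. The first half conflates \emph{infinitely many prolongation steps} with \emph{infinite length}: a priori the sub-segments cut off inside successive faces could shrink geometrically, so that infinitely many steps produce a path of finite total length accumulating at an interior point (for instance near a vertex); compactness alone does not exclude this. The second half presupposes that each cell is entered at most once --- but that is precisely what has to be proven; finitely many cells do not bound the number of crossings if re-entry were possible.

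The paper closes this gap with a short convexity argument that your proposal is missing: \emph{a geodesic cannot visit a face of $\mathcal{K}$ twice}. Indeed, if it did, then since each face is a Euclidean triangle the geodesic would meet some edge of that face in two distinct points; but an edge of $\mathcal{K}$ and a geodesic are both convex subsets of the CAT(0) space $\mathcal{K}$, so their intersection is convex, and a set of two distinct points is not convex --- a contradiction. Once no face can be revisited, the finiteness of the complex immediately gives termination at $\partial\mathcal{K}$ after finitely many steps, and your remaining worry about accumulation at an interior point disappears. With this lemma inserted, your proof coincides with the paper's; note also that where you allow the whole continuum of admissible outgoing directions at a negative-curvature vertex, the paper simply makes the canonical choice with both complementary angles equal to $\theta(w_i)/2 \geq \pi$, which already suffices for the existence statement.
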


\begin{proof}
We first show how to build a geodesic ray $ r(u, v)$ of origin $ u $ and of direction $ v.$ We can assume that the point $ v $ does not belong to $ \partial \mathcal K $, since otherwise the geodesic ray $ r(u, v) $ coincides with the $ \gamma (u, v)$ segment.

\begin{figure}[h]\centering
\includegraphics[width=0.8\textwidth]{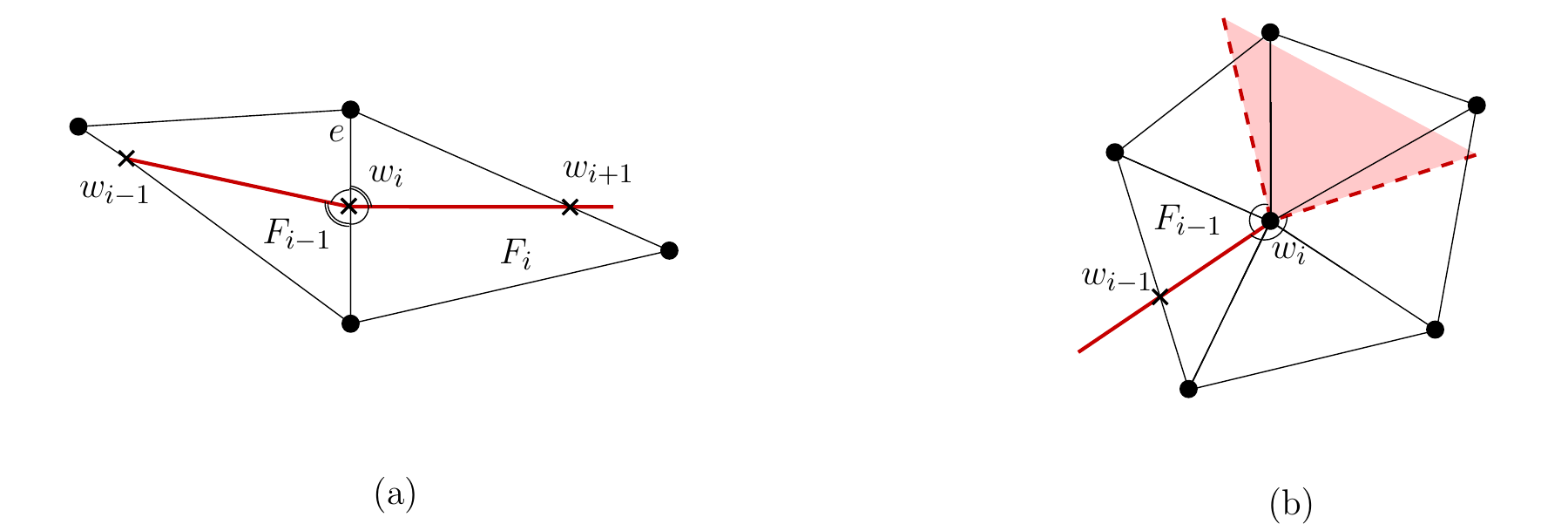}
\caption{Geodesic ray shooting.} \label{des-constr-dreptei}
\end{figure}

Let $w_1, w_2, \ldots,w_l $ $(l <k) $ be the intersection points of the geodesic $ \gamma (u, v) $ with the edges of $ \mathcal K $. We show further how to determine the points $w_{l+1},\ldots, w_k$ on $r(u,v).$ Let $F_l$ be the face of the complex containing the points $w_l$ and $ v,$ then $w_{l+1} \in \partial F_l$ is such that the resulting complementary angles $\angle_v (w_l, w_{l +1})$ are at least equal to $\pi $. \\

We distinguish two cases: case (a) where the point $w_i $, ($i \in \{l+1, \ldots, k \}$) is an inner point of an edge of $F_{i-1} $ and the case (b) where $w_i$ is a vertex of $ \mathcal K.$

In the first case, we suppose that $ w_i \in F_{i-1} $ (see Fig. \ref{des-constr-dreptei} (a)). We suppose that $ F_{i-1} $ contains the points $w_{i-1}$ and $w_i$ and that the face $F_i$ contains the common point $w_i$. Therefore, $ w_{i+1} \in \partial F_i $ is such that the complementary angles $ \angle_{w_i}(w_{i-1}, w_{i +1})$ are equal to $\pi.$

In the second case (see Fig. \ref{des-constr-dreptei} (b)), the point $w_{i+1}$ belongs to one of the faces incident to $w_i$, such that the complementary angles $\angle_{w_i}(w_{i-1},w_{i+1})$ are equal to $\theta(w_i)/2 \geq \pi $.\\
We claim that a face of $\mathcal K$ cannot be visited twice by a geodesic. We prove this by assuming the contrary.
Since each face of $\mathcal K$ is an Euclidean triangle, the geodesic intersects twice at least one of the edges of the face.
An edge of the complex is locally convex and so it is any geodesic in $\mathcal K$, thus they both are convex sets in $\mathcal K$. Since the intersection of two convex sets is convex, we obtain that the intersection of the geodesic and the edge, which is a set of two distinct points, is convex in $\mathcal K$, which is absurd.\\
Since our complex $ \mathcal K $ is finite and each face of $\mathcal K$ can be visited only once by a geodesic, after a finite number of steps, we necessarily reach a point $w_k$ belonging to the outer face of $ \mathcal K.$ By the choice of vertices $w_1, \ldots, w_k ,$ the point sequence $u, w_1, \dots, w_l, v, w_{l +1}, \ldots, w_k$ form a chain (polygonal line) locally convex and hence convex, of origin $ u $ and of direction $ v $. \hfill $\Box$
\end{proof}

%Given a CAT(0) planar complex, the above lemma guarantees the existence for any geodesic segment $\gamma(a, b),$ of a line $l(a, b)$ which contains $\gamma(a,b)$, and $ l(a, b) $ is the union of two geodesic rays $r(a, b)$ and $ r(b, a) $.\\

\subsection{The sweep of the complex}

In this sub-section, given a CAT(0) planar complex $\mathcal K$ and a point $x\in\mathcal K$, we propose a sweeping-line algorithm for constructing the shortest path map in a CAT(0) planar complex $\mathcal K$.
The algorithm visits the faces of $\mathcal K$ using a \emph{sweeping(or level)-line}, denoted by $C(r).$ We call \emph{sweep events} all the crossing points of $C(r)$ from one face to another. The sweeping line $C(r)$ at time $r$ consists of a sequence of arcs of circles of radius $r$ and a fixed center $x$, concatenated by \emph{break points}. These arcs appear when $C(r)$ crosses the sweep events.\\
\noindent When $C(r)$ passes via an event $h$, we construct the geodesic $\gamma(x, h)$ between $h$ to the root $x$ in $\mathcal K$.% (see Fig. \ref{des_evolutie}).
All the geodesics constructed this way form a partition of the complex $\mathcal K$ into convex regions called \emph{cones} of SPM($x$).\\
There are two types of events encountered by the sweeping line of $\mathcal K$: edge-events and vertex-events.
\begin{figure}[h]\centering
   \includegraphics[width=0.75\textwidth]{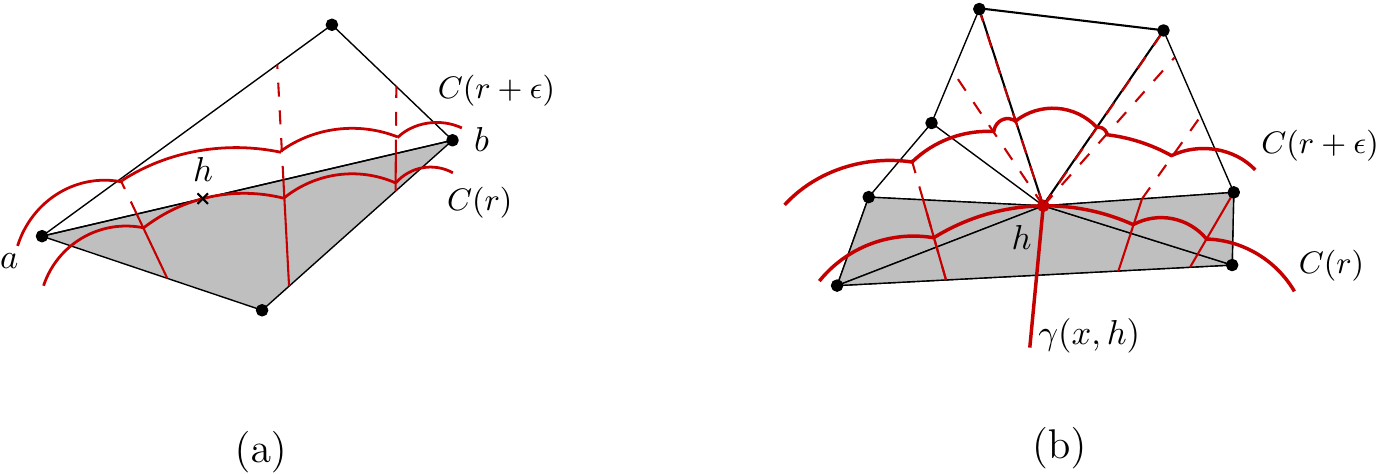}
 \caption{(a) Edge-event, (b) Vertex-event.} \label{des_general1}
\end{figure}

The \emph{vertex-events} are all vertices of $\mathcal K.$ The crossing of the sweeping line $C(r)$ via an event $h$ of this type creates new arcs on $C(r)$ (see Fig. \ref{des_general1} (b)). When $C(r)$ passes via $h$, we construct the geodesic $\gamma (x,h)$ connecting the points $x$ and $h$. The information for the sweep of $\mathcal K$ is then transmitted from the face already covered by $C(r)$ which contains $h$ to all other faces incident to $h$.

The \emph{edge-events} are inner points of edges of $\mathcal K.$ For some edge-event $h,$ there exists a face $F = \Delta(a, b, c)$ of $\mathcal K,$ and a cone $\mathcal C (z; p, q)$ of SPM($x$) such that $h$ is the closest point of $F$ to $x$ in $\mathcal C(z; p, q).$
An edge $e$ of $\mathcal K$ can contain multiple edge-events (see Fig. \ref{des_general1} (a)). The crossing of the sweeping line $C(r)$ via some edge-event $h$ does not change the shape of $C(r)$. These events are only used to transmit the information for the sweep from one face to another.\\

Let $\mathcal C(z; p, q)$ be a cone of SPM($x$) and $v$ a point on a side of $\mathcal C(z; p, q).$ Suppose $v$ belongs to $\gamma(z, p).$
By the definition of the shortest path map, the inner angle $\angle_v(z, p)$ of the cone $\mathcal C(z; p, q)$ is at least equal to $\pi$. The algorithm that we present in the next section builds the cones of SPM($x$) such that $\angle_v(z, p)$ are equal to $\pi$.

%==================================================
\begin{lemma}\label{plonj-con}
All cones of SPM($x$) can be embedded in the plane as acute triangles.
\end{lemma}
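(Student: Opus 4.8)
The plan is to develop each cone isometrically into the Euclidean plane and then exhibit an acute triangle containing its image. Fix a cone $\mathcal{C}(z;p,q)$ of SPM($x$). The two ingredients that make a flat development possible are already in hand: by Proposition \ref{pr-spm}(iv) the set int$(\mathcal{C}(z;p,q))$ contains no vertex of $\mathcal{K}$, so every interior point has a neighbourhood isometric to a Euclidean disk and the interior carries a flat (locally Euclidean) metric; and by Proposition \ref{pr-spm}(iii) the cone is convex, hence contractible and in particular simply connected. A simply connected flat surface admits a developing map $\varphi$ into $\mathbb{R}^2$ that is a local isometry, and convexity upgrades this to an isometric embedding: any two points of the cone are joined by a unique geodesic, which $\varphi$ sends to a genuine straight segment, so $\varphi$ is injective and its image is a convex planar region.

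First I would record the precise shape of this image. Along each side, say $\gamma(z,p)$, the inner angle of the cone equals $\pi$ (Proposition \ref{pr-spm}(vi) together with the normalization to angle $\pi$ made just before the lemma), so the side has no turning under the development and $\varphi(\gamma(z,p))$ is a straight segment $[z',p']$; likewise $\varphi(\gamma(z,q))=[z',q']$. By Proposition \ref{pr-spm}(ii) these two segments meet at $z'=\varphi(z)$ in the apex angle $\alpha=\angle_z(p,q)<\pi$. Thus $\varphi(\mathcal{C}(z;p,q))$ is a bounded convex region with apex $z'$, two straight sides of opening $\alpha<\pi$, and a convex polygonal base (the development of the portion of $\partial\mathcal{K}$ joining $p$ to $q$).

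It remains to enclose this region in an acute triangle, and I expect this to be the main obstacle. The clean case is $\alpha<\pi/2$: I would take the isosceles triangle $T$ whose apex is $z'$, whose two legs carry the developed sides of the cone, and whose base is a segment perpendicular to the bisector of $\alpha$, placed beyond $\varphi(\mathcal{C}(z;p,q))$. Since $\mathcal{K}$ is finite the region is bounded, so if $R$ bounds the distance from $z'$ to every point of the region, a base at bisector-distance $R+1$ leaves the whole (convex) region on the apex side; as the region lies in the wedge of opening $\alpha$ spanned by the two legs, it is contained in $T$. Then $T$ has apex angle $\alpha<\pi/2$ and two equal base angles $(\pi-\alpha)/2$, which are strictly below $\pi/2$ for every $\alpha>0$ and strictly positive since $\alpha<\pi$; hence $T$ is acute. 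When $\alpha\ge\pi/2$ the apex cannot be an acute vertex with the two sides along the legs, so I would first subdivide the cone at $z$ by one or two auxiliary geodesic rays (constructed via Lemma \ref{constr-demidroite}), splitting $\alpha$ into parts each below $\pi/2$, and apply the previous construction to each sub-cone.

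The two delicate points to verify are the injectivity of the developing map $\varphi$ (that the flat convex cone does not overlap itself when flattened, which I would derive from the uniqueness of geodesics in a convex CAT(0) subset together with convexity of the image) and the claim that the developed base remains inside the chosen triangle; the latter is immediate from convexity of $\varphi(\mathcal{C}(z;p,q))$ and boundedness of $\mathcal{K}$ once the base line is pushed past the farthest base vertex.
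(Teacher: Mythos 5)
Your development argument (flat interior from Proposition \ref{pr-spm}(iv), injectivity from convexity, straight sides from the angle-$\pi$ condition, apex angle $<\pi$ from Proposition \ref{pr-spm}(ii)) reproduces the first half of the paper's proof, in fact more carefully than the paper states it. But there is a genuine gap exactly at the point you gloss over: you describe the developed image as having ``a convex polygonal base (the development of the portion of $\partial\mathcal K$ joining $p$ to $q$)'' and never show that this base is a single straight segment. That is the crux of the paper's proof: the segment $\gamma(p,q)$ cannot contain any vertex of $\mathcal K$, because by the definition of SPM($x$) every vertex of $\mathcal K$ lies on a geodesic of SPM($x$) which is a side of some cone, so a vertex on $\gamma(p,q)$ would split $\mathcal C(z;p,q)$ into at least two cones; hence $\gamma(p,q)$ lies inside a single edge of $\partial\mathcal K$, the angle $\angle_u(p,q)$ equals $\pi$ at every interior point $u$ of the base, and $\partial\mathcal C(z;p,q)$ is a geodesic triangle. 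Only with that step does the development become exactly a Euclidean triangle $\Delta(z',p',q')$, which is what the lemma asserts and what the rest of the paper uses (in Section 4 and in Lemma \ref{constr-gamma} the unfolding $f$ is taken onto a triangle $\mathcal T=f(\mathcal C(z;p^*,q^*))$, and Euclidean geodesics are computed inside $\mathcal T$ and pulled back).

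Your workaround --- enclosing the developed region in a large isosceles triangle, and subdividing the cone by auxiliary rays when $\alpha\ge\pi/2$ --- proves a different and weaker statement: that the image is \emph{contained in} an acute triangle, not that the cone unfolds \emph{as} a triangle. Moreover, subdividing a cone of SPM($x$) changes the objects the lemma quantifies over, so for cones of SPM($x$) with apex angle at least $\pi/2$ the stated claim is left unproven. Note also that strict acuteness (all angles below $\pi/2$) is not established by the paper either: its proof only yields a genuine Euclidean triangle with apex angle $<\pi$, and nothing excludes an obtuse base angle, so ``acute'' must be read loosely; chasing strict acuteness with enclosing triangles and subdivisions is a misreading of what the lemma needs to deliver, and it led you away from the one argument that was actually required.
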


\begin{proof}
Let $ \mathcal C(z; p, q)$ be a cone of SPM($x$) in $\mathcal K.$
By Lemma \ref{pr-spm} (i), int($\mathcal C(z; p, q)$) contains no vertex of $\mathcal K.$ Therefore, int($\mathcal C(z; p, q)$) contains no points of negative curvature.
By Lemma \ref{pr-spm} (ii), the inner angle of origin $z$ is less than $\pi$. By the remark preceding the lemma, for every point $v$ situated on a side of the cone $\mathcal C(z; p, q)$ the angle of origin $v$ is equal to $\pi$.
By the definition of SPM($x$), for each vertex $v$ of $\mathcal K$ (including the vertices on the boundary of the complex), SPM($x$) contains the geodesic $\gamma(x,v)$ contained in a side of at least one cone of SPM($x$).
The segment $\gamma(p,q)$ cannot contain any vertices, as in the opposite case the cone $\mathcal C(z; p, q)$ is divided in at least two cones of SPM($x$).
Thus $\gamma(p,q)$ belongs to an edge of the boundary of $\mathcal K.$ Therefore, for any point $u \in \gamma(p, q)$, the angle $\angle_u(p, q)$ is equal to $\pi$.
In summary, this implies that $\partial \mathcal C(z; p, q)$ is a geodesic triangle in $\mathcal K.$ Thus there exists a comparison triangle $\Delta(z', p', q')$ in $\mathbb{E}^2$, which represents the unfolding of $\mathcal C(z; p, q)$ in the plane.
\hfill $\Box$
\end{proof}
%====================================================

Subsequently, we associate to each edge-event $h$ of SPM($x$) the distance $d(x, h)$. If $h$ is a vertex-event of SPM($x$), we associate to $h$ the distance $d(x, h)$ and the two formed angles between the geodesic $\gamma(x,h)$ and the edges of $\mathcal K$ incident to $h$ in the face covered by $C(r)$ at time $r$.

\medskip
%==================================================
Let $F =\Delta(a, b, c)$ be a face of $\mathcal K$ and $h$ an event of $F$. If $h$ is an edge-event of $F$, knowing the distance $r = d(x, F)$ one can construct inside $F$ a unique arc of the circle $C(x, r)$.
If $h$ is a vertex-event, knowing the distance $r = d (x, F)$ and the two angles between $\gamma(x, h)$ and the edges of $F$ incidents to $h,$ it is possible to build inside $F$ a unique arc of the circle $C(x, r)$.
%We will formulate the following simple and trivial lemma without proof.

\begin{lemma}\label{constr-cerc}
Let $h$ be an event contained in a face $ F = \Delta(a, b, c)$ and let $\mathcal C (z; p, q)$ be a cone of SPM($x$) which intersects $F$ and contains $h$.
If $h$ is an edge-event $F$, using the distance $r = d(x, F)$ one can construct inside $F \cap \mathcal C(z; p, q)$ a single arc of the circle $C(x, r)$.\\
If $h$ is a vertex-event, using the distance $r = d(x, F)$ and the two angles between $ \gamma(x, h)$ and the two edges of $F$ incident to the vertex $h,$ one can construct inside $ F \cap \mathcal C(z; p, q)$ a single arc of the circle $C(x, r)$.
\end{lemma}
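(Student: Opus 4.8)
The plan is to reduce the whole statement to an elementary planar construction by unfolding, and then to locate the centre of the circle from the recorded local data. First I would invoke Lemma~\ref{plonj-con}: the cone $\mathcal C(z; p, q)$ develops isometrically onto a Euclidean triangle, and inside the single face $F$ --- which is isometric to a Euclidean triangle and therefore flat --- the intrinsic metric already agrees with the Euclidean one, so $F\cap\mathcal C(z; p, q)$ embeds isometrically in $\mathbb E^2$. By Lemma~\ref{sp} every shortest path from $x$ into the cone passes through the apex $z$, whence $d(x,y)=d(x,z)+d(z,y)$ for all $y\in\mathcal C(z; p, q)$. Consequently, restricted to the cone, the sweeping circle $C(x,r)$ is exactly the Euclidean circle centred at the developed apex $z'$ of radius $r-d(x,z)$; equivalently, it suffices to place the image of the source (the centre of $C(x,r)$) in the developed copy of $F$.

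In the edge-event case, $h$ is an interior point of an edge $e$ of $F$ and, by the definition of an edge-event, the closest point of $F\cap\mathcal C(z; p, q)$ to $x$. First-order optimality then forces $\gamma(x,h)$ to meet the line carrying $e$ orthogonally at $h$. Hence, after laying $F$ down in $\mathbb E^2$, I would erect the perpendicular to $e$ at $h$ on the side from which the geodesic arrives and mark the centre at distance $r=d(x,F)=d(x,h)$ from $h$ along it; this single datum determines the circle uniquely, and its trace inside the cell $F\cap\mathcal C(z; p, q)$ is the required arc. In the vertex-event case, $h$ is a vertex lying on $\partial F$, and the two angles stored with $h$ are precisely the angles made by $\gamma(x,h)$ with the two edges of $F$ incident to $h$; together with the length $r=d(x,h)$ they fix both the direction and the length of the developed incoming segment, hence the position of the centre in the developed $F$, and again the circle --- and its intersection with $F\cap\mathcal C(z; p, q)$ --- is determined.

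It remains to check that this intersection is a \emph{single} arc. Since $F$ is flat, the straight segment between any two of its points is a local, hence (in a CAT(0) space) a global, geodesic, so $F$ is convex; the cone $\mathcal C(z; p, q)$ is convex by Proposition~\ref{pr-spm}(iii); therefore the cell $F\cap\mathcal C(z; p, q)$ is convex. The centre of the circle is the apex $z$, and by Proposition~\ref{pr-spm}(ii) the opening angle of the cone is strictly less than $\pi$, so the angular extent of $F\cap\mathcal C(z; p, q)$ as seen from $z$ is an interval of length $<\pi$; a circle centred at $z$ then crosses the convex cell in one connected arc.

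The main obstacle I expect is exactly this last point --- turning the intuitive ``one arc'' into a rigorous statement. Convexity of the cell alone does not suffice (a circle can meet a convex region, even one avoiding its centre, in two arcs); the argument must genuinely use that the centre is the apex and that the cone angle is $<\pi$, together with the fact that for an edge-event $h$ realises the minimum of $d(x,\cdot)$ on the cell, so that the disc bounded by $C(x,r)$ touches the cell only near $h$. A secondary point requiring care is the consistency of the local data under the development: one must verify that the orthogonality (edge case) and the two recorded angles (vertex case) reproduce, in the flat copy of $F$, the same centre as the global development of the whole cone supplied by Lemma~\ref{plonj-con}.
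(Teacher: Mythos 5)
Your core construction is the one the paper uses: develop $F$ isometrically in $\mathbb{E}^2$, then pin down a unique virtual centre from the stored data (orthogonality of $\gamma(x,h)$ to the edge plus the distance for an edge-event; the two recorded angles plus the distance for a vertex-event) and take the circle of the corresponding radius. To that extent the proposal and the paper's proof coincide.

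The genuine problem is the ``equivalence'' asserted in your first paragraph and deferred as a consistency check in your last one: that check fails whenever $z\neq x$. By Lemma~\ref{sp} (as you correctly note), for $y$ in the cone one has $d(x,y)=d(x,z)+d(z,y)$, so the developed image of the intrinsic circle $C(x,r')$ inside $\mathcal C(z;p,q)$ is the Euclidean circle centred at the developed apex $z'$, which sits at distance $d(z,h)=r-d(x,z)$ behind $h'$, with radius $r'-d(x,z)$. Your edge-event construction instead places the centre at distance $r=d(x,h)$ behind $h'$. These are different points, and the two circles are merely internally tangent at $h'$: for $y'$ off the ray through $h'$ the strict triangle inequality gives $|x'-y'|<d(x,z)+|z'-y'|=d(x,y)$, so the circle centred at the ``distance-$r$'' point is \emph{not} the developed image of any intrinsic circle $C(x,\cdot)$. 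At the single radius $r=d(x,F)$ both curves meet $F\cap\mathcal C(z;p,q)$ only in the degenerate arc $\{h\}$, so the literal statement survives, but this lemma is invoked (in Lemma~\ref{constr-gamma} and in the sweep, where the arcs are written as $C(z,d(z,h)+r)$) precisely to propagate the wavefront to larger radii, and there only the apex-centred placement is correct. A second, related gap: your connectedness argument in the third paragraph presupposes that the centre is the apex, so it does not apply to the circle you actually built; and even granting apex-centring, ``convex cell plus opening angle $<\pi$'' does not force a single arc --- a circle centred at the apex can cross a thin convex cell lying transverse to the radial direction in two arcs. Connectedness at the radius $r=d(x,F)$ itself is trivial (the arc is the point $h$, the unique minimizer of the convex function $d(x,\cdot)$ on the cell), and that is all this statement needs; the word ``single'' in the paper's proof refers to the uniqueness of the constructed centre and circle, not to a connectedness claim for later radii.
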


\begin{proof}
Suppose first that $h$ is an inner point of edge  $ab$ of $F$.
Since $h$ is an event, it is the point of $F$ closest to $x.$ Then the edge $ab$ is tangent to the circle $C(x, r)$ in $\mathcal K$, where $r = d(x,F)$. We build in the Euclidean plane the isometric image of the triangle $ \Delta(a, b, c)$, which we denote by $\Delta (a ', b', c ').$ Let $h'$ be the point of the side $a'b'$ so that $d_{\mathbb{E}^2}(h', a') = d (h, a)$ and $d_{\mathbb{E}^2}(h', b') = d(h, b).$ We will now build the image of the point $x$ in $\Delta (a ', b', c ')$.
For this, it suffices to construct a segment $[x'h']$ outside $\Delta(a ', b', c ')$ of length $r$, with one end in $h' \in [a'b']$ and perpendicular to the segment $[a'b']$. The locus of all points $x'$ of the plane, represents a single point. Therefore, it is possible to build a single circle of radius $r$ whose center is the other endpoint of the segment $[x'h']$.

Consider now the case where $h$ is a vertex-event of $F$. Suppose $h = c$, then $ha$ and $hb$ are two edges of $F$ incident to $h$. We construct the isometric image of the triangle $F = \Delta(a, b , h)$ in the plane, which we denote by $\Delta(a ', b', h').$ We seek to locate the image of $x$ in the plane using the following: $h$ is the point of $F$ closest to $x$ and the distance $d(x, h)$ equals to $r.$ The locus of all points $x'$ of the plane located at a distance $r$ of a fixed point $h'$ is a circular arc. However, knowing the angles formed between the edge $ha$ and the segment $xh$ and between $hb$ and $xh,$ the locus of all points $x'$ plan verifying these conditions is a single point. Thus, we can construct a single circle $C(r)$ in $F$ (centered in $x$ and of radius $r$).
\hfill $\Box$
\end{proof}

The following lemma shows how to identify new events in a face of the complex during the sweep of $\mathcal K.$
\begin{lemma}\label{constr-gamma}
Let $F =\Delta(a, b, c)$ be a face of $\mathcal K$, $\mathcal C(z; p, q)$ a cone of SPM($x$) crossing $F$ and $h$ an event of the sweep inside $F \cap\mathcal C(z; p, q)$. It is possible to determine a new event $h_i$ in $F \cap \mathcal C(z; p, q)$ and the distance $d(x, h_i)$ to $x$. Moreover, if $h_i$ is a vertex of $F$ then one can construct the geodesic $\gamma(x, h_i)$ in $\mathcal K.$
\end{lemma}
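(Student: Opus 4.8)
The plan is to reduce the identification of the next event to an elementary Euclidean computation carried out in the unfolding of the face $F$. First I would apply Lemma~\ref{constr-cerc} to the current event $h$: if $h$ is an edge-event we know $r=d(x,F)$, and if $h$ is a vertex-event we know in addition the two angles that $\gamma(x,h)$ makes with the edges of $F$ incident to $h$. In either case Lemma~\ref{constr-cerc} produces the isometric image $\Delta(a',b',c')$ of $F$ in $\mathbb{E}^2$ together with the unique image $x'$ of the source, and hence the arc of the wavefront $C(x',r)$ inside $F\cap\mathcal{C}(z;p,q)$. Because $\mathcal{C}(z;p,q)$ is convex (Proposition~\ref{pr-spm}~(iii)) and its interior contains no vertex of $\mathcal{K}$ (Proposition~\ref{pr-spm}~(iv)), the planar embedding guaranteed by Lemma~\ref{plonj-con} is an isometry on $F\cap\mathcal{C}(z;p,q)$, so intrinsic distances in this region coincide with Euclidean distances in the plane, and the two sides of the cone appear as straight segments cutting the triangle $\Delta(a',b',c')$.

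In this planar picture the wavefront restricted to the convex region $F\cap\mathcal{C}(z;p,q)$ is the family of arcs $C(x',r)$ as $r$ increases beyond its current value, and the next event is the first boundary point of this region that the growing arc reaches. The candidates are therefore of two kinds: the orthogonal projection of $x'$ onto each edge of $F$ bounding the region that does not already carry $h$ (these yield edge-events, retained only when the foot of the perpendicular lies in the region, otherwise replaced by the corner where a side of the cone meets the edge), and the vertices of $F$ lying in the region (these yield vertex-events). I would compute the Euclidean distance from $x'$ to each candidate and take $h_i$ to be a candidate realizing the smallest such distance strictly exceeding $r$. Since the unfolding is an isometry on $F\cap\mathcal{C}(z;p,q)$, we immediately obtain $d(x,h_i)=d_{\mathbb{E}^2}(x',h_i')$, which is the required distance.

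It remains to treat the case where $h_i$ is a vertex of $F$, i.e.\ a vertex-event. Since $h_i$ lies in the cone $\mathcal{C}(z;p,q)$, Lemma~\ref{sp} together with part~(2) of the definition of SPM($x$) shows that the shortest path $\gamma(x,h_i)$ passes through the apex $z$, so that $\gamma(x,h_i)=\gamma(x,z)\cup\gamma(z,h_i)$. The first part $\gamma(x,z)$ is already recorded as the spine common to the two sides of the cone, while $\gamma(z,h_i)$ is a geodesic segment that appears in the unfolding of $\mathcal{C}(z;p,q)$ as the straight segment $z'h_i'$; tracing this segment back across the faces of $\mathcal{K}$ by the geodesic ray shooting of Lemma~\ref{constr-demidroite} produces $\gamma(z,h_i)$, and concatenation yields $\gamma(x,h_i)$ in $\mathcal{K}$.

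I expect the main obstacle to be the justification that the planar unfolding is a genuine isometry on the \emph{whole} of $F\cap\mathcal{C}(z;p,q)$, so that the candidate minimizing the Euclidean distance from $x'$ is really the next event of the sweep in $\mathcal{K}$ and no shorter intrinsic path to $h_i$ escapes the cone. This is exactly where the convexity of the cone and the absence of interior vertices (Proposition~\ref{pr-spm}~(iii)--(iv)) are needed: they guarantee that geodesics crossing the region behave as straight lines under the unfolding, so that the nearest-point computation in $\mathbb{E}^2$ faithfully reflects the intrinsic propagation of the wavefront inside the cone.
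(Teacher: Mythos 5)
Your reduction to a planar computation is the right general strategy, and your treatment of the vertex-event case at the end (decomposing $\gamma(x,h_i)=\gamma(x,z)\cup\gamma(z,h_i)$ and recovering $\gamma(z,h_i)$ by ray shooting, Lemma \ref{constr-demidroite}, along the preimage of the straight segment $[z'h_i']$) matches the paper. The genuine gap is in the distance computation and in the choice of the center of the wavefront. You take the unfolded source $x'$ as the center of the arcs inside $F\cap\mathcal C(z;p,q)$ and conclude $d(x,h_i)=d_{\mathbb{E}^2}(x',h_i')$. This is false unless $z=x$: the isometry of Lemma \ref{plonj-con} applies only to points \emph{of the cone}, and $x$ is not one of them. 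By Lemma \ref{sp}, every shortest path from $x$ to a point $y$ of the cone passes through the apex $z$, so $d(x,y)=d(x,z)+d(z,y)$, which under the unfolding equals $d(x,z)+d_{\mathbb{E}^2}(z',y')$. Since $x'$ is obtained by unfolding the single geodesic $\gamma(x,h)$ straight, $z'$ lies on the segment $[x'h']$, and the triangle inequality gives $d_{\mathbb{E}^2}(x',y')\le d(x,z)+d_{\mathbb{E}^2}(z',y')=d(x,y)$, with equality only when $y'$ lies on the ray from $x'$ through $z'$. Hence for a generic new event $h_i$ your formula strictly underestimates $d(x,h_i)$; moreover, the intrinsic wavefront $\{y: d(x,y)=r\}$ inside the cone unfolds to a Euclidean arc centered at $z'$ (of radius $r-d(x,z)$), not at $x'$, so your candidate edge-events (feet of perpendiculars from $x'$) and your selection rule (smallest Euclidean distance from $x'$) can identify the wrong point as the next event. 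These incorrect distances would then corrupt the priority queue that drives the sweep. Note also the internal inconsistency: your own construction $\gamma(x,h_i)=\gamma(x,z)\cup\gamma(z,h_i)$ forces $d(x,h_i)=d(x,z)+d(z,h_i)$, which contradicts $d(x,h_i)=d_{\mathbb{E}^2}(x',h_i')$ except in the aligned case.

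The repair is exactly what the paper's proof does: work with the apex image $z'$ throughout. For an edge-event, place $z'$ outside $\Delta(a',b',c')$ with $d_{\mathbb{E}^2}(z',h')=d(z,h)$ and $[z'h']$ perpendicular to $[a'b']$; for a vertex-event, place $z'$ using the two stored angles between $\gamma(z,h)$ and the edges of $F$ incident to $h$. The new event $h_i$ is the point where the arc centered at $z'$, maximal by inclusion in $F\cap\mathcal C(z;p,q)$, meets an edge of $F$; its distance is $d(x,h_i)=d(x,z)+d_{\mathbb{E}^2}(z',h_i')$; and $\gamma(z,h_i)$ is constructed, as you do, by shooting the geodesic ray $r(h_i,s)$ towards $z$, where $s$ is the preimage of the intersection point $s'$ of $[z'h_i']$ with the edge of $F$.
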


\begin{proof}
By the previous lemma, using the information associated to the event $h$, it is possible to build inside $F$ an arc $C(x,r).$ Suppose that for $r^*$ the arc $C(x,r^*)$ is maximal by inclusion in $F$. Then $C(x,r^*)$ intersects at least one edge $F$ in a point $h_i$. As the events are crossings points from one face to the other, $h_i$ is an event of $C(r).$ We will build the geodesic segment $\gamma(x, h_i)$ in $\mathcal K$ and calculate the distance $d(x,h_i). $ Since the cone $\mathcal C(z; p, q)$ of SPM($x$) contains events $h$ and $h_i$ then $d(x, h_i) = d(x, z) + d(z,h_i)$ and $\gamma(x, h_i) = \gamma(x, z) \cup \gamma(z, h_i).$ Therefore, to construct $\gamma(x, h_i)$ in $\mathcal K$ and calculate the distance $d(x, h_i)$ it suffices to construct the geodesic segment $\gamma(z, h_i)$ in $\mathcal K$ and calculate the distance $d(z,h_i)$.

For this, we use the same reasoning as in the proof of the previous lemma.
Let $\Delta(a', b', c')$ be the isometric image of the triangle $\Delta(a, b, c)$ in the plane and $h', h'_i \in \Delta(a', b', c')$ the respective images of $h$ and $h_i$ in $\Delta(a, b, c).$ By Lemma \ref{plonj-con}, we know
that the cone $\mathcal C (z; p, q)$ can be unfolded in the plane in the form of an acute triangle. Therefore, we can construct in the plane the isometric image of all points of the cone.
Let us analyze separately the case where $h$ is an edge-event and the case where $h$ is a vertex-event.

If $h$ is an edge-event $ab$ (see Fig. \ref{des_gamma} (a)), by the previous lemma the distance $d(z, h)$ is known. Its suffices to build the point $z'$ in the plane, such that $z\not\in\Delta(a', b', c') $, \
$d_{\mathbb{E}^2}(z', h') = d(z,h)$ and $[z'h]$ is perpendicular to the side $[a'b'].$ Let $s'$ be the intersection point of the segment $[z'h'_i] $ with the edge $[a'b'].$ We denote by $s$ the inner point of $ab$ of $F$ such that $d(a, s) = d_{\mathbb{E}^2}(a, s')$ and $d(s, b) = d_{\mathbb{E}^2}(s', b').$ In order to construct the geodesic $\gamma(z, h_i)$ in $\mathcal K$ it suffices to launch the geodesic ray $r(h_i,s)$ towards $z$.
\begin{figure}[h]\centering
   \includegraphics[width=0.75\textwidth]{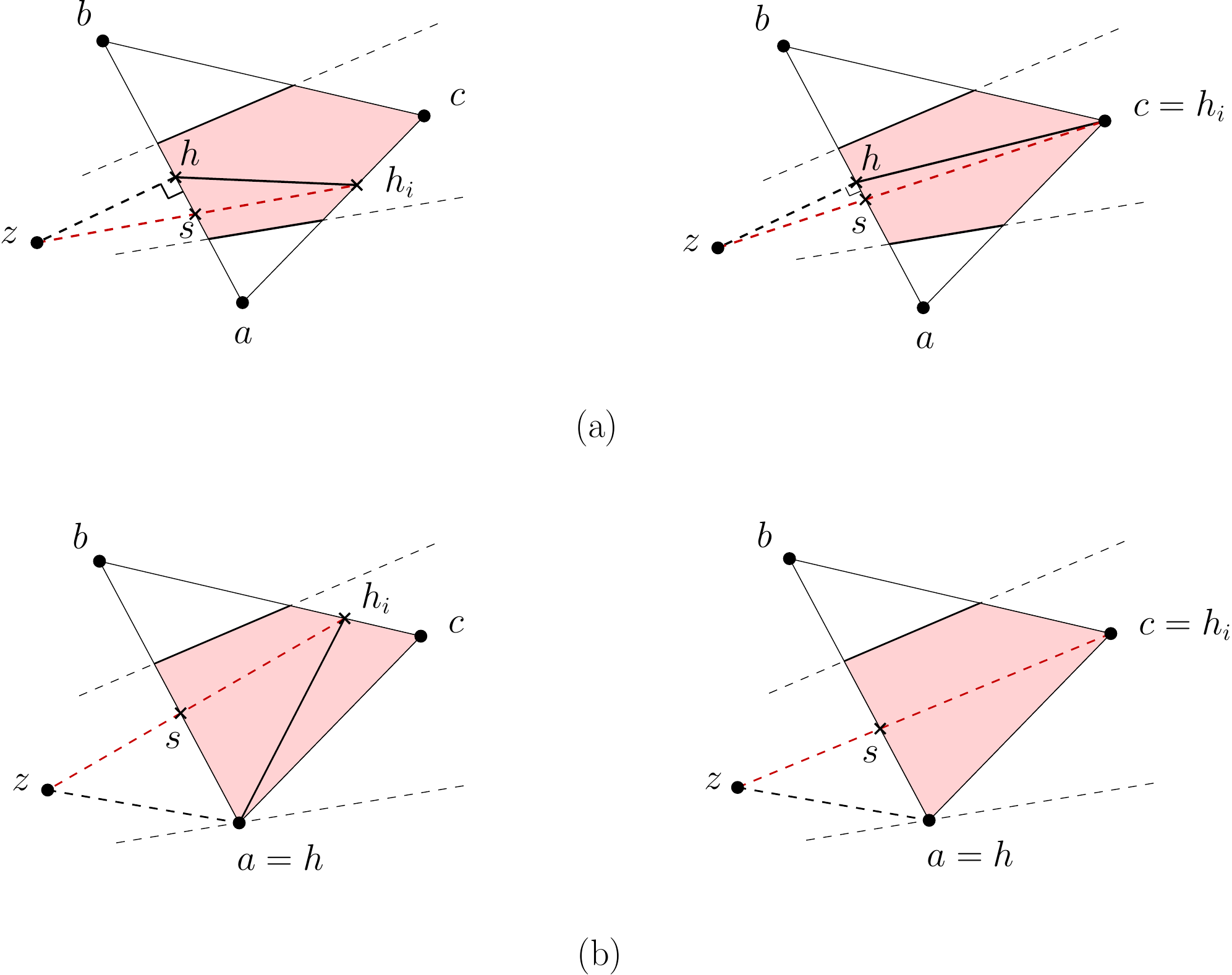}
 \caption{Computing new events inside a face of $\mathcal K$.} \label{des_gamma}
\end{figure}

If $h=a$ is a vertex-event of $F$ (see Fig. \ref{des_gamma} (b)), by the previous lemma, the following measurements are known: distance $d(z, h)$ and angles $\alpha_1,$ $\alpha_2$ between the geodesic $\gamma(z, h)$ and
the edges incident to $h$ in $F$. It suffices to build the point $z'$ in the plane such that $z\not\in\Delta(a', b', c')$, $d_{\mathbb{E}^2}(z', h') = d(z,h)$ and $[z'h]$ form the angles $\alpha_1,$ $\alpha_2$ with the edges
$[a'b]$ and $[a'c]$. Let $s'$ be the intersection point of the segment $[z'h'_i]$ with the edge $[a'b']$ in the plane. The case where $[z'h'_i]$ intersects the edge $[a'c]$ is similar.

Let $s$ be the point of the edge $ab$, so that $d(a,s) = d_{\mathbb{E}^2}(a, s')$ and $d(s, b) = d_{\mathbb{E}^2}(s', b').$ In order to construct the geodesic $\gamma(z,h_i)$ in $\mathcal K$ it suffices to launch the geodesic ray $r(h_i, s)$ towards $z$. \hfill $\Box$
\end{proof}

\subsection{Computing SPM($x$)}

In this section we present an efficient algorithm which constructs the shortest path map SPM($x$) in a CAT(0) planar complex $\mathcal K$ with $n$ vertices using a data structure of size $O(n^2)$. This algorithm traverses the faces of the complex using a sweeping line from a given source-point $x \in \mathcal K$ and builds simultaneously the cones of SPM($x$).

%------------------------------------
\subsubsection{Data structure of the algorithm}
%------------------------------------

The data structure used by our algorithm, consists of two substructures: a \emph{static} substructure $D_s,$ which does not change during the steps of the algorithm, and a \emph{dynamic} substructure $D_d,$ which is initialized at step one of the algorithm and changes during the sweep of $\mathcal K$.\\
\noindent The static substructure contains the planar map of the complex $\mathcal K$ and the circular lists of angles of every vertex of $\mathcal K$. At time $r$ of the sweep, the dynamic substructure contains a priority queue $\mathcal Q$ of events crossed by $C(r)$ and a list $\mathcal C$ of cones constructed up to $C(r).$ Note that the intersection of a triangular face of $\mathcal K$ with a cone of SPM($x$) is at most a quadrilateral. Thus, the dynamic substructure $D_d$ contains, for any face $F$ of $\mathcal K$ the list of quadrilaterals coming from the intersection of $F$ with cones of $\mathcal C.$\\
\noindent We use the representation of the complex as a planar map \cite{BeChKrOv} which is a doubly-connected edge list. This representation allows us to use a data space of linear size with respect to the number of vertices of $\mathcal K.$

\begin{figure}[h]\centering
\includegraphics[width=0.8\textwidth]{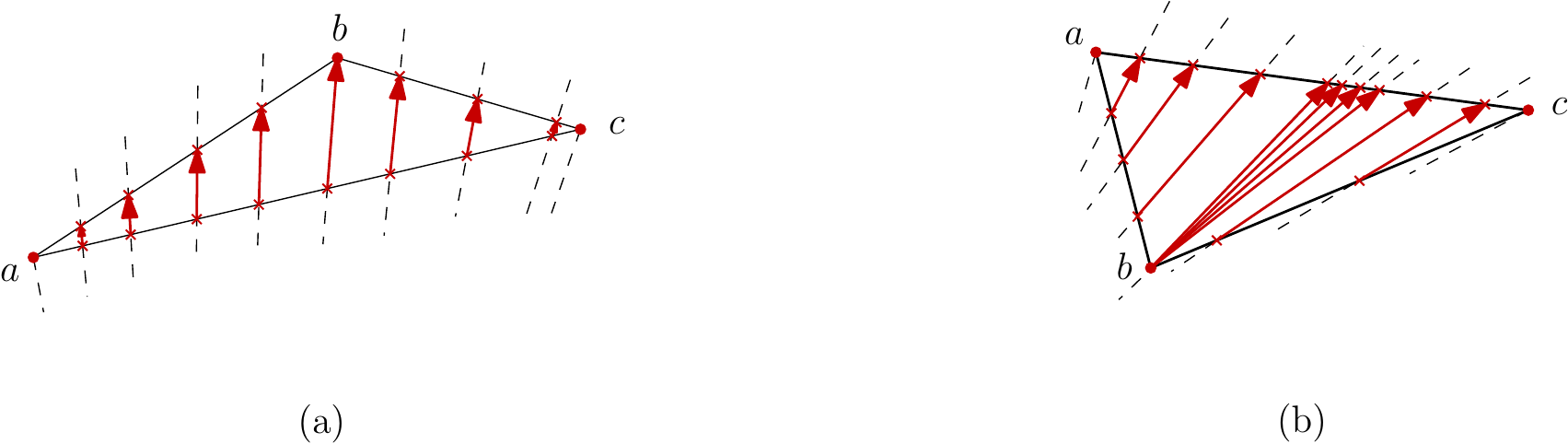}
\caption{Intersections of the faces of $\mathcal K$ with cones of SPM($x$).}\label{intersections1}
\end{figure}
Given a CAT(0) planar complex $\mathcal K$ and a point $x \in {\mathcal K}$ we construct the shortest path map SPM($x$) as a geodesic tree structure.

%------------------------------------
\subsubsection{Algorithm}
%------------------------------------

Given a CAT(0) planar complex ${\mathcal K}$ and a source-point $x \in {\mathcal K} $ we present an algorithm which computes the shortest path map SPM($x$) in $\mathcal K$.

First we assume that $x$ is an inner point of a face $F = \Delta(a, b, c)$ of $\mathcal K.$ In this case, we add $x$ to set of vertices of $\mathcal K$, $V(\mathcal K): = V (\mathcal K) \cup \{x \}$ and the segments $[xa], [xb] $ and $ [xc] $ constructed inside $ F $ to the set of edges of $ \mathcal K $. Thus $ F $ is divided into three triangular faces $ \Delta (x, a, b),$ $ \Delta (x, a, c)$ and $ \Delta (x, b, c). $

From the root point $x$ of $\mathcal K $ the algorithm traverses all the faces of the complex using a sweep line $C(r) $ and passes from a face to another by the sweep events. These points are crossed by the sweeping line in ascending order of their distances to the root $x.$ We use a priority queue $\mathcal Q$ to store the events encountered by $C(r)$.

The event at the top of $\mathcal Q$ is extracted from the priority queue, and using Lemma \ref{constr-cerc} the sweeping line is then constructed in the faces incident to this event. Moreover, we determine in these faces new events which are included in $ \mathcal Q $ according to their distances to $ x $.
When the sweeping line crosses a vertex-event $z$ of $ \mathcal K $, the algorithm builds the geodesic segment $\gamma(x, z)$.
%+++++++++++++++++++++++++++++++++++++++++++++++++++++++++++++++++++++++++++++++++++++++++++++

Let $p$ and $q$ be two vertices on the sweeping line $C(r)$ at time $r$. In this case, $ p $ and $ q $ are equidistant from $x $. Let $ \gamma (x, p) $ and $ \gamma (x, q) $ be two geodesics constructed during the sweep of the complex and let $z$ be a vertex of $ \mathcal K $ such that $ z \in \gamma (x, p) \cap \gamma (x, q) $ and $ \angle_z (p, q) \leq \pi.$ The set of points between the geodesics $ \gamma (z, p) $ and $ \gamma (z, q) $ defines a \emph{partial cone} of SPM($x$) up to the sweeping line, which is denoted by $\mathcal C (z; p, q).$ We initialize the list $\mathcal C$ which registers all the partial cones formed between two consecutive geodesics built up to the sweeping line.
For any face $ F $ of $ \mathcal K $, we use a list $ \mathcal L(F) $ containing quadrilaterals obtained from the intersection of $ F $ with partial cones of $ \mathcal C $ (see Fig. \ref{intersections1}). The quadrilaterals in $ \mathcal L (F) $ are sorted by the coordinates in $ F $ of the points of intersection of edges of $ F $ with the sides of partial cones of $ \mathcal C $.

\noindent We describe now the steps of the algorithm in a more detailed way.\\

\noindent\textbf{Initialization step.} Given the root point $ x $ in $ \mathcal K ,$ the algorithm determines the events in all the faces incident to $ x $ and includes them in the priority queue $\mathcal Q $ as follows: \\
Let $ F = \Delta (x, a, b) $ be a triangular face of $ \mathcal K $ incident to $ x $. An edge-event $ h\in F$ is an inner point of the edge $ab $, such that the segment $[xh]$ is perpendicular to $ab$ in $ F. $ In other words, a point $ h \in  ab$ is an edge-event of $ F $ if the arc $ C(x, r) $ is maximal by inclusion inside $ F.$ %(see Fig. \ref{des-initial}).
If such an event $ h $ exists, we associate to this event the distance $ d(x, h). $ Note that the distance $d(x, h)$ is calculated using the Euclidean metric inside the face $ F $ of $ \mathcal K.$

The vertices $a, b$ of $F$ are vertex-events of $F$. We associate to each vertex-event $ h $ of $ F $, the distance $ d (x, h) $ together with the two angles formed between the edge $xh$ and the two edges incident to $ h $ in $ F.$\\
The priority queue $ \mathcal Q $ is initialized by inserting in $ \mathcal Q $ all the edge-events and vertex-events of faces incident to the root point $ x $ according to their distances to $ x $. The top event of $ \mathcal Q $ is the closest event to $ x $ belonging to a face containing $ x $.
The list of partial cones $ \mathcal C $ is initialized as the set of cones $ \mathcal C (x; a, b) $, where $ F = \Delta (x, a, b) $ is a face incident to $x.$
At this step, for each face $ F $ incident to $ x, $ the list $ \mathcal L (F) $ is initialized with the face $ F $ which can be seen as a degenerated quadrilateral.

%\medskip
%\begin{figure}[h]\centering
%\includegraphics[width=0.4\textwidth]{initial.pdf}
%\caption{L'étape d'initialisation.}\label{des-initial}
%\end{figure}

\noindent\textbf{Step $k$.} After $k-1$ steps, let $ \mathcal Q $ be the priority queue of all events crossed by $C(r)$ and let $ \mathcal C $ be the list of partial cones built up to the sweeping line. Let $h$ be the top event of the priority queue $ \mathcal Q.$ Then $h$ is extracted from $\mathcal Q$ and $h$ is treated among one of the two following cases:\\

\noindent\textbf{Case 1}: $h$ is an edge-event. Let $ab$ be the edge of $\mathcal K$ containing $h$ and let $ F = \Delta (a, b, c)$ be the triangular face of $ \mathcal K $, such that $ c $ was already crossed by the sweeping line. We denote by $ F' = \Delta (a, b, c')$ the adjacent face of $ F $ sharing a common edge $ab$.

\noindent Using local coordinates of $ h $ on the edge $ab$ and the list $ \mathcal L(F) $ we can determine by a binary search the quadrilateral $Q$ of $ \mathcal L(F)$ containing $h$ as follows. We identify the quadrilateral $Q$ of $\mathcal L(F)$ such that $h$ is located at the left from a side of $Q$ and at the right from another side of $Q.$
Then we determine the partial cone $ \mathcal C(z; p, q)$ of $ \mathcal C $ containing this quadrilateral $Q.$

\noindent Since the geodesics $\gamma(x, p)$ and $\gamma(x, q)$ are constructed in the previous steps, using Lemma \ref{constr-demidroite} we are shooting the following geodesic rays: $r(x, p) $ and $ r(x, q)$ inside the face $F'.$ The cone $ \mathcal C (z; p, q)$ is replaced in $ \mathcal C $ by the cone $ \mathcal C (z; p', q'),$ where $ \gamma (x, p')$ and $\gamma(x, q')$ are the previously constructed geodesic.

Using the method described in Lemma \ref{constr-cerc} for the case where $h$ is an inner point of an edge, we identify all the new events $h_i $ inside the quadrilateral $F \cap \mathcal C (z; p', q').$ By Lemma \ref{constr-gamma}, for each new event $h_i$, we compute the distance $d(x, h_i)$ knowing the distances $d(x,h)$ and $d(h, h_i)$.

\noindent For each new vertex-event $h_i$ of $F \cap \mathcal C(z; p', q') $, we construct the geodesic $\gamma(x, h_i)$.
We know that $\gamma(x, h_i) = \gamma(x, z) \cup \gamma(z, h_i).$ As $z$ is a vertex of $ \mathcal K $ already treated, then $ \gamma (x, z) $ is constructed at this step. We use Lemma \ref{constr-gamma} to build the geodesic $ \gamma(z, h_i) $.

\noindent We replace in $\mathcal C$ the cone $ \mathcal C (z; p', q')$ containing the event $h$ by new partial cones $ \mathcal C (z; p', h_i) $ and $ \mathcal C (z; h_i, q'). $ We then update the list $\mathcal L(F')$ for the face $F'.$

\noindent For each new edge-event or vertex-event $h_i$ of $F,$ if $h_i$ does not belong to $ \mathcal Q $ we insert $h_i$ in the priority queue $\mathcal Q $ among its distance $d(x, h_i)$.

\textbf{Case 2}: $h$ is a vertex-event. Let $\mathcal C (z; p, h)$ and $\mathcal C (y; h, q)$ be the two partial cones of $\mathcal C$ sharing the common side $ \gamma (z, h),$ such that $z \in \gamma(y, h).$ We denote by $F_j$ (with $j> 1$), the faces incident to $ h $ crossed by the sweeping line and $F'_j$ (with $j> 1$), the faces incident to $ h $ but not crossed yet by $C(r)$.

Knowing the measurements of the angles of origin $h$ between the geodesic $ \gamma(z, h) $ built previously and the incident edges of $h$ we can calculate in linear time with respect to $deg(h)$ the value of $ \theta(h)$ with the origin in $h.$
If $\theta(h) = 2 \pi,$ using Lemma \ref{constr-demidroite}, we launch a geodesic ray $\gamma (z, h)$ inside the faces incident to $h$. We replace in $ \mathcal C $ the initial cones $ \mathcal C(z;p,h)$ and $ \mathcal C (y; h, q) $ by the cones $ \mathcal C (z; p', h')$ and $\mathcal C (y; h', q'),$ where $\gamma(z, p'),$ $\gamma(y, q')$ and $\gamma(z, h')$ are the geodesics constructed inside faces $F'_j$.
\begin{figure}[h]\centering
   \includegraphics[width=0.7\textwidth]{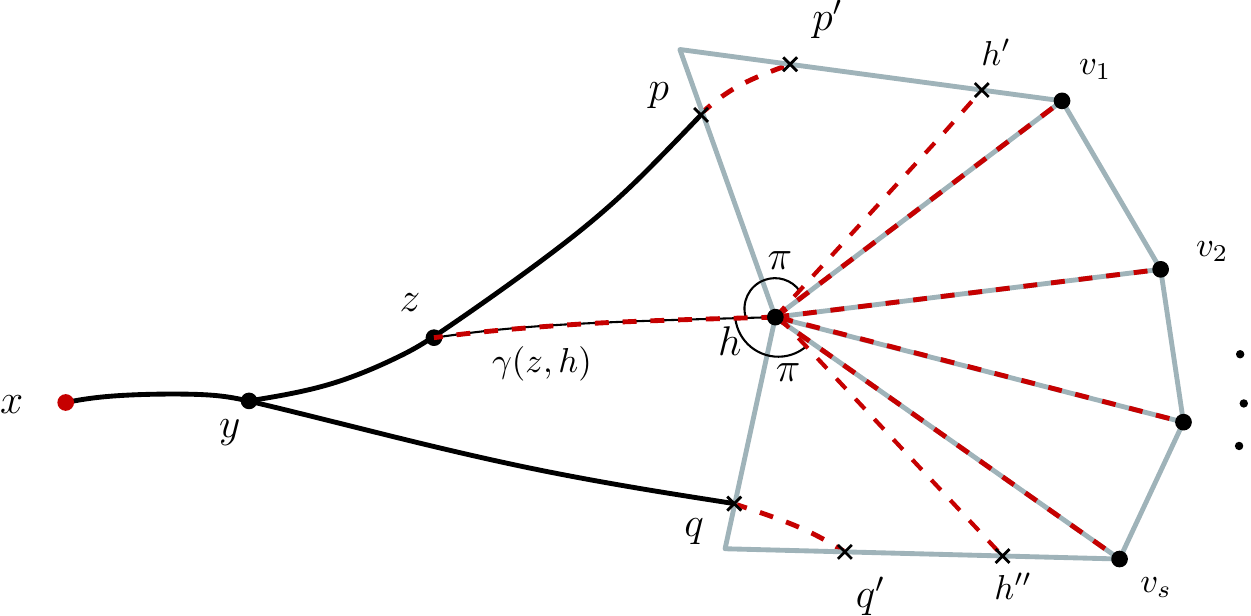}
 \caption{The case where $\theta(h)>2\pi$.} \label{des_general}
\end{figure}

Let $h$ be an negative curvature vertex of $\mathcal K$, such that $2l\pi <\theta(h)<2(l +1)\pi$ where $l\geq 1$ (see Fig. \ref{des_general}) . In this case, we launch a geodesic ray $r(z, h)$ inside the faces $F'_j$ by constructing two geodesics $\gamma(h, h')$ and $\gamma (h, h'')$ such that the angles $\angle_h (z, h')$ and $\angle_h (z, h'')$ are equal to $\pi$. We replace in $\mathcal C$ the cones $\mathcal C(z; p, h)$ and $\mathcal C(y; h, q)$ by the cones $\mathcal C(z;p', h')$ and $\mathcal C(y; h'', q').$ For each face $F'_j$, we update the list of quadrilaterals $\mathcal L(F'_j).$

As $\theta(h)>2l\pi,$ the angle $\angle_h(h', h'')$ is greater than $2(l-1)\pi$. Therefore, there exists at least one vertex $v$ of $F'_j$ such that $\angle_h (z, v)> \pi.$ Let $s$ be the number of vertices $v_t$  of $F'_j$ adjacent to $h$, such as $ \angle_h(z,v_t)> \pi$.
To ensure that every angle corresponding to a cone of apex $h$ is less than $\pi,$ we construct exactly $s$ geodesics between $h$ and the vertices $v_t,$ $t = 1,\ldots,s$ of $F'_j.$ Since the angles inside a face of $\mathcal K$ are strictly less than $\pi,$ $s$ is such that $l\leq s \leq deg(h)$.
We can then update the list of partial cones $\mathcal C$ by adding the partial cones $\mathcal C(h; v_t, v_{t +1}),$ $ 1\leq t \leq s-1$ of apex $h$ and the sides $\gamma (h, v_t), \gamma(h, v_{t +1})$.
Moreover, the cones $\mathcal C (z; p, h)$ and $\mathcal C (y; h, q)$ are replaced in $\mathcal C$ by the cones $\mathcal C (z; p', v_1)$ and $\mathcal C (y; v_s, q'),$ where $p', q'$ belong to the boundary of $F'_j$.
For each face $F'_j$ ($j> 1$), we update the lists $\mathcal L(F'_j)$ of quadrilaterals obtained as intersections of partial cones with $F'_j.$
Using Lemma \ref{constr-cerc} (where $h$ is a vertex of $\mathcal K$), we determine all the new events $h_i$ by constructing circular arcs $ C(z, d(z, h) + r)$ inside $F'_j \cap \mathcal C(z; p', v_1)$, constructing circular arcs $C(y, d (y, h) + r)$ inside $F'_j \cap \mathcal C (y; v_s, q')$ and constructing circular arcs $C(h, d (h, r))$ inside $ F'_j \cap \mathcal C(h; v_t, v_{t +1}) $ for $t = 1, \ldots, s-1$.

To each event $h_i$ of a cone $\mathcal C(z; p', v_1)$ or $\mathcal C(y;v_s,q)$ we associate the distance $d(x, h_i)$. Knowing the distances $d(x,h)$ and $d(h,h_i),$ by Lemma \ref{constr-gamma} we can calculate the distance $d(x,h_i)$.
To each event $h_i$ of the cone $\mathcal C(h; v_t, v_{t +1})$ (for $t = 1,\ldots,s-1$), we associate the distance $d(x,h_i) = d(x,h) + d(h,h_i)$ where $d(h, h_i)$ is calculated using the Euclidean metric inside the triangular face containing both $h$ and $h_i$.

For each vertex-event $h_i$, we construct the geodesic connecting $h_i$ and $x$ as follows. If $h_i$ is a vertex of one of the cones $\mathcal C(z; p',v_1)$ or $ \mathcal C(y;v_s,q'),$ then using Lemma \ref{constr-gamma}, we construct the geodesic $\gamma(y,h_i).$ We associate to $ h_i$ the two angles between $\gamma(y, h_i)$ and the edges incident to $h_i$ in $F'_j$. \\
If $h_i$ is a vertex inside a cone $\mathcal C(h; v_t, v_{t +1})$ ($1\leq t \leq s-1$), then $h_i$ coincides with one of vertices $v_t$ or $v_{t+1}.$ Since the geodesics $\gamma(h, v_t)$ and $\gamma(h, v_{t +1})$ are edges of $\mathcal K$, it remains to associate to the event $h_i$ the two angles between $\gamma(h,h_i)$ and the edges incident to $h_i$ in $F'_j.$

We insert new events $h_i$ in the priority queue $\mathcal Q$ according to their distances $d(x, h_i)$.
The $k$ step is repeated until all the faces of the complex are covered by the sweeping line. \\
We give below a brief and informal description of the algorithm.

\begin{center}
\framebox{
\parbox{15cm}{
\vspace{0.05cm}
\noindent{\bf Algorithm} {\sc Computing SPM($x$)}\\
%\vspace*{0.05cm}
{\footnotesize
 \noindent {\bf Input:} a CAT(0) planar complex ${\mathcal K},$ a point $x\in {\mathcal K}$ and a data structure $D_s$\\
%\smallskip
 \noindent {\bf Output:} Shortest Path Map SPM($x$) of root $x$\\
\rule{\linewidth}{.7pt}
%=========================
\textbf{Initial Step} \\
%=========================
 \hspace*{0.1cm} \textbf{for} every face $F$ incident to $x$ \textbf{do} \\
 \hspace*{0.7cm} find all the events $h_i$ and the distances $d(x,h_i)$ \\
 \hspace*{0.7cm} initialize the list $\mathcal L(F)$ \\
 \hspace*{0.1cm} \textbf{end} \\
 \hspace*{0.1cm} insert the events $h_i$ in the priority queue $\mathcal Q$ according to $d(x,h_i)$ \\
 \hspace*{0.1cm} initialize the list of partial cones $\mathcal C$ \\
%\vspace*{0.05cm}
%=========================
\textbf{Iterative Step} \\
%=========================
 \hspace*{0.1cm} \textbf{while} $\mathcal Q \neq \emptyset$ \textbf{do}\\
 \hspace*{0.7cm} extract the top event of $\mathcal Q$ \\
 \hspace*{0.7cm} \textbf{if} $h$ is a vertex of $\mathcal K$ \textbf{then}\\
 \hspace*{1.4cm} determine all the cones $\mathcal C(z;p,h), \mathcal C(y;h,q)\in \mathcal C$ which contain $h$ \\
 \hspace*{1.4cm} {\sc Vertex-Event($h,$ $\mathcal C(z;p,h), \mathcal C(y;h,q)$)} \\
 \hspace*{0.7cm} \textbf{else} \\
 \hspace*{1.4cm} determine the partial cone $\mathcal C(z;p,q)\in \mathcal C$ containing $h$ \\
 \hspace*{1.4cm} {\sc Edge-Event($h$, $\mathcal C(z;p,q)$)} \\
 \hspace*{0.7cm} \textbf{end} \\
 \hspace*{0.1cm} \textbf{end} \\
 \hspace*{0.1cm} return SPM($x$) as the dynamic data substructure
}}}
\end{center}
\smallskip

\begin{center}
\framebox{
\parbox{15cm}{
\vspace{0.05cm}
\noindent{\sc Vertex-Event($h,$ $\mathcal C(z;p,h), \mathcal C(y;h,q)$)} \\
{\footnotesize
 \hspace*{0.1cm} calculate $\theta(h)$  \\
 \hspace*{0.1cm} \textbf{if} $\theta(h)=2\pi$ \textbf{then} \\
 \hspace*{0.8cm} launch a geodesic ray $r(z,h)$ inside one of the faces incident to $h$  \\
 \hspace*{0.1cm} \textbf{else} \\
 \hspace*{0.8cm} launch the geodesic rays $r(z,h)$ inside faces incident to $h,$ where $2l\pi<\displaystyle\theta(h)<2(l+1)\pi$  \\
 \hspace*{0.1cm} \textbf{end} \\
 \hspace*{0.1cm} update the list of partial cones $\mathcal C$ \\ % = (\mathcal C\setminus \mathcal C(z;p',q')) \cup \mathcal C(z;p',s_{l+2}) \cup \ldots \cup \mathcal C(z;s_1,q')$ \\
 \hspace*{0.1cm} \textbf{for every} face $F'_j$ incident to $h$ \textbf{do}\\
 \hspace*{0.8cm}  update the list $\mathcal L(F'_j)$ \\
 \hspace*{0.8cm} \textbf{for every} cone $\mathcal C(i;j,j')$ such that $h\in\mathcal C(i;j,j')$ and $\mathcal C(i;j,j')\cap F'_j\neq\emptyset$ \textbf{do}\\
 \hspace*{1.6cm} {\sc Find-New-Event($h,$ $F'_j,$ $\mathcal C(i;j,j')$)} \\
 \hspace*{0.8cm} \textbf{end} \\
 \hspace*{0.1cm} \textbf{end} \\
}}}
\end{center}
\smallskip

\begin{center}
\framebox{
\parbox{15cm}{
\vspace{0.05cm}
\noindent{\sc Edge-Event($h$, $\mathcal C(z;p,q)$)} \\
{\footnotesize
 \hspace*{0.1cm} inside the face $F'$ incident to $h,$ extend the sides of $\mathcal C(z;p,q)$ \\
 \hspace*{0.1cm} update the list of partial cones $\mathcal C$ \\
 \hspace*{0.1cm} update the list $\mathcal L(F)$ \\
 \hspace*{0.1cm} {\sc Find-New-Event($h,$ $F',$ $\mathcal C(z;p,q)$)} \\
}}}
\end{center}
\smallskip

\begin{center}
\framebox{
\parbox{15cm}{
\vspace{0.05cm}
\noindent{\sc Find-New-Event($h,$ $F,$ $\mathcal C(z;p,q)$)} \\
{\footnotesize
 \hspace*{0.1cm} find inside $F\cap \mathcal C(z;p,q)$ new events $h_i$ \\
 \hspace*{0.1cm} \textbf{for every} found event $h_i$ \textbf{do} \\
 \hspace*{0.8cm} calculate the distance $d(x,h_i)$ \\
 \hspace*{0.8cm} insert the event $h_i$ in the priority queue $\mathcal Q$ according to $d(x,h_i)$ \\
 \hspace*{0.8cm} \textbf{if} $h_i$ is a vertex of $\mathcal K$ \textbf{then}\\
 \hspace*{1.4cm} construct $\gamma(z,h_i)$ \\
 \hspace*{1.4cm} associate to $h_i$ the measurements of the angles formed between $\gamma(z,h_i)$\\
 \hspace*{1.6cm} and the edges incident to $h_i$ inside $F$\\
 \hspace*{0.8cm} \textbf{end} \\
 \hspace*{0.1cm} \textbf{end}
}}}
\end{center}

We will study the running time of the presented algorithm. First we formulate the following lemmas.
\begin{lemma}\label{liniar1}
The shortest path map SPM($x$) of CAT(0) planar complex $\mathcal K$ with $n$ vertices contains $O(n)$ cones.
\end{lemma}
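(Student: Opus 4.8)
The plan is to bound the number of cones by a charging argument that attributes every cone to a vertex of $\mathcal{K}$, and then to invoke planarity of the $1$-skeleton. First I would recall, from the definition of $\mathrm{SPM}(x)$ together with Proposition \ref{pr-spm}, that the apex of every cone is either the source $x$ or a vertex of negative curvature: by case (1)(a) a vertex $y$ with $\theta(y)=2\pi$ always lies on a common side $\gamma(z,p)$ of two cones and is therefore never an apex, whereas by case (1)(b) each negative-curvature vertex is the apex of at least one cone. Since a genuine apex is a point at which two sides $\gamma(x,p)$ and $\gamma(x,q)$ diverge, and such branching requires angular excess, no other point can be an apex. Moreover, because geodesics in a CAT(0) space are unique, each vertex $v$ of $\mathcal{K}$ is met by exactly one geodesic $\gamma(x,v)$, so $v$ triggers exactly one event in the construction of $\mathrm{SPM}(x)$.

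Next I would count the cones through the events that create them. The sweep starts with $O(1)$ cones incident to $x$ (three if $x$ lies in the interior of a face, at most $\deg(x)$ if $x$ is a vertex). Every further cone is produced while processing a vertex of $\mathcal{K}$. When $\gamma(x,v)$ reaches a flat vertex $v$ with $\theta(v)=2\pi$, the geodesic continues straight and merely splits the cone containing $v$ into two cones with the same apex, creating one new cone. When it reaches a negative-curvature vertex $v$ with $2l\pi<\theta(v)<2(l+1)\pi$, the vertex becomes a new apex and, exactly as in the vertex-event step of the algorithm, the forward angular sector at $v$ is cut into cones $\mathcal{C}(v;v_t,v_{t+1})$ by at most $s\le\deg(v)$ geodesic rays aimed at vertices adjacent to $v$; this creates at most $\deg(v)$ cones with apex $v$.

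Summing these contributions gives
\[
\#\{\text{cones of }\mathrm{SPM}(x)\}\ \le\ O(1)+\sum_{\theta(v)=2\pi}1+\sum_{\theta(v)>2\pi}\deg(v)\ \le\ O(1)+n+\sum_{v\in V(\mathcal{K})}\deg(v).
\]
Since the $1$-skeleton of $\mathcal{K}$ is planar, $\sum_{v}\deg(v)=2|E(\mathcal{K})|=O(n)$, and the bound of $O(n)$ cones follows. The step I expect to be the main obstacle is the local claim that a negative-curvature vertex $v$ can be the apex of only $O(\deg(v))$ cones: one must argue that the dividing sides leave $v$ into pairwise distinct faces incident to $v$ (equivalently, that the number $s$ of rays built at $v$ satisfies $s\le\deg(v)$). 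This rests on Proposition \ref{pr-spm}(ii), which guarantees that each cone has apex angle strictly less than $\pi$, and on Lemma \ref{plonj-con}, which ensures that each such cone unfolds to a genuine (acute) triangle and hence occupies a nondegenerate wedge anchored in an incident face. Once this local bound is secured, the global estimate and the appeal to planarity are routine.
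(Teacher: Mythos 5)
Your proof is correct and takes essentially the same route as the paper: both arguments charge each cone to its apex vertex, bound the number of cones with apex $v$ by $O(\deg(v))$ (the paper via ``at most one cone per face incident to $v$,'' you via the at most $\deg(v)$ splitting rays built at a negative-curvature vertex), and conclude from $\sum_{v}\deg(v)=2|E(\mathcal K)|=O(n)$ by Euler's formula for planar graphs. The only cosmetic difference is that you count cones incrementally through the sweep events while the paper counts them statically per apex; the key local bound and the appeal to planarity are identical.
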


\begin{proof}
By Euler's formula, the number of faces $f$ and the number of edges $m$ of the complex $\mathcal K$ with $n$ vertices are such that $f \leq 2n-4$ and $m \leq 3n-6$.

We affirm that every vertex $z$ of $\mathcal K$ is the apex of a linear number $O(deg (z))$ of cones in SPM($x$).
Indeed, since the angle of any vertex $v$ inside a face $F$ of $\mathcal K$ is strictly less than $\pi$ then $v$ can be the apex of at most one cone inside $F.$
Moreover, since the number of faces of $\mathcal K$ incident to $v$ is $O(deg(z)),$ then $v$ can define a $O(deg(v))$ number of cones of SPM($x$).
Now, using the graph property $\sum deg(h) = 2m$ and $m \leq 3n-6$, we can deduce that the total number of cones of SPM($x$) is of order $O(6n).$
Therefore, SPM($x$) contains $O(n)$ cones.
\hfill $\Box$
\end{proof}

\begin{lemma}\label{liniar2}
Given the shortest path map SPM($x$) inside a CAT(0) planar complex $\mathcal K$ with $n$ vertices, a cone of SPM($x$) can pass via $O(n)$ faces $\mathcal K$.
\end{lemma}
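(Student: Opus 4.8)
The plan is to bound the number of \emph{distinct} faces met by a fixed cone $\mathcal C(z;p,q)$ of SPM($x$), using two ingredients: that the cone meets each face in a single connected piece, and that $\mathcal K$ has only $O(n)$ faces in total. First I would reduce the problem to counting faces, observing that since the geodesic sides of $\mathcal C(z;p,q)$ are finite convex chains (Lemma \ref{constr-demidroite}), the cone meets only finitely many faces, and its intersection with each face is, as already noted, at most a quadrilateral.

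The key step is to show that $\mathcal C(z;p,q)$ cannot re-enter a face it has already left, i.e. that $\mathcal C(z;p,q)\cap F$ is connected for every face $F$. For this I would first argue that each face $F$ is itself a convex subset of $\mathcal K$: given $a,b\in F$, the Euclidean segment $[a,b]$ inside the triangle $F$ is a local geodesic of $\mathcal K$, and since $\mathcal K$ is CAT(0) and hence uniquely geodesic, every local geodesic is the (unique) global geodesic $\gamma(a,b)$; thus $\gamma(a,b)\subseteq F$ and $F$ is convex. The cone $\mathcal C(z;p,q)$ is convex by Proposition \ref{pr-spm}(iii). In a uniquely geodesic space the intersection of two convex sets is again convex (the geodesic joining two common points lies in both sets), so $\mathcal C(z;p,q)\cap F$ is convex, hence connected: the cone passes through $F$ at most once, contributing a single quadrilateral.

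It then remains to count the faces. Exactly as in the proof of Lemma \ref{liniar1}, Euler's formula gives $f\le 2n-4$ for the number of faces $f$ of $\mathcal K$. Since $\mathcal C(z;p,q)$ meets each face in at most one piece, the number of faces it passes through is at most $f\le 2n-4$, that is, $O(n)$.

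The main obstacle is the connectedness claim of the second step: a priori a convex region in the complex could conceivably wrap around a vertex of negative curvature and meet a distant face a second time. This is exactly what the convexity of $\mathcal C(z;p,q)$ together with the convexity of each individual face rules out, and it is the only place where the CAT(0) hypothesis (through uniqueness of geodesics and the local-to-global property) is genuinely used; once it is established, the bound is immediate from Euler's formula.
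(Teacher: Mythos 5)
Your proposal is correct and follows essentially the same route as the paper's proof: convexity of each triangular face together with convexity of the cone (Proposition \ref{pr-spm}) rules out re-entering a face, and Euler's formula then bounds the number of faces by $O(n)$. The only difference is that you supply details the paper leaves implicit (justifying face convexity via the local-to-global property of geodesics in CAT(0) spaces, and noting that intersections of convex sets are convex), which strengthens rather than changes the argument.
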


\begin{proof}
Let $\mathcal C(z; p, q)$ be a cone of SPM($x$). A triangular face of $\mathcal K$ is a convex set, and by (ii) of Lemma \ref{pr-spm}, each cone of SPM($x$) is a convex set inside $\mathcal K.$ Thus, $\mathcal C(z; p, q)$ can cross only once an edge of $\mathcal K$.

Since the number of faces of $\mathcal K$ is of order $O(n)$, this implies that a cone $\mathcal C(z; p, q)$ of SPM($x$) can pass via at most $O(n)$ of faces in $\mathcal K.$ \hfill $\Box$
\end{proof}

The following theorem describes the running time of the algorithm and the space used for computing the shortest path map inside a CAT(0) planar complex.

\begin{theorem} \label{compl}
Given a CAT(0) planar complex $\mathcal K$ with $n$ vertices and a point $x$ of $\mathcal K$, it is possible to construct the shortest path map SPM($x$) of $\mathcal K $ in $O(n^2\log n)$ time and $O(n^2)$ space.
\end{theorem}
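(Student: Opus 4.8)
The plan is to analyze the sweep algorithm of the previous subsection by treating space and time separately, leaning on the two counting lemmas (Lemmas \ref{liniar1} and \ref{liniar2}) to control the total combinatorial size of SPM($x$), and on the local construction lemmas (Lemmas \ref{constr-demidroite}, \ref{constr-cerc}, and \ref{constr-gamma}) to argue that each elementary step of the algorithm costs only $O(1)$ geometric work.

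For the space bound I would first observe that the dynamic substructure stores, for every face $F$ of $\mathcal K$, the list $\mathcal L(F)$ of quadrilaterals arising as intersections $F\cap\mathcal C(z;p,q)$ of $F$ with the partial cones. By Lemma \ref{liniar1} the map SPM($x$) has $O(n)$ cones, and by Lemma \ref{liniar2} each cone, being convex, crosses each edge of $\mathcal K$ at most once and hence meets only $O(n)$ faces, contributing one quadrilateral to each. Thus the total number of quadrilaterals over all faces is $O(n)\cdot O(n)=O(n^2)$; together with the $O(n)$-size static planar map and the priority queue (whose total content is bounded by the number of events, below), the data structure has size $O(n^2)$.

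For the time bound the key quantity is the total number of events. Every sweep event is a crossing of $C(r)$ from one face to another inside some cone, so each incidence of a cone with a face generates only $O(1)$ events; by the $O(n^2)$ incidence count above there are $O(n^2)$ events in total, each inserted into and extracted from $\mathcal Q$ exactly once. Since $|\mathcal Q|=O(n^2)$, every queue operation costs $O(\log n^2)=O(\log n)$. Processing one event requires locating, by a binary search in the sorted list $\mathcal L(F)$ of length $O(n)$, the quadrilateral and hence the partial cone containing the event, which is $O(\log n)$; all remaining operations---constructing the maximal circular arc and the next event via Lemma \ref{constr-cerc}, computing $d(x,h_i)$ and the local geodesic extension $\gamma(z,h_i)$ via Lemma \ref{constr-gamma}, and extending the cone sides into the neighbouring face via the ray shooting of Lemma \ref{constr-demidroite}---are $O(1)$ per event, because $\gamma(x,z)$ is already available and only the local segment inside the current face must be appended. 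Summing $O(\log n)$ over $O(n^2)$ events yields the claimed $O(n^2\log n)$ running time.

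The step I expect to require the most care is the treatment of vertex-events at negative-curvature vertices. At such a vertex $h$ with $2l\pi<\theta(h)<2(l+1)\pi$ the algorithm may spawn up to $s$ new cones of apex $h$, and one must verify that this does not blow up the total work: since a vertex is the apex of at most one cone inside each incident face, $s\le\deg(h)$, and $\sum_h\deg(h)=2m=O(n)$, so the total number of cones created at all vertices is $O(n)$, consistent with Lemma \ref{liniar1} and absorbed into the $O(n^2)$ event budget. The second delicate point is to confirm that the geodesic constructions are genuinely $O(1)$ amortized rather than retraced from $x$ at each step; this holds because the sides $\gamma(x,p')$ and $\gamma(x,q')$ of a partial cone are maintained incrementally, so advancing the sweep by one face only appends a constant-size piece, and the total cost of all geodesic extensions is thus bounded by the $O(n^2)$ cone--face incidences.
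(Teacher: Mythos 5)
Your proposal is correct and follows essentially the same route as the paper: the same count of $O(n)$ cones (Lemma \ref{liniar1}) times $O(n)$ faces per cone (Lemma \ref{liniar2}) for the $O(n^2)$ space bound, the same $O(n^2)$ bound on sweep events, and $O(\log n)$ queue/binary-search cost per event, yielding $O(n^2\log n)$. The only difference is in bookkeeping: where the paper crudely charges each of the $n$ vertex-events $O(n\log n)$, you amortize the vertex work over $\sum_h \deg(h)=O(n)$, a tighter but equivalent accounting since the edge-events dominate either way.
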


\begin{proof}
The size of the data structure is defined by the sizes of the two data substructures (static and dynamic).
The static data substructure contains the planar map of the $\mathcal K$ complex. This representation allows us to use a linear-sized data space in the number of vertices of $\mathcal K $ \cite{BeChKrOv}.
Regarding the dynamic data substructure, it retains for any face $F$ of $\mathcal K$, the intersection of $F$ with cones of SPM($ x $).

By Euler's formula, the number of faces $f$ and the number of edges $m$ of $ \mathcal K $ with $n$ vertices, are such that $f \leq 2n-4$ and $ m \leq 3n-6.$
% That is to say, the complex $ \ mathcal K $ with $ n $ vertices has a number $ O (n) $ of faces.
We claim that every vertex $z$ of $\mathcal K$ is the apex of $O(deg (z))$ cones of SPM($ x $).
Indeed, since the angle with the origin in any vertex $v$ inside a face $F$ of $\mathcal K $ is strictly less than $\pi$ then $v$ is the apex of at most one cone inside $F.$

From the proof of the previous lemma, we established that SPM($x$) contains $O(n)$ cones.
Since the number of faces of $\mathcal K$ is of order $O(n)$ we say that the dynamic data substructure uses $O(n^2)$ space.

Let us consider the example shown in Figure \ref{des-quadratic}, for which the size of the data structure is $O(n^2)$. In this complex, $n/2$ vertices are located on three geodesic paths passing via $x$ and the other $n/2$ vertices are located on the boundary of $\mathcal K$ and do not belong to these three geodesics. Therefore, the vertices located on the boundary of $\mathcal K$ contribute to the formation of the $O(n/2)$ cones in SPM($x$). The other $n/2$ vertices contribute to the formation of only three cones. In this case, a cone $\mathcal C(z; p, q)$ of SPM($x$) intersects $O(n/2)$ number of edges of $\mathcal K.$ Adding, we obtain that all the $O(n/2 +3)$ cones of SPM($x$) intersect each $O(n/2)$ edges of $\mathcal K.$ This shows that the size of the data structure is quadratic with respect to the number of vertices of $\mathcal K.$

\begin{figure}[h]\centering
\includegraphics[width=0.39\textwidth]{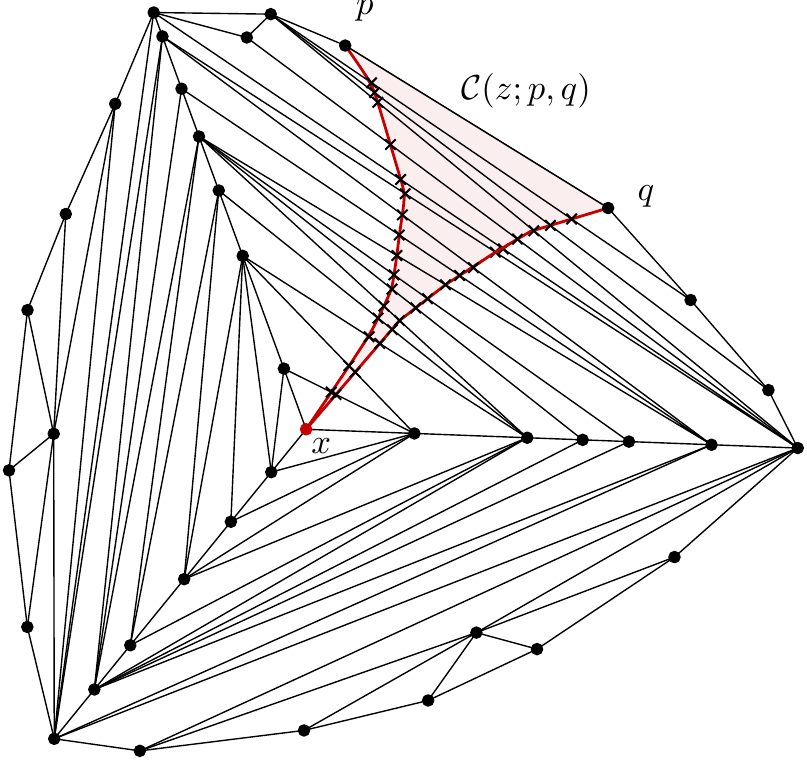}
\caption{A cone of SPM($x$) cutting $O(n)$ edges of $\mathcal K$.}\label{des-quadratic}
\end{figure}

In order to estimate the running time of the algorithm, we calculate the execution time of each step of the algorithm.
We recall that at each step of the algorithm, an event is extracted from the priority queue $\mathcal Q$ to be processed according to its type. The algorithm terminates when $Q$ is empty. Note that an event is crossed by the sweeping line only once. Therefore, the number of steps of the algorithm is equal to the number of sweep events.
By definition of a vertex-event, sweeping line crosses $n$ such events. Regarding edge-events, by Lemma \ref{liniar1}, an edge can be crossed by $O(n)$ cones, each cone creating an edge-events. By Euler's formula, the complex contains at most $3n-6$ edges. The sweeping line crosses therefore $O(n^2)$ edge-events.

Next we establish the running time of each step of the algorithm. For this, we analyze separately the processing of a vertex-event and an edge-event. For each event of the sweep in order to maintain the event order in the priority queue the algorithm takes $O(\log n)$ time.
Processing of a vertex-event $h$ consists of transmitting the information of the face covered by the sweeping line to the next adjacent face (where $h$ belongs to the common edge) which is performed in constant time, and finding new events and include them in the priority queue which is performed in $O(\log n)$ time.\\
The processing of the vertex-event $h$ consists in building at most $deg(h)$ new cones with origin at $h,$ and passing the information for sweeping from the previous face to the face incident to $h$ and finding new events inside this face. Therefore, the running time for processing a vertex-event is $O(n\log n)$.

In summary, the algorithm performs $O(n^2)$ processing steps of edge-events, performed in $O(\log n)$ time and $O(n)$ processing steps of vertex-events, where each step is performed in $O(n\log n)$ time. \\
The running time of the algorithm is thus $O(n^2\log n)$ with respect to the number of vertices of $\mathcal K$. \hfill $\Box$
\end{proof}

\section{One-point shortest path queries}

In this section, we present the detailed description of the algorithm for answering one-point shortest path queries in CAT(0) planar complexes and of the data structure $\mathcal D$ used in this algorithm. The algorithm is linear with respect to the number of vertices of the complex and uses the same data structure as in computing the SPM($x$).
Given a CAT(0) planar complex $\mathcal K$, a point $x\in \mathcal K$ and the shortest path map SPM($x$), for every query point $y\in\mathcal K$ first we determine the face containing $y$. Let $F$ be the face containing $y$. Using the data structure, we locate $y$ in a cone $\mathcal C(z; p,q)$ of SPM($x$) which crosses $F$. We show then how to embed isometrically $\mathcal C(z;p,q)$ in the plane as an acute triangle. Let $f$ be the isometric unfolding of the cone in $\mathbb R^2.$ Then the preimage of the shortest path $\gamma(f(x),f(y))$ between the images of $x$ and $y$, is exactly the shortest path between $x$ and $y$ in $\mathcal K$ (see Fig. \ref{des_pcc}).

\begin{figure}[h]\centering
\includegraphics[width=0.55\textwidth]{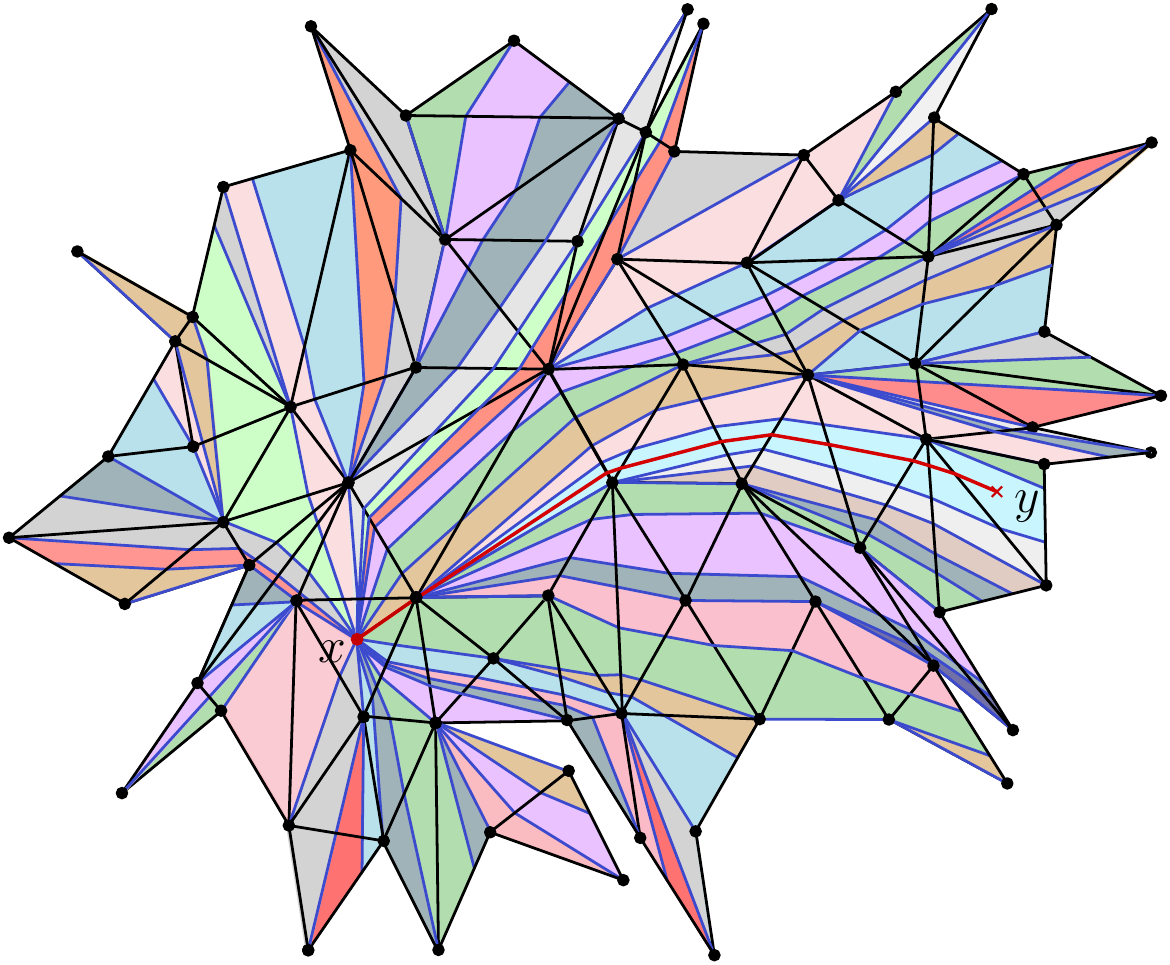}
\caption{Shortest path between $x$ and a point $y$ of $\mathcal K$.} \label{des_pcc}
\end{figure}

\subsection{Computing the shortest path}

Let $y$ be an arbitrary point of $\mathcal K.$ In order to construct the shortest path $\gamma(x, y),$ first, we determine the cone $\mathcal C (z; p^*, q^*)$ of SPM($x$) containing $y$. By Lemma \ref{plonj-con}, this cone can be unfolded in the Euclidean plane in the form of an acute triangle denoted by $\mathcal T$. Our algorithm constructs the shortest path $\gamma(y, z)$ inside $\mathcal K$ as the preimage of the shortest path $\gamma(f(y), f(z))$ inside $\mathcal T,$ where $f$ is the unfolding of $\mathcal C (z; p^*, q^*) $ in the plane. \\
We describe later the steps of the algorithm in a more detailed way.

\subsubsection{Locating $y$ inside a cone of SPM($x$)}

Suppose first that $y$ is an inner point of a triangular face $F = \Delta(a, b, c)$ of $\mathcal K.$ Using the data structure described in the previous section and the local coordinates of the point $y$ inside $F,$ the algorithm determines by a binary search the cone of SPM($x$) which intersects $F$ and contains $y$. We reconstruct the quadrilateral using the given coordinates of the intersection points of $F$ with all cones of SPM($x$). Since all cones of SPM($x$) are convex sets, they intersect the edges of a face of $\mathcal K$ only once. Therefore, we have two possible cases: either the sides of cones ''enter'' via a single edge of $F$ and ''go out'' via the other two edges of $F$, %(see Fig. \ref{intersections} (a)),
or the the sides of cones ''enter'' via two edges of $F$, and ''go out'' via a single edge of $F$. %(see Fig. \ref{intersections} (b)).

%\begin{figure}[h]\centering
%\includegraphics[width=0.8\textwidth]{face2.pdf}
%\caption{Intersections of faces of $\mathcal K$ with cones of SPM($x$).}\label{intersections}
%\end{figure}
Thus $F$ is divided into a finite number of quadrilaterals denoted by $Q_i$, $i\in I\subset \mathbb{N}$.
Once the sides of quadrilaterals inside $F$ are constructed, the algorithms locates $y$ in one of them using a binary search among the sides of $Q_i$. The quadrilateral $Q^*$ containing $y$ is defined as the quadrilateral for whom $y$ is located at left with respect to one side and at right to respect to the other side of the quadrilateral.
Using the static data structure $D_s$ we can identify the cone $\mathcal C(z; p^*,q^*)$ of SPM($x$) containing the quadrilateral $Q^*$.

\subsubsection{Reconstruction and unfolding of cones}

%Recall that a function $f:X\rightarrow X'$ between two metric spaces $(X, d)$ and $(X', d')$ is an \emph{isometric embedding} of $X$ in $X'$ where $d'(f(x),f(y)) = d(x, y)$ for all $x,y \in X.$ In this case $Y: = f(X)$ is said \emph{(isometric) subspace} of $X'.$
%\textcolor[rgb]{1.00,0.00,0.00}{A metric space $(X,d)$ is said to be \emph{Menger-convex}, if for all distinct points $x$ and $y$ in $X$, there exists a point %$z$ that lies between them: $d(x,z)+d(z,y)=d(x,y).$ \\
%}If an application $f: X \mapsto X'$ between two Menger-convex metric spaces $(X, d)$ and $(X', d'),$ is such that $f(X)$ is Menger-convex and compact, then $f(X)$ is called the \emph{unfolding} divides the complex into two areas equivalent to a disk of $X$ in $X'$ if $f$ is the isometric embedding of $(X, d)$ in $(f(X), d^*)$, where $d^*$ is the intrinsic metric $f(X)$ induced by $d'$.
%\medskip

Let $\mathcal C(z; p^*,q^*)$ and $F = \Delta(a, b, c)$ be respectively the cone of SPM($x$) and the triangular face of $\mathcal K$ containing $y$. Using the static data structure $D_s$, we can efficiently retrieve the segments forming the sides of the cone from the lists of intersections points of edges in $\mathcal K$ with the sides of the cone $\mathcal C(z; p^*,q^*)$. In the same way, we calculate the angle inside the cone whose apex is $z$.

\begin{figure}[h]\centering
\includegraphics[width=0.8\textwidth]{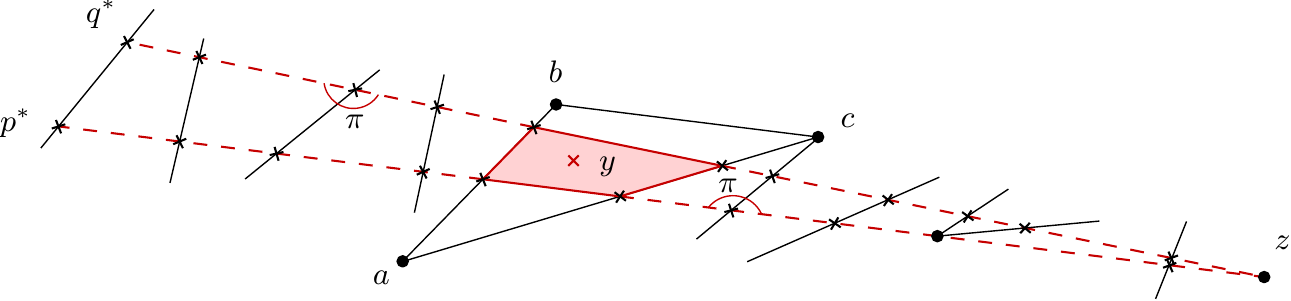}
\caption{Reconstruction of the cone in the plane.}\label{recon-con}
\end{figure}

Let $f$ be the unfolding of $\mathcal C(z; p^*,q^*)$ in the plane.
Once the sweeping line passed through the sequence of adjacent faces $F_1, F_2, \ldots, F_k$ of $\mathcal K$ starting with $F,$ it is possible to recover all the segments forming the sides of the cone $\mathcal C(z; p^*,q^*)$.
Then we build the images of these segments in the plane, starting with the point $f(z)$ and joining them in such a way that the obtained complementary angles between two connected segments are equal to $\pi$ (see Fig. \ref{recon-con}). Moreover, we build segments belonging to the edges of $\mathcal K$ contained in the cone $\mathcal C(z;p^*,q^*)$.

\begin{figure}[h]\centering
\includegraphics[width=0.85\textwidth]{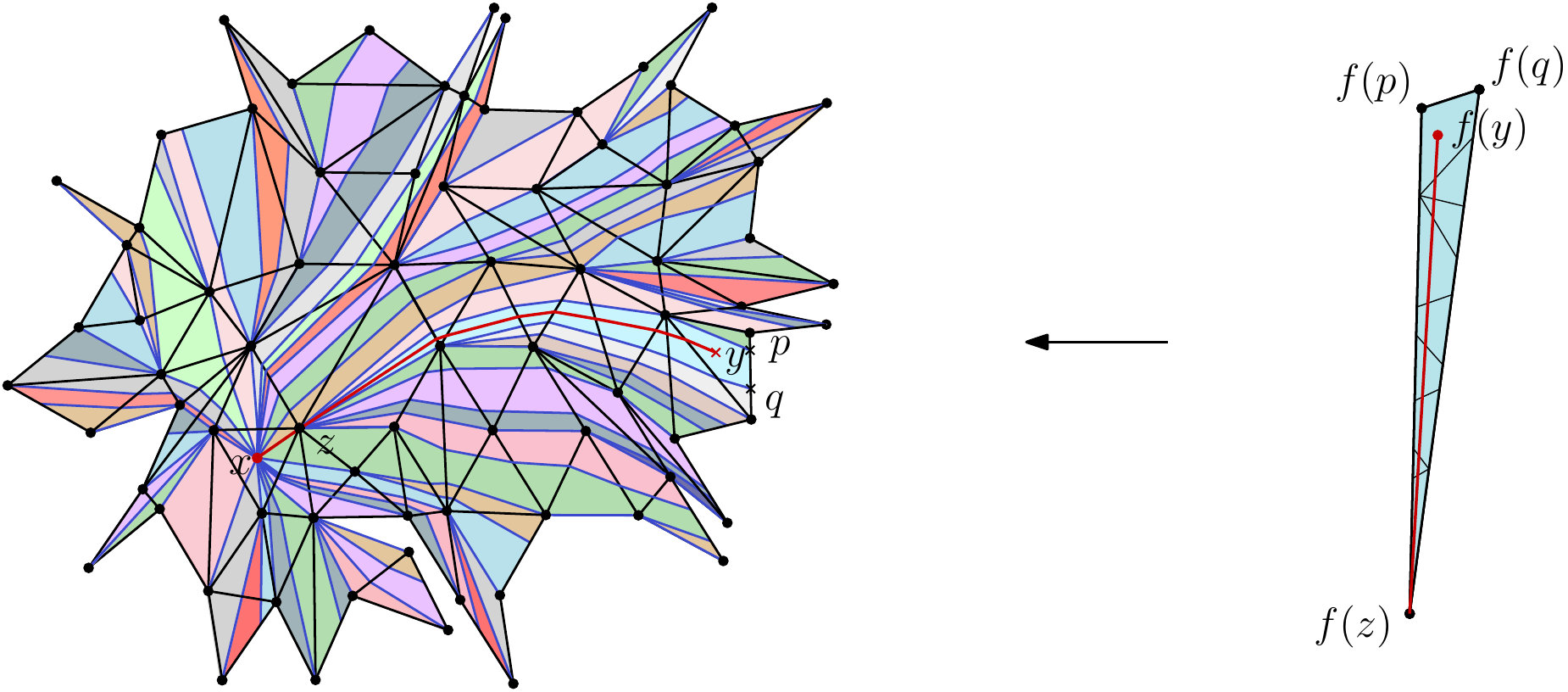}
\caption{Embedding a cone in the plane.}\label{des-depliage}
\end{figure}
The images of the segments forming the cone $\mathcal C(z; p^*,q^*)$ are pairwise concatenated so that the formed angles on one side and the other are equal to $\pi$. Moreover, the images of the segments forming the angle of origin $z$ inside $\mathcal C(z; p^*,q^*)$ which is less than or equal to $\pi,$ form an equal angle in the plane. Therefore, we obtain the unfolding of $\mathcal C(z; p^*,q^*)$ in $\mathbb {E}^2$ in the form of an acute triangle (see Fig. \ref{des-depliage}). We denote by $\mathcal T = \Delta(f(z), f(p^*),f(q^*))$ the triangle obtained in the plane.

\subsubsection{Computing the shortest path}

At this stage, we locate the image $f(y)$ and $y$ inside $\mathcal T$ using the coordinates of the images of the intersection points of edges in $F$ with the segments belonging to sides of $\mathcal C(z; p^*,q^*).$ Afterwards we construct the geodesic $\gamma(f(z),f(y))$ inside $\mathcal T$ using the Euclidean metric.

In order to construct the shortest path $\gamma(z, y)$ inside $\mathcal K$, we use the pre-images of the intersection points between the constructed geodesic $\gamma(f(z), f(y))$ with the images of segments belonging to edges of $\mathcal K.$
Therefore, the geodesic $\gamma(y, z)$ is constructed inside $\mathcal K$ by concatenating segments built between the pre-images of these intersection points.

Since $z$ is a vertex of $\mathcal K$, the geodesic $\gamma(x, y)$ is constructed when computing SPM($x$).
By Lemma \ref{sp}, the geodesic $\gamma(x, y)$ is the concatenation of geodesics $\gamma(x, z)$ and $\gamma(z, y)$.

\subsection{Algorithm}

Summarizing the results of the previous subsections, we present the algorithm for calculating the distance and the shortest path between a given point and any other point in a CAT(0) planar complex.

%\begin{comment}
\bigskip
\begin{center}
\framebox{
\parbox{15cm}{
%\textsf{
\vspace{0.05cm}
%\begin{algorithm}
%\caption{
\noindent{\bf Algorithm} {\sc Computing the shortest path for one-point distance queries problem}\\
{\footnotesize
  {\bf Input:} The shortest path map SPM($x$) of a CAT(0) planar complexe $\mathcal K$, a point $y,$ a data structure \\
%   \vspace{0.4cm}
  {\bf Output:} The shortest path $\gamma(x,y)$ between $x$ and $y$ inside $\mathcal K$ \\
\rule{\linewidth}{.7pt}
 % \begin{tabular}[t!]{l@{ }p{14cm}}
 \hspace*{0.1cm} \textbf{if} $y$ is a vertex inside $\mathcal K$ \textbf{then}\\
 \hspace*{0.8cm} return $\gamma(x,y)$ contained inside SPM($x$) \\
 \hspace*{0.1cm} \textbf{else} \\
 \hspace*{0.8cm} find the cone $\mathcal C(z;p^*,q^*)$ of SPM($x$) containing $y$ \\
 \hspace*{0.8cm} construct the unfolding $f$ of $\mathcal C(z;p^*,q^*)$ in the plane. Let $\mathcal T=f(\mathcal C(z;p^*,q^*))$ \\
 \hspace*{0.8cm} locate $f(y)$ inside $\mathcal T$ \\ %?
 \hspace*{0.8cm} computing $\gamma(f(y),f(z))$ as the shortest path between $f(y)$ and $f(z)$ inside $\mathcal T$\\
 \hspace*{0.8cm} return $f^{-1}(\gamma(f(y),f(z)))\cup\gamma(z,x)$ \\
 \hspace*{0.1cm} \textbf{end}
%\end{tabular}}
%\vspace{-0.3cm}
}}}
\hspace{0.2cm}
\end{center}
\medskip

\begin{lemma}\label{compl-depliage}
The unfolding of a cone of SPM($x$) is computed in $O(n)$ time.
\end{lemma}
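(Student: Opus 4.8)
The plan is to bound the work done when unfolding a single cone $\mathcal{C}(z; p^*, q^*)$ of SPM($x$), showing it is linear in the number of vertices $n$ of $\mathcal{K}$. The key structural fact I would invoke first is Lemma~\ref{liniar2}: a cone of SPM($x$) passes through $O(n)$ faces of $\mathcal{K}$, since by convexity (Proposition~\ref{pr-spm}(iii)) it crosses each edge at most once and there are $O(n)$ faces by Euler's formula. Let $F_1, F_2, \ldots, F_k$ be the sequence of faces the cone traverses, with $k = O(n)$.

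First I would describe how the unfolding is assembled face-by-face. Using the static data structure $D_s$ (the planar map together with the stored intersection points of the cone's sides with edges of $\mathcal{K}$), the algorithm retrieves for each traversed face the segments forming the two sides $\gamma(z, p^*)$ and $\gamma(z, q^*)$ of the cone restricted to that face. Each such segment can be recovered in constant time per face from the list $\mathcal{L}(F_j)$ of quadrilaterals. The images of these segments are then laid out in $\mathbb{E}^2$ starting from $f(z)$, concatenating consecutive segments so that the complementary angles between them equal $\pi$ (this is exactly the unfolding described in the reconstruction subsection, and is justified by Lemma~\ref{plonj-con}, which guarantees the whole cone embeds as an acute triangle). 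Placing the image of each new segment requires only the previous segment's endpoint and the stored angle data, hence constant time per face.

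Next I would total the cost. Since the cone meets $O(n)$ faces and contributes $O(1)$ segments per face, the boundary of the cone consists of $O(n)$ segments, and building each of their images in the plane takes $O(1)$ time. Therefore the whole unfolding of the two sides, together with the images of the interior edge-segments of $\mathcal{K}$ contained in the cone, is constructed in $O(n)$ time. I would also note that computing the apex angle $\angle_z(p^*, q^*)$ from the stored data is a single constant-time operation, so it does not affect the bound.

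The main obstacle, conceptually, is ensuring that the per-face work is genuinely constant, i.e.\ that recovering the relevant cone-boundary segments within each traversed face does not itself cost more than $O(1)$. This hinges on the fact that the intersection of a triangular face with a convex cone is at most a quadrilateral (as remarked in the data-structure description), so each face contributes a bounded number of boundary pieces; the lists $\mathcal{L}(F_j)$ are organized precisely so that these pieces are retrievable in constant time. Granting this, the linear bound follows immediately from the $O(n)$ bound on the number of traversed faces, and no further estimate is needed.
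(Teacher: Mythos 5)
Your proof is correct and follows essentially the same route as the paper's: both rest on Lemma~\ref{liniar2} to bound the number of traversed faces by $O(n)$, observe that the intersection of the cone with each face is a (possibly degenerate) quadrilateral whose isometric image can be placed in the plane in constant time from the stored local coordinates, and conclude by gluing the $O(n)$ pieces along shared sides. The only cosmetic difference is that you phrase the gluing in terms of concatenating boundary segments with complementary angles $\pi$ (as in the reconstruction subsection and Lemma~\ref{plonj-con}), whereas the paper phrases it as joining adjacent quadrilaterals; these are the same operation.
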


\begin{proof}
Let $\mathcal C(z; p, q)$ be a cone of SPM($x$). By Lemma \ref{liniar2}, $\mathcal C(z; p, q)$ intersects $O(n)$ faces of $\mathcal K.$ All cones of SPM($x$) are convex and so we can deduce that $\mathcal C(z; p, q)$ is cutting a face of $\mathcal K$ exactly once. Therefore, a cone of SPM($x$) is divided into $O(n)$ quadrilaterals. Note that the intersection of cone with a face may also contain triangles, in this case, we consider these triangles as degenerate quadrilaterals.

Let $F$ be a face intersected by the cone $\mathcal C(z; p, q)$ and let $Q = abcd$ be a quadrilateral obtained as the intersection of $F$ with $\mathcal C(z; p, q)$. The local coordinates of the points $a, b, c$ and $d$ inside $F$ are known, thus, we can construct the isometric image of $Q$ in the plane in a constant time. Therefore, if the cone cuts $O(n)$ faces of $\mathcal K$ then the total time of the unfolding of $\mathcal C(z; p, q)$ is $O(n)$. This embedding of $\mathcal C(z; p, q)$ is made by joining pairwise adjacent quadrilaterals (which share a common side).
\hfill $\Box$
\end{proof}

\noindent
\begin{theorem} \label{complexit}
Given a CAT(0) planar complex $\mathcal K$ with $n$ vertices, one can construct a data structure of size $O(n^2)$ and in $O(n^2\log n)$ time such that, for a query $x \in \mathcal K,$ the algorithm computes the shortest path between $x$ and any other point $y$ of $\mathcal K$ in $O(n)$ time.
\end{theorem}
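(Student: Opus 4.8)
The plan is to combine the preprocessing guarantee of Theorem~\ref{compl} with a constant number of elementary query operations, each of cost at most $O(n)$. At the preprocessing stage I would run the algorithm {\sc Computing SPM($x$)}, which by Theorem~\ref{compl} produces, in $O(n^2\log n)$ time and $O(n^2)$ space, the shortest path map SPM($x$) stored in the static and dynamic data substructures: the planar map of $\mathcal K$, the list $\mathcal C$ of cones $\mathcal C(z;p,q)$ together with the stored geodesics $\gamma(x,z)$ to their apices, and, for each face $F$, the sorted list $\mathcal L(F)$ of quadrilaterals cut out of $F$ by the cones. All query bounds must then be met using only this $O(n^2)$-size structure (augmented by a standard planar point-location structure, whose cost is subsumed), without further per-query preprocessing.

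Given a query point $y$, the procedure would proceed in four steps. First, locate the face $F=\Delta(a,b,c)$ of $\mathcal K$ containing $y$ by point location in $O(\log n)$ time, and then, using the local coordinates of $y$ in $F$ and the sorted list $\mathcal L(F)$, binary-search for the quadrilateral $Q^{*}\ni y$ and hence the cone $\mathcal C(z;p^{*},q^{*})$ of SPM($x$) containing $y$, again in $O(\log n)$ time. Second, reconstruct and unfold $\mathcal C(z;p^{*},q^{*})$ into the plane: by Lemma~\ref{plonj-con} the cone develops isometrically onto an acute triangle $\mathcal T=\Delta(f(z),f(p^{*}),f(q^{*}))$, and by Lemma~\ref{compl-depliage} this unfolding $f$ is produced in $O(n)$ time by concatenating the isometric images of the $O(n)$ quadrilaterals of the cone so that consecutive complementary angles equal $\pi$. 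Third, carry the coordinates of $y$ through $f$ to locate $f(y)$ inside $\mathcal T$ and compute the Euclidean geodesic $\gamma(f(z),f(y))$ in $\mathcal T$. Fourth, pull this segment back through $f^{-1}$, concatenating the pre-images of its crossings with the developed edges of $\mathcal K$ to obtain $\gamma(z,y)$, and output $\gamma(x,y)=\gamma(x,z)\cup\gamma(z,y)$, where $\gamma(x,z)$ is already recorded in SPM($x$).

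For correctness I would invoke Lemma~\ref{sp}, which guarantees that the shortest path from $x$ to $y$ passes through the apex $z$ of the cone containing $y$, so that it suffices to compute $\gamma(z,y)$ correctly and prepend the stored $\gamma(x,z)$. The computation of $\gamma(z,y)$ is justified by the fact that $f$ is an isometry from the cone onto $\mathcal T$ (Lemma~\ref{plonj-con}): an isometry sends geodesics to geodesics and preserves the uniqueness of geodesics in a CAT(0) space, so the pre-image of the Euclidean segment $\gamma(f(z),f(y))$ is exactly the unique geodesic $\gamma(z,y)$ in $\mathcal K$.

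It remains to verify the $O(n)$ query bound, which is where the accounting is tightest. The only non-constant contributions are the two $O(\log n)$ searches, the unfolding, and the construction of the output path. By Lemma~\ref{liniar2} a single cone may cross $\Theta(n)$ faces of $\mathcal K$, so both the unfolding and the path $\gamma(z,y)$ genuinely have $\Theta(n)$ pieces in the worst case; the content of the bound is that they have no more than $O(n)$. Lemma~\ref{compl-depliage} already charges the unfolding at $O(n)$, and since $\gamma(z,y)$ consists of one segment per quadrilateral crossed by $\gamma(f(z),f(y))$, each recovered in constant time from an edge crossing of $\mathcal T$, reporting $\gamma(z,y)$ also takes $O(n)$ time. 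Summing the four steps gives $O(n)$ per query, which together with Theorem~\ref{compl} yields the stated preprocessing bounds. The main obstacle is thus not any single geometric difficulty but the careful verification that the isometric development and the pull-back never generate more than a linear number of segments, so that the query stays within the $O(n)$ budget despite a cone potentially spanning a linear portion of the complex.
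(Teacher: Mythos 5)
Your proposal is correct and follows essentially the same route as the paper's proof: locate $y$ in a quadrilateral/cone by binary search in $\mathcal L(F)$, unfold the cone via Lemma~\ref{plonj-con} and Lemma~\ref{compl-depliage} in $O(n)$ time, compute the Euclidean segment in the unfolded triangle, pull it back, and prepend the stored geodesic $\gamma(x,z)$, with correctness resting on Lemma~\ref{sp}. If anything, your accounting of the pull-back step as $O(n)$ (one segment per crossed quadrilateral) is more careful than the paper's, which loosely calls that operation constant time.
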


\begin{proof}
The algorithm uses the data structure from Theorem \ref{compl} and of size $O(n^2)$.
We show below that the algorithm runs in linear time. First, we analyze each step of the algorithm. We begin with locating the point $y$ in a cone of SPM($x$).
Let $F$ be the face of the complex containing the query point $y$. We use the data structure and the local coordinates of $y$ inside $F$, to guide a binary search in $F$ which determines the quadrilateral $Q$ of $F$ containing the point $y.$ The required time for locating $y$ in a quadrilateral of $F$ is $O(\log n)$. \\
Subsequently, the algorithm reconstructs the cone $\mathcal C (z; p, q)$ of SPM($x$) containing $Q$ inside the sequence of pairwise adjacent faces starting from $F$ in both directions (in one direction up to the boundary of $\mathcal K$ and up to the face containing $z$ in the other direction). By Lemma \ref{liniar2} the cone $\mathcal C(z; p, q)$ intersects a linear number of faces of $\mathcal K.$ Therefore, the reconstruction of the cone containing $y$ requires linear time with respect to $n$. \\
Thus, the total running time of this step is $O(n)$.

We analyze the next step which is the unfolding in the plane of the cone of SPM($x$) containing $y$. Let $f$ be the isometric embedding of $\mathcal C(z; p, q)$ in $\mathbb{E}^2$. By the Lemma \ref{compl-depliage}, the required time to complete this step is $O(n)$.

Finally, the last step of the algorithm is to compute the shortest path $\gamma(x, y)$ inside $\mathcal K.$
Let $\mathcal T$ be the unfolding of $\mathcal C(z; p, q)$ in the plane. The algorithm constructs the shortest path between images $f(x)$ and $f(y)$ inside the Euclidean triangle $\mathcal T$ in a constant time. Then, the algorithm builds inside the quadrilaterals forming $\mathcal C(z; p, q)$, the pre-images of the segments of $\mathcal T$ form $\gamma(f(x), f(y))$. This operation requires a constant time. Thus, the shortest path $\gamma(x, y)$ can be computed inside $\mathcal K$ in $O(n)$ time.

Therefore, by summing the requested time of each step of the algorithm, the algorithm computes the shortest path between a given point $x$ and any other point $y$ inside a CAT(0) planar complex $\mathcal K,$ in total $O(n)$ time. \hfill $\Box$
\end{proof}

\section{Convex Hull}

\noindent In this section we consider the following problem.\\

\textbf{HC$(S)$}: \emph{Given a finite set of points $S$ in a CAT(0) planar complex $\mathcal K$, construct the convex hull of $S$ inside $\mathcal K$ (see Fig. \ref{des-hc}).}\\

\noindent In order to solve this problem, we present an algorithm based on the same principle as Toussaint's algorithm \cite{To}, which constructs the convex hull of a finite set of points in a simple triangulated polygon in $O(n \log n)$ time, using a data structure of size $O(n).$
As a preprocessing step, our algorithm builds the shortest path map SPM($x$), where $x$ is any point of the boundary of the complex. Thus, all cones of SPM($x$) are sorted starting with the cone having one side on the boundary of $\mathcal K.$
\begin{figure}[h]\centering
\includegraphics[width=0.4\textwidth]{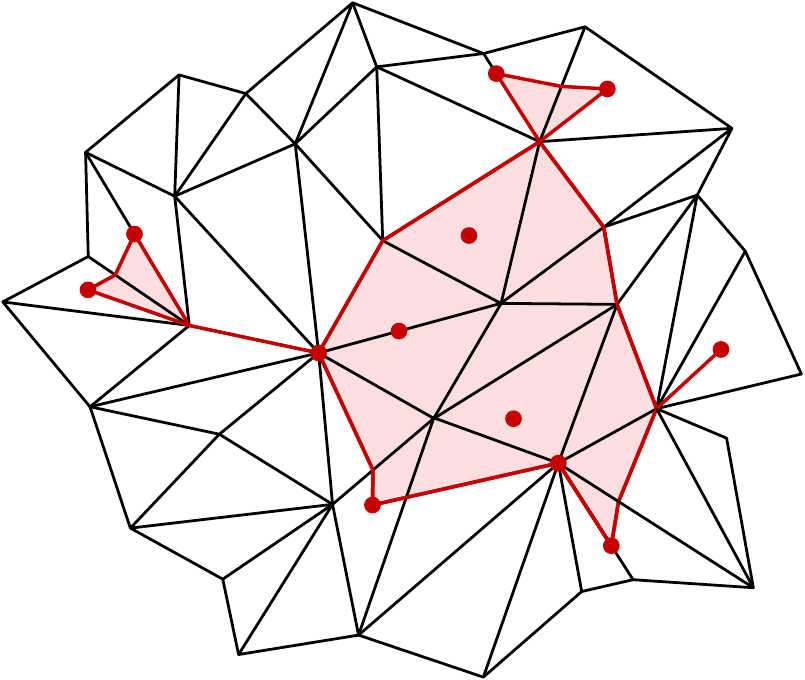}
\caption{The convex hull of a finite set of points in a CAT(0) planar complex.}\label{des-hc}
\end{figure}
We note that this partition in cones plays the role of a triangulation of a simple polygon. We build in each cone the convex hull of the subset of $S$. By connecting every pair of the computed convex hulls in the same order as the cones that contain them, we obtain a weakly-simple polygon inside $\mathcal K.$ We denote by $P$ the union of the obtained weakly-simple polygon with a geodesic segment $\gamma(p, q),$ where $p$ and $q$ are the closest points, such that $p$ belongs to one of the constructed convex hulls and $q$ belongs to the boundary of $\mathcal K$. We show that the complex $\mathcal K \setminus P$ is a CAT(0) planar complex. By considering the point $p^*$ as the copy of $p$ such that $p$ is different from $p^*$ inside $ \mathcal K \setminus P,$ we show that the shortest path between $p$ and $p^*$ inside the CAT(0) planar complex $\mathcal K \setminus P$ is exactly the boundary of the convex hull of $S$ inside $\mathcal K.$

\subsection{Algorithm}

The data structure $D$ contains the local coordinates of all the points of the set $S$ inside the faces of $\mathcal K$ and an ordered list of cones $\mathcal L_C$ of SPM($x$). The points of $S$ can be any points of the complex, including the points on the boundary of $\mathcal K.$\\
Let us recall first some important concepts: a closed polygonal line of the complex is called a \emph{weakly-simple polygon} if the polygonal line divides the complex into two areas equivalent to a disc. Less formally, we can say that a weakly-simple polygon can have sides that touch but do not cross.

Let $k$ be the number of points of the set $S$. We denote by $S_1, S_2, \ldots, S_l$ the subsets of $S$ such that two sets $S_j$ and $S_{j+1}$ with $1\geq j \leq l-1,$ belong to two distinct cones of $\mathcal L_C$.

We denote by $\mathcal L_C(S)$ the sublist of the list of cones $\mathcal L_C,$ in which every cone contains at least one point of $S$. Thus, if $\mathcal C(z;p,q)\subset \mathcal L_C(S)$, then $\mathcal C(z;p,q)\cap \neq\emptyset.$
Further, inside each of the cones of $\mathcal L_C(S)$ we construct the convex hull of the points of $S$ as follows. We recall that for any cone $\mathcal C(z; p, q)$ of SPM($x$), int$(\mathcal C (z; p, q))$ does not contain vertices of $ \mathcal K.$
It is possible to determine the ordered sequence of faces cut by a cone of SPM($x$). Thus, for a cone $\mathcal C(z; p, q)$ of $\mathcal L_C(S)$, we determine the largest ordered subsequence of consecutive faces $F_i,F_{i+1},\ldots,F_k$ so that the extremal faces $F_i, F_k$ contain points of $S$. By building in the plane the isometric images of the faces $F_i,F_{i+1},\ldots,F_k$, as well as the points of $S_i \subseteq S$ insides these faces, we then construct the convex hull of the images of these points in $\mathbb{E}^2.$ The pre-image of the convex hull computed in the Euclidean plane is exactly the convex hull inside $\mathcal K$ of all points of $S_i,$ such that $S_i \cap \mathcal C(z; p, q) \neq \emptyset$.
%----------------------------------------------------

Since $x$ is chosen on the boundary of $\mathcal K,$ SPM($x$) can be seen as a canonical counterclockwise non-cyclic sequence of cones. By setting an ordering of cones of SPM($x$), i.e. the cones of $\mathcal L_C$, we automatically obtain an ordering of the cones in the sublist $\mathcal L_C(S).$
It is possible to choose the exact ordering of the cones of $\mathcal L_C(S)$ as of the subsets $S_1,S_2,\ldots,S_l$ of $S$.
Thus, for each pair of consecutive cones of $\mathcal L_C(S)$, containing consecutive subsets $S_{i-1}, S_i \subset S$ respectively, using the shortest path algorithm presented previously, we construct a geodesic segment $\gamma(a_{i-1 },b_i)$ between an arbitrary point $a_{i-1}$ of $S_{i-1}$ and an arbitrary point $b_i$ of $S_i.$ This operation allows us to connect the convex hulls conv$(S_i),$ $i = 1, \ldots, l$ to obtain a weakly-simple polygon.

Subsequently, we want to connect the weakly-simple polygon obtained inside $\mathcal K$ with boundary of $\mathcal K.$ First, we choose two points $p\in S$ and $q \in \partial \mathcal K$ and construct the geodesic segment $\gamma(p, q)$.
The points $p$ and $ q $ can not be chosen arbitrarily. It is necessary that $p$ is on the boundary of the convex hull of $S$, thus, $p \in P$ is the nearest point of $S$ to the boundary of $\mathcal K.$ Therefore, we choose $p$ such that $d(p,q)=\displaystyle\min_{\substack{t\in S_1\\ t'\in \partial \mathcal K}} d(t,t').$ This calculation is possible by embedding in the plane the extreme cone of $\mathcal K$ containing $S_1$. \\
The choice of $p$ and $q$ as described above will insure that $p$ is on the boundary of the convex hull conv$(S)$ inside $\mathcal K$ and that $p$ and $q$ belong to a single cone of $\mathcal L_C.$

We construct the set $P:=(\displaystyle\cup_{i=1}^{l}$conv$(S_i))\cup(\displaystyle\cup_{i=2}^l \gamma(a_{i-1},b_i))\cup \gamma(p,q).$ Note that $P$ is defined as the union of weakly-simple polygon constructed above and the geodesic $\gamma(p, q).$ The constructed set $P$ is an weakly-simple polygon. We duplicate the points $p$ and $q$. Let $p^* = p$ and $q^* = q$ be pairwise distinct points in $\mathcal K \setminus P.$
In the complex CAT(0) planar complex $\mathcal K \setminus P$ we construct the shortest path $\gamma(p, p^*)$ using the algorithm of the previous section. The boundary of the convex hull of $S$ inside $\mathcal K$ coincides with the shortest path $\gamma(p, p^*)$ inside $\mathcal K \setminus P.$

\noindent Here is a brief and informal description of the presented algorithm.

\bigskip
\begin{center}
\framebox{
\parbox{15cm}{
\vspace{0.05cm}
\noindent{\bf Algorithm} {\sc The Convex hull of a finite set of points}\\
{\footnotesize
  {\bf Input:} a CAT(0) planar complex $\mathcal K$, SPM($x$) with $x\in\partial\mathcal K$, a finite set of points $S\subset\mathcal K$ and data structure $D$\\
  {\bf Output:} the convex hull conv$(S)$ given by its boundary \\

\hspace*{0.1cm} determine the subsets $S_i,$ $i=1,\ldots,l$ where $l\leq |S|$ \\
\hspace*{0.1cm} \textbf{for} every $i\leftarrow 1$ to $l$ \\
\hspace*{0.6cm} compute conv$(S_i)$ \\
\hspace*{0.1cm} \textbf{end for} \\
\hspace*{0.1cm} \textbf{for} every $i\leftarrow 1$ to $l-1$ \\
\hspace*{0.6cm} choose $a_i\in S_i$, $b_{i+1}\in S_{i+1}$ and compute $\gamma(a_i,b_{i+1})$ \\
\hspace*{0.1cm} \textbf{end for} \\
\hspace*{0.1cm} determine $p\in S$ and $q\in \partial\mathcal K$, such that $d(p,q)=\displaystyle\min_{\substack{t\in S_1\\ t'\in \partial \mathcal K}} d(t,t')$ \\
\hspace*{0.1cm} compute $P:=(\displaystyle\cup_{i=1}^{l}$conv$(S_i))\cup(\displaystyle\cup_{i=2}^l \gamma(a_{i-1},b_i)) \cup \gamma(p,q)$ \\
\hspace*{0.1cm} double the points $p=p^*$ and $q=q^*$ \\
\hspace*{0.1cm} compute $\gamma(p,p^*)$ inside $\mathcal K\setminus P$ \\
\hspace*{0.1cm} return $\partial$conv$(S) = \gamma(p,p^*)$ \\
\vspace{0.4cm}
}}}
\end{center}

\noindent We prove now the correctness of the algorithm.
\begin{proposition}\label{KminusK0}
The complex $\mathcal K \setminus P$ is a CAT(0) planar complex.
\end{proposition}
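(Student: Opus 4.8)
The plan is to verify that $\mathcal{K}\setminus P$ satisfies the combinatorial characterization of CAT(0) planar complexes recalled in Section~2: a planar complex is CAT(0) precisely when it is a (simply connected) planar complex whose angle sum $\theta(v)$ at every \emph{inner} vertex $v$ is at least $2\pi$. Accordingly I first endow $\mathcal{K}\setminus P$ with a piecewise-Euclidean planar cell structure. The set $P$ is a ``necklace'': the solid two-dimensional beads $\text{conv}(S_i)$, joined by the geodesic strings $\gamma(a_{i-1},b_i)$ and by the slit $\gamma(p,q)$ reaching $\partial\mathcal{K}$. Each boundary segment of a hull $\text{conv}(S_i)$ and each connecting geodesic is a geodesic of $\mathcal{K}$, and by the convexity argument used in Lemma~\ref{constr-demidroite} such a geodesic meets every triangular face of $\mathcal{K}$ at most once. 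Adding to the $1$-skeleton of $\mathcal{K}$ the vertices and edges carried by $P$ (subdividing the faces they cross, and retriangulating if necessary) therefore refines $\mathcal{K}$ into a piecewise-Euclidean planar complex in which $P$ is a subcomplex. We then obtain $\mathcal{K}\setminus P$ by deleting the open cells interior to the beads and by cutting along the one-dimensional part of $P$, i.e.\ by doubling the points of $P$ that were interior to $\mathcal{K}$ (in particular doubling $p$ and $q$); every such point becomes a boundary point, the metric on the surviving cells is unchanged, and $\mathcal{K}\setminus P$ is again a planar complex with the intrinsic $l_2$-metric.

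Next I would settle the topology. Since $\mathcal{K}$ is CAT(0) it is contractible, hence a topological disk, and $P$ is a contractible set (a tree of disks) that reaches $\partial\mathcal{K}$ through the slit $\gamma(p,q)$. Cutting the disk $\mathcal{K}$ along $P$ therefore leaves a simply connected region: the slit is exactly what prevents $\mathcal{K}\setminus P$ from being an annulus surrounding the necklace, and it guarantees that $p$ and its copy $p^*$ (and $q,q^*$) lie on the single boundary cycle of a disk. Consequently $\mathcal{K}\setminus P$ admits a planar drawing whose bounded faces are its $2$-cells, as required of a planar complex.

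It then remains to verify the angle condition. The inner vertices of $\mathcal{K}\setminus P$ are exactly the inner vertices of $\mathcal{K}$ that do not belong to $P$: a vertex lying in the interior of some bead $\text{conv}(S_i)$ has been deleted, while every vertex lying on $P$ (including the new vertices created when refining $\mathcal{K}$, which by construction all lie on $P$) becomes a boundary vertex of $\mathcal{K}\setminus P$. For a retained inner vertex $v$ the incident faces are untouched, so $\theta(v)$ equals its value in $\mathcal{K}$, where $\theta(v)\ge 2\pi$ because $\mathcal{K}$ is CAT(0). Hence every inner vertex of $\mathcal{K}\setminus P$ satisfies $\theta(v)\ge 2\pi$, and by the characterization recalled above $\mathcal{K}\setminus P$ is a CAT(0) planar complex.

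I expect the delicate point to be the topological step, namely confirming that cutting along the \emph{weakly}-simple polygon $P$ yields a genuine disk rather than a pinched or disconnected surface: where $P$ touches itself the cut must be performed consistently, so that the result stays two-sided along the slit and no new deficient inner vertex appears. The slit $\gamma(p,q)$ to the boundary is the key device forcing simple connectivity; once this is granted, the angle bookkeeping is routine, since cutting can only move inner vertices to the boundary or delete them, and never creates an inner vertex of angle smaller than $2\pi$.
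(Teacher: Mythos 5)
Your proposal is correct and follows essentially the same strategy as the paper: both proofs reduce the statement to (a) simple connectivity of $\mathcal K \setminus P$, using that $P$ is a contractible, tree-like weakly-simple polygon attached to $\partial\mathcal K$ by the slit $\gamma(p,q)$, and (b) the curvature condition, which survives because cutting along $P$ only deletes points or turns them into boundary points where no angle constraint applies. The differences are cosmetic: the paper invokes Gromov's criterion pointwise and proves simple connectivity by induction over loop-plus-whisker pieces of $P$ with a chain argument, while you invoke the equivalent combinatorial characterization (angle sum $\geq 2\pi$ at inner vertices only) and argue topologically by cutting the disk, with both treatments equally informal at exactly the delicate cutting step you flag.
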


\begin{proof}
We will use a result of Gromov stating that a complex is CAT(0) if and only if it is simply connected and has a nonpositive curvature in each point of the complex.
First we prove that $\mathcal K \setminus P$ is simply connected.
A given space is simply connected if every loop drawn in this space can be continuously reduced (by homotopy) to a point.
$P$ is a union of geodesic segments and convex sets and therefore it is a weakly-simple polygon $\mathcal K.$
Let $\mathcal K_0$ be the union of a geodesic segment $\gamma(x_1, y)$ and a closed polygonal line $x_1, x_2, \ldots, x_k, x_1$, such that $x_i, x_{i+1}$ belong to one face of $ \mathcal K$. We will show that $\mathcal K \setminus \mathcal K_0$ is a CAT(0) planar complex. By induction on the number of figures equivalent to $\mathcal K_0$, we obtain that $\mathcal K \setminus P$ is CAT(0).

Let $L = e_1, e_2, \ldots, e_s$ be a closed chain of $\mathcal K,$ where $e_i \in E(\mathcal K)$, for all $i = 1, \ldots, s$.
We can distinguish two cases: the case where $L \cap \mathcal K_0 = \emptyset$ and the case where $L \cap \mathcal K_0 \neq \emptyset$.

\begin{figure}[h]\centering
\begin{tabular}{cc}
   \includegraphics[width=0.4\textwidth]{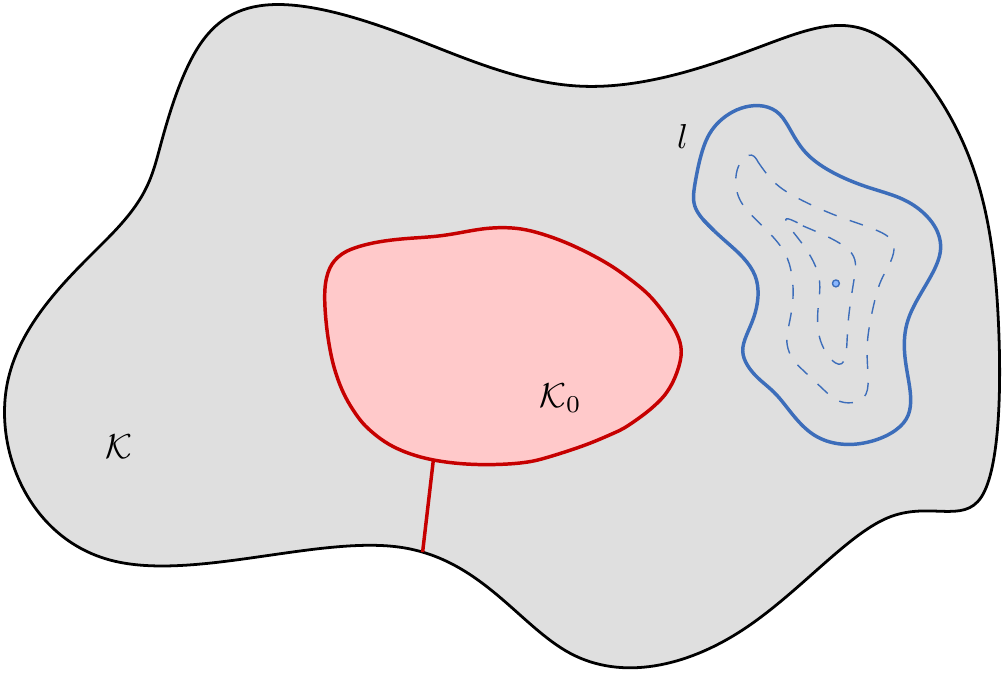}
   &
   \includegraphics[width=0.4\textwidth]{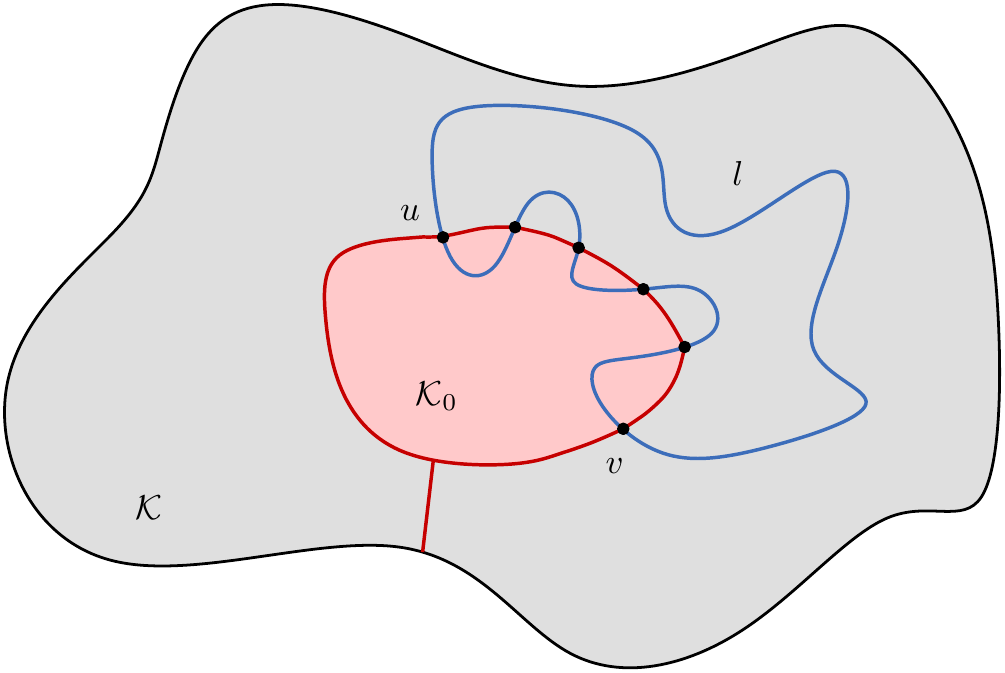}
    \\
   (a)
   &
   (b)
\end{tabular}
 \caption{For the Proposition \ref{KminusK0}, (a) $L\cap \mathcal K_0 =\emptyset,$ (b) $L\cap \mathcal K_0\neq \emptyset$.} \label{deslatz}
\end{figure}

\noindent First we analyze the case where $L \cap \mathcal K_0 = \emptyset$ (see Fig. \ref{deslatz} (a)). In this case, since the chain $L$ does not intersect $\mathcal K_0,$ then $L$ is contractible in $\mathcal K \setminus \mathcal K_0$.

\noindent In the case where $L \cap \mathcal K_0 \neq \emptyset$ (see Fig. \ref{deslatz} (b)), there exist at least two vertices $u$ and $v$ such that $L \cap \mathcal K_0 = \{u, v \}$ and $d_G (u, v)$ is maximal. These points $u$ and $v$ belong to $L.$ On the other hand, since $u, v \not \in \mathcal K \setminus \mathcal K_0$ then $L$ is no longer a closed chain in $\mathcal K \setminus \mathcal K_0.$ \\
Therefore, in both cases, $\mathcal K \setminus \mathcal K_0$ is simply connected.

Let $\mathcal K_i$ be the union of $t$ figures of type $\mathcal K_0$ %(see Fig. \ref{desunionK0})
pairwise connected by a point (see Fig. \ref{desconv(P)}). By induction on the number of figures $\mathcal K_i \subset \mathcal K$, the subcomplex $\mathcal K \setminus \displaystyle \cap_{i = 1}^t \mathcal K_i$ is simply connected.

%\begin{figure}[h]\centering
%\includegraphics[width=0.4\textwidth]{}
%\caption{Union of convexes of type $\mathcal K_0$.}\label{desunionK0}
%\end{figure}

It remains to be shown that for every point $x \in \mathcal K \setminus P $ the curvature in the point $x$ is nonpositive.
Let us consider $x$ inside int$(\mathcal K \setminus P).$ Since $\mathcal K \setminus P$ is a subcomplex of $\mathcal K$ then $x$ belongs to int$(\mathcal K)$ and therefore the curvature is nonpositive in the point $x$.

\noindent For any point $x$ on the boundary of $\mathcal K \setminus P$ we can distinguish two cases: $x$ belongs to the boundary of $\mathcal K$ and $x$ belongs to the boundary of $P.$
The first case is trivial, thus let us consider the second case. Suppose $x$ belongs to $\partial P.$
The curvature of the initial complex $\mathcal K$ is nonpositive, i.e., $\theta(x) \geq 2\pi.$ The curvature in the point $x$ of the subcomplex $\mathcal K \setminus P$ the sum the angles of the faces incident to $x$ in this point can be only less than $\theta(x)$ inside $\mathcal K.$ Therefore, the curvature of the subcomplex $\mathcal K \setminus P$ is non-positive in the point $x$.

The complex $\mathcal K \setminus P$ is a CAT(0) planar complex. We will prove that it is of the same type as the complex $\mathcal K$.
The complex $\mathcal K \setminus P$ contains faces of length greater than $3,$ since after eliminating $\mathcal K(P)$ of the initial complex $\mathcal K$, we have removed parts of faces of $\mathcal K.$

It is always possible to transform the subcomplex $\mathcal K \setminus P$ into a complex of the same type as the initial complex $\mathcal K$ by constructing diagonals inside the obtained faces which divides them into smaller faces of length equal to 3.
%the same procedure on the faces cut by $\partial P$ as in the proof of Lemma \ref{strconvexe} and illustrated on the Fig. \ref{des_souscomplexe}.
\hfill $\Box$
\end{proof}

We recall that the point $p$ is chosen so that it belongs to the boundary of the convex hull of $S$ and the point $p^*$ is the copy of $p$, then the following lemma is satisfied.

\begin{proposition}\label{gamma_p-p}
The constructed geodesic $\gamma(p, p^*)$ is the boundary of the convex hull of $S.$
\end{proposition}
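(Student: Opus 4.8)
The plan is to characterize $\partial$conv$(S)$ as the shortest closed loop in $\mathcal K$ that encloses $S$, and then to show that cutting $\mathcal K$ along $P$ converts the search for this loop into the one-point geodesic problem for $\gamma(p,p^*)$ already solved by Theorem \ref{complexit} in the CAT(0) complex $\mathcal K\setminus P$. First I would record the geometric facts about conv$(S)$: in a CAT(0) space the convex hull of a finite set is a convex region whose boundary is a closed, locally convex curve, consisting of geodesic arcs whose interior angle (measured on the side of conv$(S)$) is at most $\pi$, strictly less than $\pi$ only at the extreme points of $S$. I would also invoke the standard convexity/perimeter property that in this setting $\partial$conv$(S)$ is the unique shortest closed curve enclosing $S$: any other enclosing loop bounds a set containing conv$(S)$ and hence, by monotonicity of perimeter under convexification, has length at least that of $\partial$conv$(S)$, with equality only for $\partial$conv$(S)$ itself.

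Next I would analyze $P$. Writing $P_0$ for the core $\bigl(\bigcup_i \text{conv}(S_i)\bigr)\cup\bigl(\bigcup_i \gamma(a_{i-1},b_i)\bigr)$, I note that $S\subseteq P_0\subseteq$ conv$(S)$, since each conv$(S_i)$ and each connecting geodesic lies in conv$(S)$ by convexity; therefore conv$(P_0)=$ conv$(S)$. The remaining piece $\gamma(p,q)$ is a \emph{slit} joining the hull-boundary point $p$ to $q\in\partial\mathcal K$. Here the choice $d(p,q)=\min_{t\in S_1,\,t'\in\partial\mathcal K}d(t,t')$ is essential: it forces $p$ to be an extreme point of conv$(S)$ lying on $\partial$conv$(S)$ (a point of $S$ nearest to $\partial\mathcal K$ cannot be interior to the hull), and it forces $\gamma(p,q)$ to meet conv$(S)$ only at $p$.

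I would then set up a length-preserving correspondence. By Proposition \ref{KminusK0}, $\mathcal K\setminus P$ is a CAT(0) planar complex, hence simply connected and uniquely geodesic. Cutting along the slit $\gamma(p,q)$ and doubling $p$ into $p,p^*$ opens a channel from $\partial\mathcal K$ to $P_0$, so that any path from $p$ to $p^*$ in $\mathcal K\setminus P$ corresponds, after re-gluing $p=p^*$, to a closed loop in $\mathcal K$ that winds exactly once around $P_0$ — the slit prevents the loop from contracting past $P_0$ — and this correspondence preserves length; conversely every loop enclosing $P_0$ while avoiding the interior of $P_0$ and the slit lifts to such a path. In particular $\partial$conv$(S)$ lifts to a valid $p$-to-$p^*$ path in $\mathcal K\setminus P$.

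Finally I would conclude by comparison. Since $\gamma(p,p^*)$ is the unique shortest such path, its length is at most that of the lift of $\partial$conv$(S)$, so the re-glued loop is no longer than $\partial$conv$(S)$. In the other direction, a shortest path forced around the obstacle $P_0$ is geodesic away from $P_0$ and bends toward $P_0$ at its contact points; hence the re-glued loop is locally convex and bounds a convex set containing $P_0\supseteq S$, which by minimality of the hull contains conv$(S)$, so the loop is at least as long as $\partial$conv$(S)$. Equality of lengths together with uniqueness of the shortest enclosing loop yields $\gamma(p,p^*)=\partial$conv$(S)$. The step I expect to be the main obstacle is making the ``winds once around / length-preserving bijection'' rigorous — verifying that the slit $\gamma(p,q)$ genuinely blocks the homotopy and that doubling $p$ introduces no shortcut — together with checking that the taut path around $P_0$ reproduces the full hull boundary rather than some sub-loop, which is precisely where the convexity of each conv$(S_i)$ and of the connecting geodesics $\gamma(a_{i-1},b_i)$, giving conv$(P_0)=$ conv$(S)$, must be used.
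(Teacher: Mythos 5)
Your outline is correct and shares the paper's overall architecture---regard $\partial\,\mathrm{conv}(S)$ as a chain from $p$ to $p^*$ in the cut complex $\mathcal K\setminus P$, compare it with the geodesic $\gamma(p,p^*)$, and use convexity of the region bounded by the re-glued geodesic loop---but your identification step takes a genuinely different route. The paper argues by contradiction using only two ingredients native to the setting: uniqueness of geodesics in the CAT(0) complex $\mathcal K\setminus P$ (Proposition \ref{KminusK0}), invoked twice (to show that $\partial\,\mathrm{conv}(P)$ is a convex chain, and that the region $Q$ bounded by $\gamma(p,p^*)$ is convex), and minimality of the convex hull: the geodesic $\gamma(p,p^*)$, being shorter than the convex chain $\partial\,\mathrm{conv}(P)$, lies inside $\mathrm{conv}(P)$, so $Q$ is a convex set with $P\subseteq Q\subsetneq\mathrm{conv}(P)$, contradicting minimality; no isoperimetric fact is ever needed, the only metric input being that a geodesic is shorter than any other chain with the same endpoints. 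You instead characterize $\partial\,\mathrm{conv}(S)$ as the \emph{unique shortest loop enclosing} $S$ and finish by squeezing lengths from both sides. That is conceptually attractive and yields the length equality outright, but it charges to ``standard'' two statements that are classical in $\mathbb{E}^2$ yet must be proved in a CAT(0) planar complex---minimality and uniqueness of the hull boundary among enclosing loops, and monotonicity of perimeter under inclusion of convex regions; establishing these (via $1$-Lipschitz nearest-point projection and shortcutting) is comparable in effort to the proposition itself, and this, rather than the winding-number bookkeeping you flag, is the real debt of your approach. Two remarks: first, you can drop the uniqueness-of-shortest-loop claim entirely---once your two inequalities show that the lift of $\partial\,\mathrm{conv}(S)$ has the same length as $\gamma(p,p^*)$, that lift is itself a shortest $p$-to-$p^*$ path, hence a geodesic of the CAT(0) complex $\mathcal K\setminus P$, and uniqueness of geodesics there forces it to coincide with $\gamma(p,p^*)$, which brings your argument very close to the paper's while keeping its length-comparison flavor; second, your explicit observation that $S\subseteq P_0\subseteq\mathrm{conv}(S)$, hence $\mathrm{conv}(P_0)=\mathrm{conv}(S)$, handles a point the paper glosses over (its proof works throughout with $\mathrm{conv}(P)$, even though $P$ contains the slit $\gamma(p,q)$, which does not lie in $\mathrm{conv}(S)$), so separating the core $P_0$ from the slit as you do is the cleaner bookkeeping.
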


\begin{proof}
Suppose that $\gamma(p, p^*)$ does not coincide with the boundary of conv$(P)$.
We say that $\partial$conv$(P)$ is a subset of $\mathcal K \setminus P.$ Otherwise, there exists a point $x \in \partial$conv$(P)$ such that $x \not \in \mathcal K \setminus P,$ and therefore, $x$ belongs to int$(P)$ which contradicts the fact that $x$ is an extreme point of conv$(P)$.

Since $P$ is a weakly-simple polygon, the boundary of conv$(P)$ is a chain of alternating geodesic segments in $\mathcal K \setminus P$ and of geodesic segments of $\partial P.$
We prove that $\partial$conv$(P)$ is convex inside the complex $\mathcal K \setminus P.$

Suppose the contrary. Let $x$ and $y$ be two points on the boundary of conv$(P)$. We consider the geodesic segment $\gamma(x, y)$ inside the complex $\mathcal K \setminus P$ and there exists at least one point $z \in \gamma(x, y),$ such that $z \not \in \partial$conv$(P)$. The point $z$ belongs to the interior of conv$(P) \setminus P$ which is a subset of $\mathcal K \setminus P$ (see Fig. \ref{desconv(P)}). Since $\partial$conv$(P)$ is a concatenation of geodesic segments inside $\mathcal K \setminus P$ and segments belonging to $\partial P$, there exists at least two points of $\gamma(x, y)$ connected by two distinct geodesics: one belonging to $\gamma(x, y)$ and the other containing $z$. This is in contradiction with the uniqueness of the geodesic connecting two points in a CAT(0) space. Therefore, $\partial$conv$(P)$ is convex in $\mathcal K \setminus P.$

\begin{figure}[h]\centering
\includegraphics[width=0.4\textwidth]{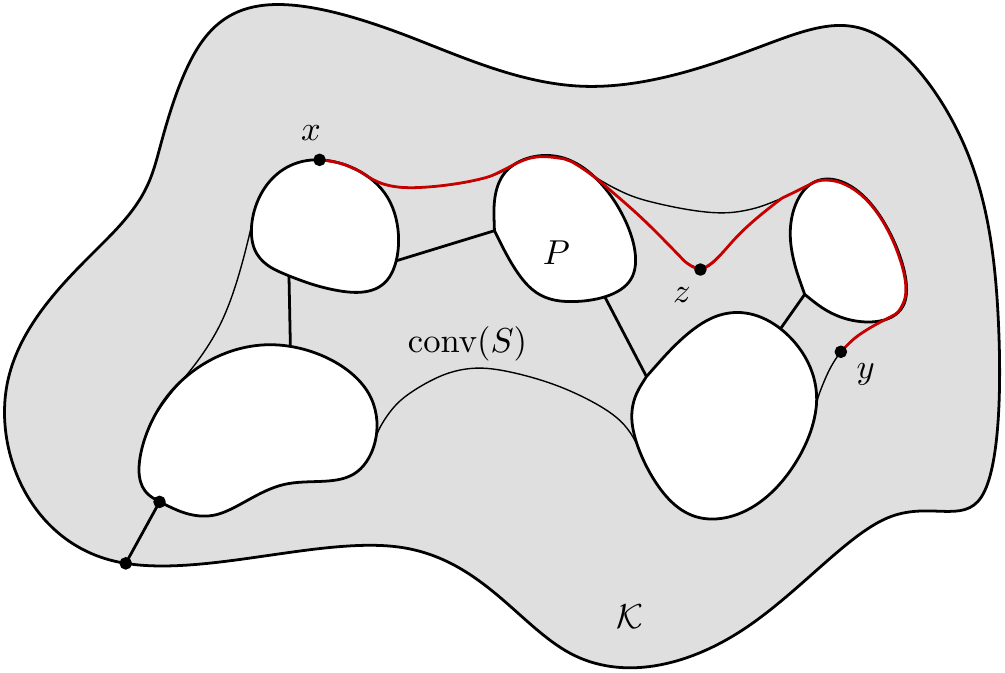}
\caption{The case where the geodesic $\gamma(p,p^*)$ does not coincide with the boundary of conv$(P)$ inside $\mathcal K$.}\label{desconv(P)}
\end{figure}

We have obtained two distinct convex chains $\gamma(p, p^*)$ and $\partial$conv$(P)$ inside $\mathcal K \setminus P$ with the common point $p = p^*$. Since $\gamma(p, p^*)$ is a geodesic, then the length of $\gamma(p, p^*)$ is strictly less than the length of $\partial$conv$(P)$.
Therefore, $\gamma(p, p^*) \subset$conv$(P).$

We will show that the set bounded by the geodesic $\gamma (p, p^*)$ inside $\mathcal K$ is a convex set containing the weakly-simple polygon $P$.
Suppose the contrary and denote by $Q$ the set of points $\mathcal K$ bounded by the geodesic $\gamma(p, p^*)$. Let $x$ and $y$ be two points of int$ (Q)$, such that there exists a point $z\in \gamma(x, y)$ and $z \not\in Q$. Therefore, $\gamma(x, y)$ intersects $\gamma(p, p^*)$ at least twice. We obtain two distinct geodesics connecting the points of intersection of $\gamma(x, y)$ and $\gamma(p, p^*)$ which is impossible. Thus, we have shown that $Q$ is bounded by $\gamma(p , p^*)$ is a convex set. By the construction of $\gamma(p, p^*)$ inside $\mathcal K \setminus P,$ the polygon $P$ is contained in the set $Q$.

Since $Q \subset$conv$(P),$ we obtain a contradiction with the fact that conv$(P)$ is the smallest convex set containing $P$. \hfill $\Box$
\end{proof}

\noindent Let us study the running time of the algorithm and the used space.
\begin{theorem} \label{compl_point-convexe}
Given a CAT(0) planar complex $\mathcal K$ with $n$ vertices, a shortest path map SPM($x$) with $x\in \partial\mathcal K,$ one can construct a data structure of size $O(n^2 + k)$, such that for any finite set $S$ of $k$ points in $\mathcal K$, is it possible to construct the convex hull conv$(S)$ in $O(n^2\log n + nk \log k)$ time.
\end{theorem}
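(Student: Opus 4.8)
The plan is to bound separately the space, the one-time preprocessing, and the per-input work of the convex-hull algorithm, showing that the preprocessing is dominated by the construction of SPM($x$) while the remaining work is dominated by the per-cone hull computations.

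For the space bound I would note that the structure $D$ stores SPM($x$), which occupies $O(n^2)$ by Theorem~\ref{compl}, the ordered list $\mathcal L_C$ of cones, which by Lemma~\ref{liniar1} has $O(n)$ entries, and the local coordinates of the $k$ points of $S$, contributing $O(k)$; hence $|D|=O(n^2+k)$. The preprocessing is exactly the computation of SPM($x$) together with the cone ordering, which is legitimate since $x\in\partial\mathcal K$ makes the cones a non-cyclic counterclockwise sequence; by Theorem~\ref{compl} this takes $O(n^2\log n)$ time.

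Then I would account for the query steps in the order they are executed. Each point of $S$ is located in its face and then, by binary search among the $O(n)$ quadrilaterals of that face, in its cone, for $O(\log n)$ per point; grouping the points into the consecutive subsets $S_1,\dots,S_l$ (with $l\le k$) then costs $O(k\log k)$. For each cone of $\mathcal L_C(S)$ the algorithm unfolds it, which by Lemma~\ref{liniar2} spans $O(n)$ faces and by Lemma~\ref{compl-depliage} costs $O(n)$, and builds the planar convex hull of the $k_i=|S_i|$ points inside it using Toussaint's procedure~\cite{To}; charging $O(n\log k)$ to each of the $\le k$ cones yields the dominant term $O(nk\log k)$. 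Joining consecutive hulls uses $l-1\le k$ geodesic segments $\gamma(a_{i-1},b_i)$, each computed in $O(n)$ by Theorem~\ref{complexit}, for $O(nk)$, and the pair $(p,q)$ minimizing $d(t,t')$ is obtained by unfolding the extreme cone in $O(n)$ time.

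The step demanding the most care is the final one, where $\gamma(p,p^*)$ is computed inside $\mathcal K\setminus P$: Proposition~\ref{KminusK0} guarantees that $\mathcal K\setminus P$ is again a CAT(0) planar complex and Proposition~\ref{gamma_p-p} that the resulting geodesic is $\partial$conv$(S)$, but to keep the cost within budget I must bound the combinatorial size of the weakly-simple polygon $P$ and hence of $\mathcal K\setminus P$. I would argue that the hull boundaries contribute $O(k)$ vertices while the $O(k)$ connecting geodesics and $\gamma(p,q)$ add only the edge-crossings already traversed during their construction, so that after retriangulating $\mathcal K\setminus P$ into triangles the single query $\gamma(p,p^*)$ can be answered within $O(nk\log k)$ time via Theorem~\ref{complexit}. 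This size-control argument for $\mathcal K\setminus P$, and ensuring the final query does not dominate, is the main obstacle; once it is in hand, summing all contributions gives query time $O(nk\log k)$ atop the $O(n^2\log n)$ preprocessing, i.e. $O(n^2\log n+nk\log k)$ in total, with the claimed $O(n^2+k)$ space.
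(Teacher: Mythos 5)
Your outline (space accounting, SPM preprocessing, point location, per-cone hulls, connecting geodesics, final query in $\mathcal K\setminus P$) follows the paper's, but two of your cost charges do not hold up, and one of them concerns precisely the step the paper calls the most expensive. You charge $O(n)$ per connecting segment $\gamma(a_{i-1},b_i)$ ``by Theorem~\ref{complexit}.'' That theorem gives $O(n)$-time shortest-path queries only between the \emph{fixed} source $x$, whose shortest path map has already been built at a cost of $O(n^2\log n)$, and an arbitrary query point; it says nothing about two arbitrary points $a_{i-1}\in S_{i-1}$ and $b_i\in S_i$, neither of which is $x$. Computed naively, each such two-point geodesic requires building a fresh shortest path map rooted at one endpoint, i.e.\ $O(n^2\log n)$ per pair, and doing this for up to $k-1$ pairs would exceed the claimed bound. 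The paper closes this hole differently: when consecutive subsets lie in consecutive cones, the two cones are unfolded together and the segment is a Euclidean segment in the plane; in general, the geodesic between points of $\mathcal C_i$ and $\mathcal C_j$ is confined (cone sides being geodesics) to the subcomplex $\mathcal K_{ij}$ spanned by the intermediate cones, the two-point computation is run inside $\mathcal K_{ij}$ at cost $O({n'}^2\log n')$ in its vertex count $n'$, and because the subcomplexes attached to distinct consecutive pairs are essentially disjoint, the total over all pairs is $O(n^2\log n)$. Without this localization-plus-disjointness argument, your $O(nk)$ charge for the connecting step is unsupported.

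The second gap is the one you flag yourself: you do not bound the complexity of $\mathcal K\setminus P$, and you aim at the unnecessarily strong target of answering $\gamma(p,p^*)$ within $O(nk\log k)$. The paper does not attempt that: after Proposition~\ref{KminusK0} certifies that $\mathcal K\setminus P$ is again a CAT(0) planar complex (re-triangulating its non-triangular faces), it simply reruns the shortest-path algorithm of the previous section inside $\mathcal K\setminus P$ and charges this final step at $O(n^2\log n)$, a term already present in the budget. So the ``main obstacle'' in your write-up is resolved by relaxing your target to the paper's accounting, not by the size-control argument you sketch; as written, your proof is incomplete at both this step and the connecting-geodesics step.
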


\begin{proof}
We start by describing the amount of the memory space used by our algorithm. Since $x$ is a point on the boundary of $\mathcal K$, SPM($x$) can be seen as a noncyclic sequence of cones and thus we can fix an order of cones in SPM($x$). The data structure stores the coordinates of $k$ points of the set $S$ inside the faces that contain them, and the ordered list of cones of SPM($x$).
We consider the complex $\mathcal K$ to be the complex associated to the smallest ordered sublist of cones of SPM($x$), such as the extreme cones of this sublist contain points of $S$. By the Theorem \ref{compl}, the size of the data structure used in order to construct the shortest path map is of $O(n^2)$ size. \\
Thus, since the implemented data structure contains the coordinates of $k$ points of $S$ inside the faces of $\mathcal K$ and the data substructure of SPM($x$), this requires a memory space of size $O(n^2 + k)$.
At the last step of the algorithm, in order to construct the shortest path between $p$ and its copy $p^*$ inside the subcomplex $\mathcal K \setminus P$, we use the algorithm presented in the previous section. This algorithm uses memory of quadratic size with respect to $n$. \\
In total, the size of the data structure used by the algorithm for constructing the convex hull of $S$ is of order $O(n^2 + k)$.

We show further that our algorithm runs in time $O(n^2\log n + nk \log k).$ Let us analyze each step of the algorithm.
The first step of the algorithm is constructing the shortest path map SPM($x$), where $x$ is an arbitrary point of the boundary of $\mathcal K.$ In order to do this, we apply the algorithm presented in the previous section, which builds SPM($x$) in $O(n^2\log n)$ time using a data structure of size $O(n^2)$. \\
For each point $s$ of $S$, the implemented data structure contains the coordinates of this point inside the face $F$ of $\mathcal K$ that contains $s$ and the list $\mathcal L(F)$ of the intersection points of the boundary of $F$ with cones of SPM($x$). Therefore, using a binary search, we can determine the cone of SPM ($x$) containing $s$ in time $\log n$ (the size of $ \mathcal L(F)$ is $O(n)$). Since $S$ contains $k$ points, this step is executed in $O(k\log n)$ time. \\
At this stage, we want to compute the convex hull of each subset $S_i$ belonging to a cone of SPM($x$).
In order to do this, we perform an isometric embedding in the plane of the faces of cones containing points of $S_i$.
This embedding can be done in time $O(n)$ (Lemma \ref{liniar2}). \\
For each subset $S_i$, the algorithm builds in the plane the convex hull of the images of points of $S_i$ in $O(k \log k)$ time (for example using the incremental algorithm \cite{PrSh}).
Since SPM($x$) contains $O(n)$ cones (Lemma \ref{liniar1}), the number of subsets $S_i$ is $O(n).$ Therefore, the construction of convex hulls of all subsets of $S$ is performed in total time $O(nk \log k).$

\begin{figure}[h]\centering
\includegraphics[width=0.99\textwidth]{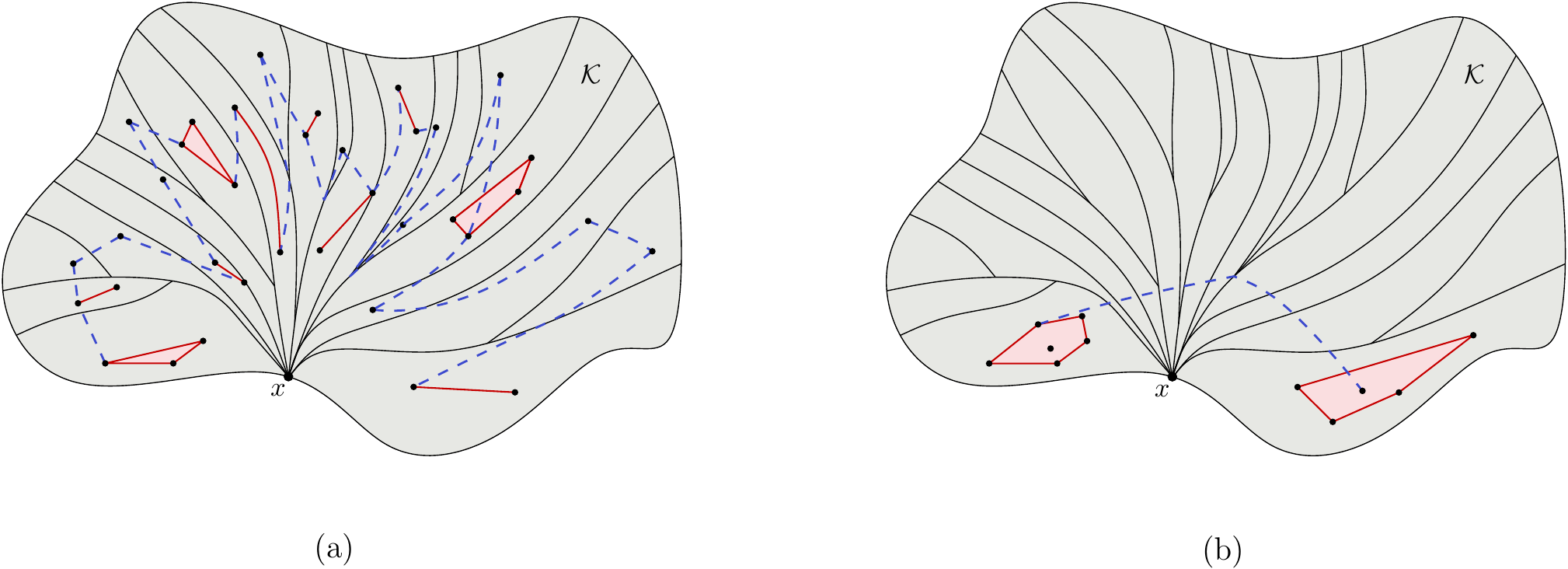}
\caption{The construction of geodesic segments between conv$(S_i)$ and conv$(S_{i+1})$.}\label{descomplexitate}
\end{figure}

The most expensive step of the algorithm is the construction of geodesic segments connecting the convex hulls conv$(S_i)$ and conv$(S_{i+1})$.
In order to construct a geodesic connecting two points in a CAT(0) planar complex, we use the algorithm presented previously. This algorithm computes the shortest path between two points of $\mathcal K$ in $O(n^2\log n)$ time using a data structure of size $O(n^2)$.
We want to calculate the total execution time of this step.
If two subsets $S_i$ and $S_{i+1}$ belong to two consecutive cones of SPM($x$), then to construct the geodesic segment between $s' \in S_i$ and $s'' \in S_{i+1} $ it is sufficient to perform an isometric embedding of these two cones in the plane, and construct the Euclidean segment between the images of $s'$ and $s''$. The embedding of two consecutive cones is possible, since the interior angle with the origin in the apex of a cone is at most $\pi$. \\
If every cone of SPM($x$) is containing at least one point of $S$ (see Fig. \ref{descomplexitate} (a)), and since SPM($x$) contains $O(n)$ cones, the construction of all geodesics between two points of two consecutive subsets is executed in $O(n^2\log n)$ time. \\
Suppose that the set $S$ is divided by SPM($x$) into two subsets $S_1$ and $S_2$, such that they belong to two extreme cones of SPM($x$) (see Fig. \ref {descomplexitate} (b)). In order to construct the geodesic between two points $s' \in S_1$ and $s'' \in S_2, $ we use the same algorithm presented previously which builds $\gamma(s', s'')$ in $O(n^2\log n)$ time.

\noindent We claim that in the case where $S$ is divided into $t<n$ subsets $S_i$, $i = 1, \ldots, t,$ the execution time of this step remains $O(n^2\log n).$
Let $y$ and $z$ be two points belonging respectively to the cone $\mathcal C_i$ and $ \mathcal C_j $ ($ 1 <i <j <n $) considered in their order starting from an extreme cone of SPM($x$). Since the boundary is a geodesic cone, the segment connecting $y$ and $z$ belongs to the subcomplex denoted $\mathcal K_{ij}$ having its extreme cone $\mathcal C_i$ and $\mathcal C_j$. Therefore, the construction of $\gamma(y, z)$ requires $O({n'}^2\log n')$ time with respect to the number of vertices $n'$ of the subcomplex $\mathcal K_{ij}$. \\
Computing all the geodesic segments connecting each pair of subsets $S_i$ and $S_{i+1}$ for all $i = 1, \ldots, t-1$ requires a total time of $O(n^2\log n)$.

\noindent Finally, using the same algorithm of the previous section, we compute the geodesic $\gamma(p, p^*)$ inside the complex $\mathcal K \setminus P$ in $O(n^2\log n)$ time.

In conclusion, we have shown that using a data structure of size $O(n^2 + k)$ for a set $S$ of $k$ points inside $\mathcal K $ with $ n $ vertices, our algorithm computes the convex hull conv$(S)$ in total time $O(n^2\log n + nk \log k)$. \hfill $\Box$
\end{proof}

\section{Conclusions}
%ceva pe scurt, ca un rezumat si apoi open questions
In this paper we presented two algorithmic problems for the case of a CAT(0) planar complex $\mathcal K$.
The first problem is computing the shortest path between a given point $x$ and any query point $y$ in $\mathcal K$. To solve this problem we proposed in a method based on continuous Dijkstra algorithm \cite{HerSu, Mi1, ReSto}. \\
The preprocessing step of our algorithm is to construct SPM($x$) in $O(n^2\log n)$ time by sweeping the faces of the complex. For any query point $y$ the algorithm determines the cone of SPM($x$) containing $y$, then unfolds it in the plane and computes the shortest path $\gamma(x, y)$ between $ x $ and $ y $ in $ \mathcal K $. The shortest path $\gamma(x,y)$ is calculated in $O(n)$ time as the isometric image of the shortest path between the images of $ x $ and $ y $ in the plane. \\
The second problem presented in this paper for a complex CAT (0) planar is the construction of the convex hull of a finite set of points. \\
We proposed an algorithm that constructs the convex hull of a set $S$ of $k$ points in a CAT(0) planar complex with $n$ vertices in $O(n^2\log n + nk \log k)$ time using a data structure of size $O (n^2 + k). $ Our algorithm is similar to the algorithm of Toussaint \cite{To} for constructing the convex hull of a finite set of points in a triangulated simple polygon in $O(n \log n)$ time, using a data structure of size $ O(n) $.

An open question concerns the improvement of the running time of the shortest path algorithm.\\
(1) We do not know how to design a subquadratic algorithm which will compute the shortest path map in a CAT(0) planar complex.
Such an algorithm will improve considerably the running time of the two algorithms for the shortest path problem and for the convex hull problem which are presented in this article.

It would be interesting to study the structural properties of convex sets in the CAT(0) 2-dimensional complex and especially in non-planar CAT(0) complexes. Intuitively, the structure of a convex set in these complexes is rather complicated.
Open algorithmic question is to study the problem of construction of the convex hull of a finite set of points in general CAT(0) complexes.
\medskip\noindent
(2) It will be interesting to generalize our algorithmic results to general CAT(0) complexes, in particular 2-dimensional non-planar CAT(0) complexes.

\section*{Acknowledgement}

\noindent
I wish to thank Victor Chepoi for his useful advice and guidance,
and the anonymous referees for a careful reading of the first version of the manuscript and useful suggestions.
This work was supported in part by the ANR grants  OPTICOMB (ANR BLAN06-1-138894) and GGAA.

%\bibliography{abbr_long,pubext}

\begin{appendices}

%\appendix
\section{Proof of Proposition \ref{pr-spm}}\label{app}

\textbf{(i)} We associate to SPM($x$) an oriented graph $T(x) = (V(T), E(T))$ such that $V(T)$ contains the vertices of $\mathcal K$ together with the intersection points of the geodesics of SPM($x$) with $\partial\mathcal K$ (see Fig. \ref{des_spm-tree}). The set of edges $E(T)$ of $T(x)$ is such that two vertices $u, v$ of $T(x)$ are connected by an oriented edge $\overrightarrow{e} = \overrightarrow{uv}$ if one of the geodesics of SPM($x$) passes via $u$ and $v$, $d(x, u)<d(x, v)$ and the geodesic segment $\gamma(u, v)$ contains no other vertices of $\mathcal K.$

We now show that $T(x)$ is a tree. In order to prove this, we must show that for every vertex $v$ of $T(x)$ there exists a unique edge $e$ incident to $v$ and oriented towards $v$.
\begin{figure}[h]\centering
\includegraphics[width=0.9\textwidth]{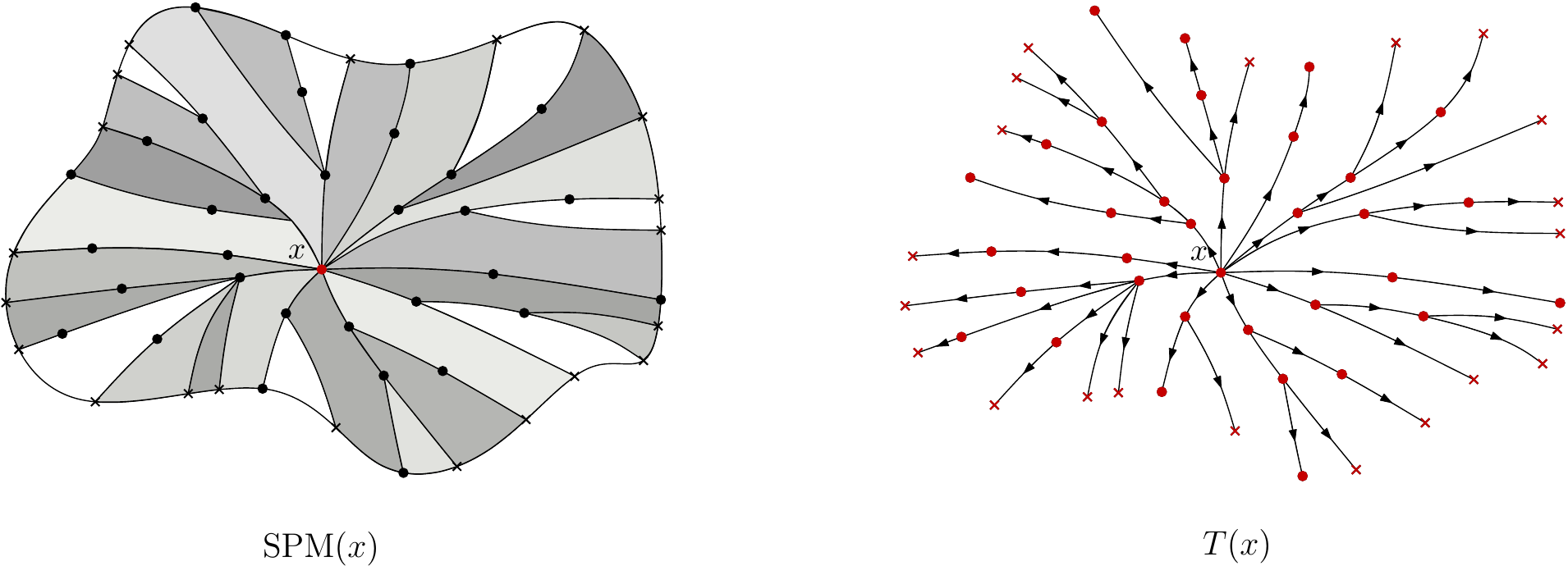}
\caption{The oriented graph $T(x)$ associated to SPM($x$).}\label{des_spm-tree}
\end{figure}

Suppose by contradiction that for a vertex $u$ of $T(x)$ there are at least two edges $\overrightarrow{e_1}$ and $\overrightarrow{e_2}$ incident to $u$ and oriented towards $u$.
Every edge of $T(x)$ represents a geodesic segment belonging to a geodesic of SPM($x$) passing via $x$. Thus the edges $\overrightarrow{e_1}$ and $\overrightarrow{e_2}$ are two geodesics $\gamma_1$ and $\gamma_2$ of SPM($x$) containing the origin $x$. On the other hand, because $\overrightarrow{e_1}$ and $\overrightarrow{e_2}$ are incident to $u$, the geodesics $\gamma_1$ and $\gamma_2$ passe via $u$ which is a contradiction with the uniqueness of the geodesic between any two vertices $x$ and $u$ in $\mathcal K.$

\textbf{(ii)} Suppose for the sake of contradiction that for a cone $\mathcal C (z; p, q) $ of SPM($x$), the inner angle $\angle_z(p, q)$ with the origin in the apex of the cone is greater than or equal to $\pi$.\\
First we want to show that the apex $z$ of $\mathcal C(z; p, q)$ is necessarily a vertex of $\mathcal K$. Suppose that $z$ is not a vertex of the complex.
By property (2) of SPM, $z$ belongs to at least two geodesic segments $\gamma(x, p)$ and $\gamma(x, q)$ of SPM($x$), where $p,q\in\partial\mathcal K$ (see Fig. \ref{des_unghi-con}).
The points $p$ and $q$ are distinct, thus we consider without loss of generality that $z$ is the furthest point from $x$ such that $z\in \gamma(x, p)\cap\gamma (x, q).$ \\
Thus $z$ can be either an inner point of a face of $\mathcal K$ or $z$ belongs to an edge of $\mathcal K $. In both cases $z$ is a point of zero curvature.
Suppose that $z$ is a point of a face $\Delta(a, b, c)$ of $\mathcal K$. The face $\Delta(a, b, c)$ is an Euclidean triangle and since $z$ is the last common point $\gamma(x, p)$ and $\gamma(x, q)$, we obtain that the angles $\angle_z(x, p)$ and $\angle_z(x, q)$ are each less than $\pi$, and so $\gamma (x, p)$ and $\gamma(x, q )$ are not locally shortest paths. Therefore, $\gamma(x, p)$ and $\gamma(x, q)$ are not geodesics (contrary to the hypothesis).
Thus the apex of a cone of SPM($x$) is necessarily a vertex of $\mathcal K.$

Since $z$ is a vertex of $\mathcal K$, there exists a face $F = \Delta(z, a, b)$, where $a, b$ are two vertices of $\mathcal K$ such that $(F\setminus\{z\})\cap \mathcal C(z; p, q)\neq \emptyset$. By the definition of SPM($x$) (1), we can deduce that the vertices $a$ and $b$ do not belong to the interior of $\mathcal C(z;p,q)$, otherwise $\mathcal C(z;p,q)$ is replaced by smaller cones. Thus, the angle $\angle_z(a,b)$ can only be greater than or equal to the angle $\angle_z(p,q)$.
Since $F$ is isometric to a triangle of the plane $\Delta'= \Delta(z', a', b')$ and by the Alexandrov's property of angles in a CAT(0) metric space (mentioned in section \ref{prel1}), we obtain that $\angle_z(a, b)\leq \angle_{z'}(a', b')\leq \pi$. Which contradicts our assumption. \\

\textbf{(iii)} Suppose to the contrary that there exist two points $r, s$  of the cone $\mathcal C(z; p, q)$ of SPM($x$) such that the geodesic $\gamma(r,s)$ contains at least one point $y$ outside the cone. Therefore, the geodesic $\gamma(r, s)$ intersects the boundary of the cone at least twice. Let $a$ and $b$ be two consecutive intersection points of the geodesic $\gamma(r, s)$ with boundary $\mathcal C(z; p, q)$ such that $y$ belongs to the geodesic segment $\gamma(a,b).$
By our assumption, $\gamma(r, s)$ contains the points $a, y$ and $b$ in this order, where $a, b \in \partial \mathcal C(z; p, q)$ and $y \not \in \mathcal C(z; p, q).$ \\
We consider the following two cases: (a) where $a$ and $b$ belong to two different sides of the cone (e.g. $a \in \gamma (z,p)$ and $b \in \gamma(z, q)$) and (b) the points $a$ and $b$ belong to one side of the cone $ \mathcal C(z; p, q)$ (e.g. $ a, b \in \gamma(z, p)$).
\medskip
\begin{figure}[h]\centering
\begin{tabular}{cc}
   \includegraphics[width=0.45\textwidth]{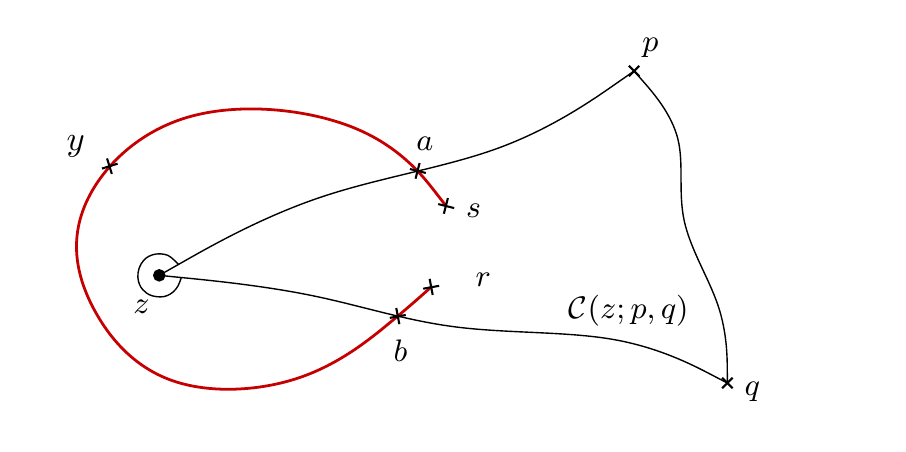}
   &
   \includegraphics[width=0.45\textwidth]{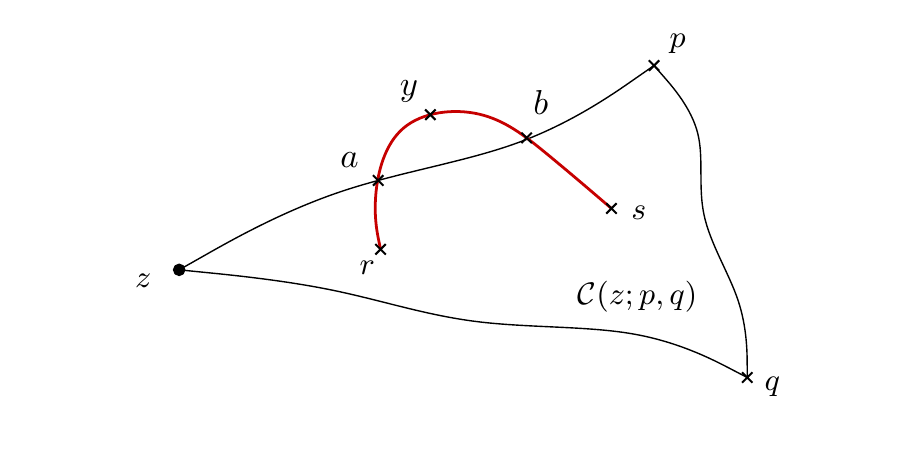}
   \\
   (a)
   &
   (b)
\end{tabular}
\caption{Illustration of the case: $\gamma(r,s)\not\subset\mathcal C(z;p,q)$.} \label{des_con-convex}
\end{figure}
\medskip

First, let us analyze the first case (see Fig. \ref{des_con-convex} (a)). Let $a \in \gamma(z, p)$ and $b \in \gamma(z, q),$ then we obtain a geodesic triangle $\Delta(a, z, b)$ where $y$ is a point of the edge of the triangle defined by $a$ and $b$.
By the property (ii), the angle $\angle_z(p, q)$ in the cone $\mathcal C(z; p, q)$ is less than $\pi.$ Since for any vertex $z$ of $\mathcal K$, $\theta(z) \geq 2\pi $, then the complementary angle $\angle_z(p, q)$ which does not belong to $\mathcal C (z; p, q),$ is greater than or equal to $\pi$.
By the Alexandrov property of angles the sum of the angles of a geodesic triangle in a CAT(0) metric space is at most $\pi$ \cite{BrHa}. On the other hand, the sum of the angles of $\Delta(a, z, b)$ containing the angle $\angle_z(p, q)$ external to the cone $\mathcal C(z; p, q)$ is strictly greater than $\pi$. This contradiction shows that this case is impossible.

Let us analyze the second case. Let $a, b \in \gamma (z, p) \cap \gamma(r, s)$ and the point $y$ such that $y \in \gamma(r, s)$ and $y \not \in \mathcal C(z; p, q).$ Therefore, the points $a$ and $b$ are connected by two distinct geodesics (see Fig. \ref{des_con-convex} (b)): the geodesic $ \gamma(a, b) \subset \gamma(z, p)$ and the geodesic $\gamma (a, b) \subset \gamma(r, s)$ passing via the point $y.$ We obtain a contradiction of the uniqueness of a geodesic connecting two points in a CAT(0) metric space.

\textbf{(iv)} Suppose the contrary and let $u$ be a vertex of $\mathcal K$ with $u$ an inner point of the cone $\mathcal C(z; p, q)$ of SPM($x$).
According to the definition of SPM($x$), there exists a geodesic $\gamma(x, r)$ of $\mathcal K$ belonging to SPM($x$), which passes via $u$, where $r \in \partial \mathcal K$. Since any geodesic of SPM($x$) is a side of at least one cone of SPM($x$), this contradicts the existence of the cone $\mathcal C(z; p, q)$ in SPM($x$). \\

\begin{figure}[h]\centering
\includegraphics[width=0.45\textwidth]{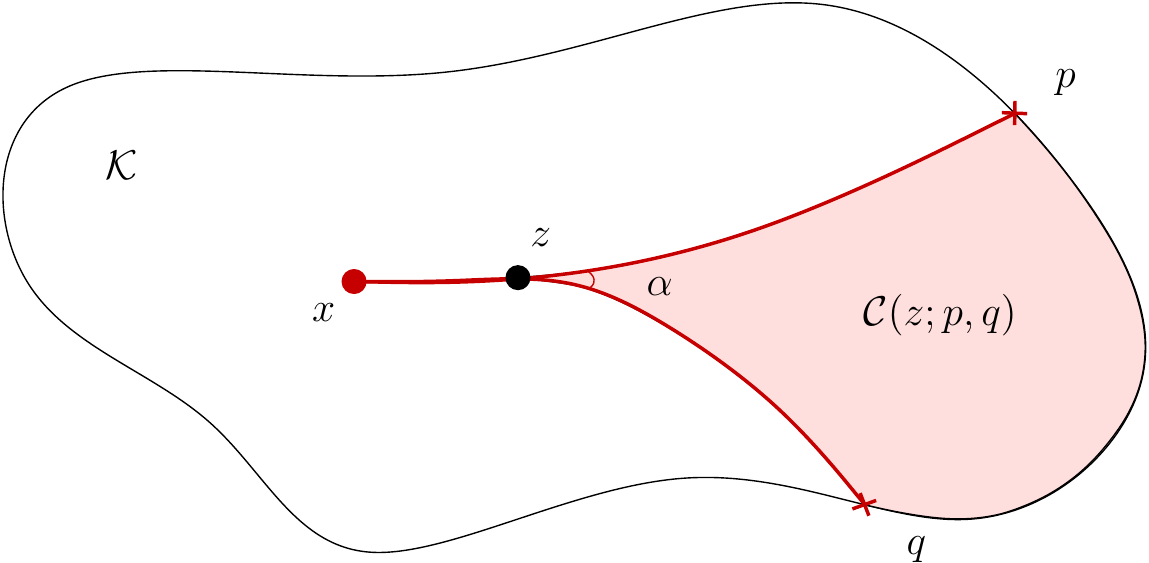}
\caption{The angle of a cone of SPM($x$).}\label{des_unghi-con}
\end{figure}

%de modificat
\textbf{(v)} According to the definition of SPM($x$), the complex $\mathcal K$ is partitioned into convex cones. Thus, for any point $y$ of $\mathcal K,$ $y$ belongs to a cone $\mathcal C(z; p, q)$ of the SPM($x$) where one of the scenarios are possible: $y\in $int$(\mathcal C (z; p, q)),$ in which case $y$ belongs to a single cone, or $y \in\partial C(z; p, q),$ in which case $y$ belongs to at least two cones (to exactly two cones if $y$ is of nonnegative curvature and to three or more cones if $y$ is of negative curvature).\\

\textbf{(vi)} By the definition of SPM($x$), the sides of the cone $\mathcal C(z; p, q)$ are the geodesics $\gamma(z, p)$ and $\gamma(z, q)$.
Let $u\in \gamma(z, p).$ From the definition of a geodesic, in a neighborhood $B(u, \epsilon) = \{y \in \mathcal K: d(u, y) \leq \epsilon \}$ of $u$ $\gamma(z, p)$ is a geodesic. Let $a, b$ be two points of $\mathcal K$ for which $C(u, \epsilon) \cap \gamma(z, p) = \{a, b \},$ where $C(u,\epsilon) = \{y \in \mathcal K: d(u, y) = \epsilon \}$. Consequently, the inner angle $\angle_u(a, b)$ in $\mathcal C(z; p, q)$ is greater or equal to $\pi.$ Since $a, b \in \gamma(z, p)$ then $\angle_u(z, p) \geq \pi$. %\hfill $\Box$
\end{appendices}
\end{document}